\newtheorem{theorem}{Theorem}
\newtheorem{cor}[theorem]{Corollary}
\newtheorem{lemma}[theorem]{Lemma}
\newtheorem{prop}[theorem]{Proposition}
\newtheorem{remark}{Remark}
\newtheorem{definition}{Definition}
\newtheorem{assump}{Assumption}
\newcommand{\bra}[1]{\langle#1|}
\newcommand{\ket}[1]{|#1\rangle}
\DeclareMathOperator{\Tr}{Tr}
\newcommand{\norm}[1]{\left\Vert#1\right\Vert}
\newcommand{\abs}[1]{\left\vert#1\right\vert}
\newcommand{\sD}{\mathsf{D}}
\newcommand{\cF}{\mathcal{F}}
\newcommand{\cH}{\mathcal{H}}
\newcommand{\cK}{\mathcal{K}}
\newcommand{\cL}{\mathcal{L}}
\newcommand{\cM}{\mathcal{M}}
\newcommand{\cN}{\mathcal{N}}
\newcommand{\cP}{\mathcal{P}}
\newcommand{\cS}{\mathcal{S}}
\newcommand{\cU}{\mathcal{U}}
\newcommand{\cV}{\mathcal{V}}
\newcommand{\cW}{\mathcal{W}}
\newcommand{\cX}{\mathcal{X}}
\newcommand{\cZ}{\mathcal{Z}}
\newcommand{\EE}{\mathbb{E}}
\newcommand{\HH}{\mathbb{H}}
\newcommand{\NN}{\mathbb{N}}
\newcommand{\PP}{\mathbb{P}}
\newcommand{\RR}{\mathbb{R}}
\newcommand{\ind}{\mathbbm 1}
\begin{document}

\title{Performance Guarantees  for Quantum Neural Estimation of Entropies}

\author{Sreejith Sreekumar}
\affiliation{Laboratoire Des Signaux Et Systèmes (L2S), CNRS, CentraleSupélec,  University of Paris-Saclay}
\orcid{0000-0003-3615-8961}
\email{sreejith.sreekumar@centralesupelec.fr}
\author{Ziv Goldfeld}
\affiliation{School of Electrical and Computer Engineering, Cornell University}
\email{goldfeld@cornell.edu}
\orcid{0000-0003-3406-3950}
\author{Mark M.~Wilde}
\affiliation{School of Electrical and Computer Engineering, Cornell University}
\email{wilde@cornell.edu}
\orcid{0000-0002-3916-4462}

\maketitle

\begin{abstract}
Estimating quantum entropies and divergences is an important problem in quantum physics, information theory, and machine learning. Quantum neural estimators (QNEs), which utilize a hybrid classical-quantum architecture, have recently emerged as an appealing computational framework for estimating these measures. Such estimators combine classical neural networks with parametrized quantum circuits, and their deployment typically entails tedious tuning of hyperparameters controlling the sample size, network architecture, and circuit topology. This work initiates the study of formal guarantees for QNEs of measured (R\'enyi) relative entropies in the form of non-asymptotic error risk bounds. We further establish exponential tail bounds showing that the error is sub-Gaussian and thus sharply concentrates about the ground truth value.  
For an appropriate sub-class of density operator pairs on a space of dimension~$d$ with bounded Thompson metric, our theory establishes a copy complexity of $O(|\Theta(\cU)|d/\epsilon^2)$ for QNE with a quantum circuit parameter set $\Theta(\cU)$, which has minimax optimal dependence on the accuracy~$\epsilon$.  Additionally, if the density operator pairs are permutation invariant, we improve the dimension dependence above to $O(|\Theta(\cU)|\mathrm{polylog}(d)/\epsilon^2)$. Our theory aims to facilitate principled implementation of QNEs for measured relative entropies and guide hyperparameter tuning in practice. 
\end{abstract}

\tableofcontents

\section{Introduction}

Entropy, a concept rooted in thermodynamics and statistical mechanics, plays a fundamental role in quantifying the amount of uncertainty or disorder in physical systems. Von Neumann and Shannon introduced these concepts to  quantum mechanics and classical  information theory, now referred to as the von Neumann entropy~\cite{NeumannThermodynamikGesamtheiten} and Shannon entropy~\cite{Shannon1948ACommunication}, respectively. R\'{e}nyi generalized this concept axiomatically, showing that any uncertainty measure satisfying a set of natural axioms can be characterized in terms of a single parameter family of entropy measures now known as R\'{e}nyi entropies~\cite{Renyi1961OnInformation}. Since their seminal works, related measures of information and discrepancy, such as mutual information, relative entropy~\cite{Kullback1951OnSufficiency,U62}, and their R\'{e}nyi counterparts \cite{Petz1985Quasi-entropiesAlgebra,Petz1986Quasi-entropiesSystems,muller2013quantum,wilde2014strong}, have found operational interpretations in classical and quantum hypothesis testing \cite{Petz1985Quasi-entropiesAlgebra,ogawa-nagaoka-2000,ogawa-hayashi-2004,Nussbaum-Szkola-2006,Audenaert-2008,BBH-2021}, channel coding, data compression, and other related tasks. 

Given the significance of entropic measures in characterizing the fundamental limits of quantum information systems, computing them is an important problem. However, often the underlying state is unknown, which necessitates estimating these quantities based on measurements of the quantum system. To this end, several estimators have been proposed in both the classical and quantum settings \cite{Li2017QuantumEstimation,Subasi2019EntanglementCircuit,AISW-2020, Subramanian2019QuantumStates,Gilyen2019DistributionalWorld,Gur2021SublinearEntropy, Wang2022QuantumEntropies,Wang2022NewDistances,Gilyen2022ImprovedEstimation,Wang2022QuantumSystems,
 Wang2023QuantumEstimation,Wang-Zhang-Li-2024,Wang-Zhang-2025,lu2025estimatingquantumrelativeentropies}. Recently,  quantum neural estimation (QNE), 
employing a hybrid classical-quantum architecture composed of neural nets and quantum circuits, has been proposed for estimating  quantum entropies and measured relative entropies \cite{QNE-24} (see also \cite{Shin2024,lee2023estimating}). QNE utilizes the variational form of the entropic measure given as the supremum over Hermitian operators of a trace functional with respect to the underlying states \cite{Petz1988,petz2007quantum,Berta2015OnEntropies}. By parametrizing the eigenvalues and eigenvectors of the Hermitian operator by the neural net and quantum circuit, respectively,  the QNE leverages their expressivity to construct scalable  estimators. 
In~\cite{QNE-24}, it was shown via numerical simulations that the QNE is consistent.
\subsection{Summary of Contributions}
In this paper, we provide the first theoretical guarantees for QNE in the form of sharp bounds on the expected absolute error. The bounds depend on the number of quantum state copies the algorithm uses, parameters defining the class of underlying density operators, and the classical-quantum circuit parameters. Furthermore, we also derive concentration inequalities for the absolute error about the ground truth, which leads to an upper bound on the copy complexity of QNE, i.e., the minimum number of copies of the quantum states required for estimating measured relative entropies.

\begin{table}
\centering
\begin{tabular}{ |c|c|c |}
 \hline
 R\'enyi order and state class&  Minimax risk upper bound     &  Minimax risk lower bound   \\
 \hline
 $\alpha=1$ and $\cS \subseteq \cS_d(b)$   & $O_b\!\left(\delta+\sqrt{\frac{d\,\Theta(\delta,\cS)}{n}}\right)$    & $\Omega_b\!\left(\frac{1}{\sqrt{n}}+\frac{d}{n \log d}\right)$ \\[4pt]
 \hline
$0 <\alpha  \neq 1 $ and  $\cS \subseteq \cS_d(b)$     & $O_{\alpha,b}\!\left(\delta +   \sqrt{\frac{d\,\Theta(\delta,\cS)}{n}}\right)$    & $\Omega_{\alpha,b}\!\left(\frac{1}{\sqrt{n}}+\frac{d}{n \log d}\right)$  \\[4pt]
 \hline
\end{tabular}
\caption{Minimax risk bounds for estimation of measured relative entropies over the class  $\cS \subseteq \cS_d(b)$.  Here, $b$ is a uniform upper bound on the Thompson metric  between density operator pairs in $\cS_d(b)$, $\delta$ is an approximation error parameter for PQC parametrized by $\Theta(\delta,\cS)$, and $n$ is the number of copies of quantum states on a Hilbert space of dimension $d$  used by the QNE. The upper bound holds for any $\cS \subseteq \cS_d(b)$ while the lower bound holds for any $\cS \subseteq \cS_d(b)$ containing all state pairs in any fixed eigen-basis, when  $d b \lesssim n \log d\,$ and $b \geq (\log d)^2 \vee 2$.}
\label{table:summ-main-cont}
\end{table}
To summarize our main contributions, let $\cS_d(b)$ be the set of density operator pairs $(\rho,\sigma)$ for a Hilbert space of dimension $d$, such that their Thompson metric \cite{thompson_1963} is bounded from above by $\log b$.  Assuming that the neural net and quantum circuit are sufficiently expressive and have bounded statistical complexity (see Assumption~\ref{Assump1} and Section~\ref{Sec:PQC}), we show the following: 
\begin{enumerate}[(i)]
    \item We derive an upper bound on the expected  absolute error achieved by QNE for estimating the measured relative entropy (Theorem~\ref{Thm:minimaxrisk-QNE}) and the measured R\'{e}nyi relative  entropy (Theorem~\ref{Thm:minimaxrisk-QNE-Renyi}) over the class $\cS_d(b)$. Further, we establish an exponential deviation inequality that shows that the absolute error has sub-Gaussian tails.
    
    \item To obtain concrete bounds, we specialize the aforementioned results to shallow neural nets with sufficiently large width (Corollary~\ref{Cor-shallownn} and Corollary \ref{Cor-shallownn-Renyi}) and deep neural nets with a sufficient number of layers (Proposition~\ref{Prop:deepnn} and Proposition \ref{Prop:deepnn-Renyi}). An upper bound on the absolute error achieved by a QNE with a shallow net over the class $\cS_d(b)$ is illustrated in Table \ref{table:summ-main-cont}, along with a lower bound on the minimax risk (valid for any estimator) over the same class.  
    
    \item Let $d=m^N$, with $N$ denoting the number of qudits, and let $\rho,\sigma$ be density operators defined on $N$ qudits. While the QNE has a worst-case copy complexity that scales exponentially in $N$, we show that for a subclass (which depends on the quantum circuit) of permutation invariant states across $N$ qudits within $\cS_d(b)$,  the algorithm has a much better polynomial dependence (see Proposition~\ref{Prop:copcomp-perminv}). This is possible due to Schur--Weyl duality from representation theory, which implies that the number of degrees of freedom for such states scales polynomially in $N$.
\end{enumerate}

\subsection{Literature Review and Related Work}
Measured relative entropy was introduced in  \cite{Donald1986,P09} and since then has found operational significance in terms of characterizing the Stein's and strong converse exponents in centralized, adversarial, and distributed hypothesis testing (see e.g.,  \cite{Hiai-Petz-1991,Hayashi_2002, Mosonyi2015QuantumHT,Brando2020AdversarialHT, RSB-IT-2025,SHCB-2025}). 
There is an extensive literature on estimation of entropic, information and divergence measures in the classical setting. Given this, we focus primarily on the finite-dimensional quantum setting. 
The reader is referred to, e.g.,  \cite{Verdu-2019,SK-2025} and references therein, for a detailed account of relevant works in the classical setting. In the quantum setting,  a pertinent approach for estimating such measures is to use quantum tomography to learn the entire density matrix, from which the desired entropic quantity of interest can be computed.  The copy complexity of this approach is linear in the dimension of the state (or, equivalently, exponential in the number of qudits) \cite{HHJWY16,wright2016learn,OW16,Yuen2023improvedsample}.

A possibly more efficient approach is to estimate the entropies directly, without estimating the underlying states.  In this regard, several estimators have been proposed and their computational complexities  investigated under different input models \cite{Li2017QuantumEstimation,Subasi2019EntanglementCircuit,AISW-2020, Subramanian2019QuantumStates,Gilyen2019DistributionalWorld,Gur2021SublinearEntropy, Wang2022QuantumEntropies,Wang2022NewDistances,Gilyen2022ImprovedEstimation,Wang2022QuantumSystems,
Wang2023QuantumEstimation,Wang-Zhang-Li-2024,Wang-Zhang-2025}. In particular, \cite{AISW-2020} provided  algorithms for estimating the von Neumann entropy and R\'enyi entropies, given access to independent copies of the input quantum state,  which achieves the copy complexity (sub-linear in dimension) for integer orders greater than one.   Entropy estimation under the quantum query model, i.e., where one has access to an oracle that prepares the input quantum state, was explored in~\cite{Subramanian2019QuantumStates,Gur2021SublinearEntropy}. These works also established  sub-linear (in dimension) query complexity bounds for estimating the von Neumann  and R\'{e}nyi entropies.

In contrast to the  estimators mentioned above, QNE proposed in \cite{QNE-24} is based on  variational methods, which have become  a prominent research area in quantum computing and machine learning \cite{Biamonte2017QuantumLearning,Cerezo2020variationalquantum,Cerezo2021VariationalAlgorithms,Jaeger2023}. Specifically, QNEs utilize a hybrid quantum-classical algorithm that extends the approach from \cite{NWJ10,pmlr-v70-arora17a,Zhang-2018,belghazi-2018,SS-2021-aistats,SG-2021-neural,Birell-2021} to the quantum framework. The appeal of such estimators stems from their scalability to high-dimensional problems and large datasets, as well as their compatibility with modern gradient-based optimization techniques.  This makes them particularly relevant in the quantum context, where the dimension of the underlying Hilbert space scales exponentially in the number of qudits. Although QNE was shown to perform empirically well in \cite{QNE-24}, theoretical performance guarantees are still lacking,  which this work aims to advance upon. We note that variational methods have also been used in \cite{VQA-entropies-fidelity-2021} for estimating fidelity, entropies and Fisher information, in \cite{Shin2024} for estimating quantum mutual information and in \cite{lee2023estimating} for estimating entanglement entropy, which indicates their versatility.

In the framework of \cite{QNE-24} for estimating measured relative entropies, the optimization is performed over Hermitian operators of the form $H(\theta,\phi) = U(\bm{\theta})\,\mathrm{diag}(\Lambda_\phi)\,U(\bm{\theta})^\dagger$, where a parameterized quantum circuit (PQC) $U(\bm{\theta})$ specifies the eigenvectors and a neural network defines the eigenvalues $\Lambda_\phi$. Although this hybrid parameterization effectively captures the optimization of the variational formula $\mathrm{Tr}[H\rho] + 1 - \mathrm{Tr}[e^{H}\sigma]$ for measured relative entropy (and a related formula for measured R\'enyi relative entropy), it remains susceptible to barren plateaus \cite{McClean_2018,Cerezo2021CostDependent,Holmes2022Trainability}. In particular, if the PQC is sufficiently deep and unstructured, $U(\bm{\theta})$ approximates a unitary two-design, causing $\mathrm{Tr}[H\rho]$ and $\mathrm{Tr}[e^{H}\sigma]$ to concentrate around their means and the gradient variances to vanish exponentially in the number of qubits. This leads to vanishing gradients in the quantum parameters $\bm{\theta}$ and, indirectly, to weakened gradient signals for the classical parameters $\phi$. Barren plateaus can be mitigated by employing shallow or symmetry-informed ansätze \cite{Meyer2023ExploitingSymmetry,Nguyen2024TheoryEquivariantQNN}, initializing $U(\bm{\theta})$ near the identity \cite{Grant2019Initialization}, or adopting layer-wise or hybrid training strategies that preserve gradient signal throughout optimization \cite{Skolik2021Layerwise}. Importantly, despite its variational structure, the QNE framework is not susceptible to the classical-simulability arguments discussed in~\cite{Cerezo2025ProvableAbsence}, since it assumes direct access to quantum samples of the states $\rho$ and $\sigma$ rather than classical descriptions or measurement data. The learning task thereby exploits intrinsically quantum input data, placing it outside the regime of efficiently classically simulable models identified in~\cite{Cerezo2025ProvableAbsence}.
\subsection*{Paper Organization}
The rest of the paper is organized as follows. Section~\ref{Sec:Preliminaries} introduces the notation and preliminary concepts along with QNE. Section~\ref{Sec:minimaxriskbnds} obtains performance guarantees in terms of expected absolute error bounds and  deviation inequalities for estimating measured relative entropies using QNE. Section~\ref{Sec:copycomp} discusses the copy complexity achieved by QNE. Section~\ref{Sec:proofs} presents proofs of the main results, with proofs of the auxiliary results deferred to the Appendix. Finally, Section~\ref{Sec:concl-remarks}  concludes the paper with suggestions of avenues for future research.  
\section{Preliminaries} \label{Sec:Preliminaries}
\subsection*{Notation}
Throughout, we consider a $d$-dimensional complex Hilbert space $\HH_d$. The set of (linear) operators from $\HH_d$  to itself is denoted by  $\cL(\HH_d)$. The set of density operators (or quantum states) on $\HH_d$, i.e., positive semi-definite operators with unit trace is denoted by $\mathbb{S}_d $. 
$\operatorname{Tr}[\cdot]$ and $\norm{\cdot}_p$ for $p \geq 1$ signify the trace operation   and Schatten $p$-norm, respectively. For normal operators $A$ (which includes Hermitian and unitary operators), $\norm{A}_{\infty}=\norm{A}$, where $\norm{A}$ represents  the operator norm of $A$. The usual norm associated with the Lebesgue space of $p$-integrable functions ($p  \geq 1$) on $\cX \subseteq \RR^d$ with respect to a measure $\mu$ is denoted by $\| \cdot \|_{p,\mu}$, while $\|\cdot\|_{\infty,\cX}$ designates the standard sup-norm, i.e., $\|f\|_{\infty,\cX}=\sup_{x \in \cX} \abs{f(x)}$.  
$\ind_{\cX}$ denotes the indicator of a set $\cX$.
For reals $a,b$, we define $a \wedge b \coloneqq \min\{a,b\}$ and $a \vee b \coloneqq \max\{a,b\}$. Further, $a\lesssim_x b$ indicates that $a \leq c_x b$ for a  constant $c_x>0$ that depends only on $x$, and  $a\gtrsim_x b$ means that $b \lesssim_x a$. 
\subsection{Measured  Relative Entropy and Measured R\'enyi Relative  Entropy}\label{Sec:measured-rel-entropies}
The \emph{measured relative entropy} between $\rho,\sigma \in \mathbb{S}_d$ is defined as~\cite{Donald1986,P09}
\begin{equation}
\sD_{\mathsf{M}}(\rho\Vert\sigma)\coloneqq\sup_{\cZ,\left\{  M_z\right\}  _{z\in\cZ}}
\sum_{z\in\cZ}\operatorname{Tr}[M_z\rho]\log \!\left(  \frac{\operatorname{Tr}
[M_z\rho]}{\operatorname{Tr}[M_z\sigma]}\right),
\label{eq:measured-rel-def}
\end{equation}
where the supremum is over all finite sets $\cZ$ and positive operator-valued measures\footnote{A POVM $\left\{  M_z\right\}_{z\in\cZ}$  on $\HH_d$ indexed by a discrete $\cZ$ is a set of positive semi-definite operators satisfying $\sum_{z \in \cZ} M_z = I$.} (POVMs) $\left\{  M_z\right\}_{z\in\cZ}$ indexed by~$\cZ$.
The measured relative entropy can be expressed in terms of the following variational expression~\cite[Lemma~1]{Berta2015OnEntropies}: 
\begin{equation}
\sD_{\mathsf{M}}(\rho\Vert\sigma)  =\sup_{H}\operatorname{Tr}[H\rho
]-\operatorname{Tr}[e^H\sigma]+1,\label{eq:rel_ent_variational} 
\end{equation}
where the supremum is over the class of Hermitian operators $H$ on $\HH_d$.

The \emph{measured R\'{e}nyi relative  entropy} of order $\alpha \in (0, 1) \cup (1,\infty)$ between $\rho,\sigma \in \mathbb{S}_d$ is defined as follows~\cite[Eqs.~(3.116)--(3.117)]{F96}:
\begin{align}
\sD_{\mathsf{M},\alpha}(\rho\Vert\sigma) \coloneqq 
\sup_{\cZ,\left\{  M_z\right\}  _{z\in\cZ}
} \frac{1}{\alpha-1}\log  \sum_{z\in\cZ} \operatorname{Tr}[M_z\rho]^{\alpha}\operatorname{Tr}[M_z\sigma]^{1-\alpha} ,
\label{eq:renyi_rel_ent}
\end{align}
where the supremum is over all finite sets $\cZ$ and POVMs  $\left\{  M_z\right\}_{z\in\cZ}$.
 Recall that it satisfies the  variational expression~\cite[Lemma~3]{Berta2015OnEntropies}:
 \begin{align}
\sD_{\mathsf{M},\alpha}(\rho\Vert\sigma)  =\sup_{H}\frac{\alpha}{\alpha-1}\log  \operatorname{Tr}[e^{\left(  \alpha-1\right)  H}\rho
]-\log  \operatorname{Tr}[e^{\alpha H}\sigma].
\label{eq:renyi_rel_ent_variational}
 \end{align}
When $\rho,\sigma>0$, the suprema in~\eqref{eq:rel_ent_variational} and~\eqref{eq:renyi_rel_ent_variational} are achieved for some $H_{\alpha}^{\star}(\rho,\sigma)$~\cite[Lemma~1 and Lemma~3]{Berta2015OnEntropies} ($H_{1}^{\star}(\rho,\sigma)$ corresponds to the optimizer of~\eqref{eq:rel_ent_variational}). Specifically, the  proof of~\cite[Lemma~1 and Theorem~2]{Berta2015OnEntropies} shows that
\begin{align}
&H_{\alpha}^{\star}(\rho,\sigma)= \sum_{i=1}^d\lambda_{i,\alpha}^{\star}  P_{i,\alpha}^{\star}, \quad \mbox{ with } \lambda_{i,\alpha}^{\star} \coloneqq \log  \!\left(\frac{\operatorname{Tr}[P_{i,\alpha}^{\star} \rho]}{\operatorname{Tr}[P_{i,\alpha}^{\star} \sigma]}\right),\label{eq:opteigvalues}     
\end{align}
 where $\{P_{i,\alpha}^{\star}\}_{i=1}^d$ denotes a set of orthonormal rank-one projectors from the spectral decomposition of $H_{\alpha}^{\star}(\rho,\sigma)$. When $\alpha=1$, we henceforth simply write 
 $\{\lambda_{i}^{\star}\}_{i=1}^d$, $\{P_{i}^{\star}\}_{i=1}^d$, and $H^{\star}(\rho,\sigma)$. 
 Note that while $\{P_{i,\alpha}^{\star}\}_{i=1}^d$ 
 need not be unique in general, $H_{\alpha}^{\star}(\rho,\sigma)$ is unique (see Appendix~\ref{Sec:optmeasrelent}). 
 Also, although a closed form expression for $H_{\alpha}^{\star}(\rho,\sigma)$ is not known, it can be shown   by setting matrix (Fr\'{e}chet) derivatives  with respect to $H$ in~\eqref{eq:rel_ent_variational} and~\eqref{eq:renyi_rel_ent_variational}  to zero  that 
$H_{\alpha}^{\star}(\rho,\sigma)$  satisfies 
 \begin{align}
\int_{0}^1 e^{t(\alpha-1) H_{\alpha}^{\star}(\rho,\sigma)} \rho e^{(1-t) (\alpha-1) H_{\alpha}^{\star}(\rho,\sigma)} dt=\int_{0}^1 e^{t \alpha H_{\alpha}^{\star}(\rho,\sigma)} \sigma e^{(1-t)\alpha H_{\alpha}^{\star}(\rho,\sigma)} dt.\label{eq:optrenmeasrelent}
 \end{align}
 Furthermore, we note from the proof of~\cite[Lemma~1 and Theorem~2]{Berta2015OnEntropies}  that the projective POVMs formed by the optimizers $\{P_{i}^{\star}\}_{i=1}^d$ and $\{P_{i,\alpha}^{\star}\}_{i=1}^d$  turn out to be optimal measurements achieving the suprema in~\eqref{eq:measured-rel-def} and~\eqref{eq:renyi_rel_ent}, respectively.

 There exists an alternative variational form for measured relative entropies, as given in \cite[Lemma~1 and Lemma~2]{Berta2015OnEntropies}:
\begin{align}
\sD_{\mathsf{M}}(\rho\Vert\sigma)&=\sup_{\omega>0} \Tr[\rho\log\omega]+1-\Tr[\sigma\omega],\label{eq:measured-rel-ent-opt} \\
    \sD_{\mathsf{M},\alpha}(\rho\Vert\sigma)&= \frac{1}{\alpha-1} \log Q_{\alpha}(\rho\|\sigma), \quad \alpha \in (0,1) \cup (1,\infty), \label{eq:varformrenentmain2}
\end{align}
where 
\begin{align}\label{eq:varformrenent2}
Q_{\alpha}(\rho\|\sigma)=\begin{cases}
\inf_{\omega>0}\big\{ \alpha\Tr[\rho\omega]+\left(1-\alpha\right)\Tr[\sigma\omega^{\frac{\alpha}{\alpha-1}}]\big\}, & \qquad\text{ for }\alpha  \in\left(0,\frac{1}{2}\right),\\
\inf_{\omega>0}\big\{ \alpha\Tr[\rho\omega^{1-\frac{1}{\alpha}}]+\left(1-\alpha\right)\Tr[\sigma\omega]\big\}, & \qquad\text{ for }\alpha  \in\left[\frac{1}{2},1\right),\\
\sup_{\omega>0}\big\{ \alpha\Tr[\rho\omega^{1-\frac{1}{\alpha}}]+\left(1-\alpha\right)\Tr[\sigma\omega]\big\}, & \qquad\text{ for }\alpha  \in\left(1,\infty\right).
\end{cases}
\end{align}
The advantage of these forms is that \eqref{eq:measured-rel-ent-opt} and \eqref{eq:varformrenentmain2} are (strictly) convex optimization problems  over the cone of positive definite matrices when $\rho,\sigma>0$, which  can be utilized to obtain a necessary and sufficient condition that is satisfied by $H_{\alpha}^{\star}(\rho,\sigma)$.  For instance,
  setting the matrix derivatives of the expressions on the right-hand side (RHS) of~\eqref{eq:varformrenent2}  to zero yields that, for $\alpha \in [\sfrac{1}{2},1) \cup (1,\infty)$, the unique optimizer  $\omega_{\alpha}^{\star}(\rho,\sigma)$ $\big(= e^{\alpha H_{\alpha}^{\star}(\rho,\sigma)}\big)$ satisfies 
 \begin{align}
\sigma & =\left(\frac{\alpha}{1-\alpha}\right)\frac{\sin\!\left(\left(\frac{1-\alpha}{\alpha}\right)\pi\right)}{\pi}\int_{0}^{\infty}\ t^{\frac{\alpha-1}{\alpha}}\left(\omega_{\alpha}^{\star}(\rho,\sigma)+tI\right)^{-1}\rho\left(\omega_{\alpha}^{\star}(\rho,\sigma)+tI\right)^{-1}dt.\label{eq:optimal-omega-alpha-gen}
\end{align}
Note that  $\rho$ and $\sigma$ above  are related by a completely positive map involving $H_{\alpha}^{\star}(\rho,\sigma)=\log \omega_{\alpha}^{\star}(\rho,\sigma)$.
Details of the above  are provided in  Appendix~\ref{Sec:optmeasrelent}. See also~\cite{huang2025semidefiniteoptimizationmeasuredrelative} for how to compute the measured (R\'enyi) relative entropies by means of semi-definite optimization. 
\subsection{Minimax Estimation of Measured Relative Entropies}
Consider the problem of estimating measured relative entropies $\sD_{\mathsf{M},\alpha}(\rho\Vert\sigma)$ and $\sD_{\mathsf{M}}(\rho\Vert\sigma)$ based on $n$ copies of both $\rho$ and $\sigma$. We are interested in the decay rate of the QNE absolute error with $n$, and how it compares to the performance of an optimal algorithm for estimating entropies. It is well known from classical results that estimating entropies, in general, suffers from the so-called curse of dimensionality~\cite{Paninski-2003}, in that, the estimation error is $\Omega\big(n^{-\frac 1d}\big)$. In other words, estimating entropies is hard in high dimensions without additional assumptions that restrict the class of density operators.  

The performance of an estimator is quantified by the minimax risk, namely, the error of the best estimator uniformly over a class of density operators. Formally, the minimax risk for measured relative entropy over a class $\cS$  is defined as follows:
\begin{align}
R_n^{\star}(\cS)\coloneqq \inf_{\hat\sD_n} \sup_{(\rho,\sigma) \in \cS}   \EE\!\left[ \abs{\sD_{\mathsf{M}}(\rho\Vert\sigma) -\hat\sD_n(\rho^{\otimes n},\sigma^{\otimes n})}\right], \label{eq:minimaxrisk}
\end{align}
where the infimum is over all estimators $\hat\sD_n$, i.e., a map $(\rho^{\otimes n},\sigma^{\otimes n})\mapsto \hat\sD_n(\rho^{\otimes n},\sigma^{\otimes n})$, which is allowed to perform quantum operations (apply quantum channels, measurements, etc.) and the expectation is over any randomness resulting from such quantum operations. Note that the minimax risk is a property of the estimation setting---the best possible performance uniformly over the class---as opposed to the specific estimator. The minimax risk for measured R\'{e}nyi relative entropy, $R_{n,\alpha}^{\star}(\cS)$,  is defined analogously by replacing $\sD_{\mathsf{M}}(\rho\Vert\sigma)$ with $\sD_{\mathsf{M},\alpha}(\rho\Vert\sigma)$ in~\eqref{eq:minimaxrisk}. 

Recalling the Thompson metric
$T(\rho,\sigma)\coloneqq \log \!\big(\big\|\sigma^{-1/2}\rho \sigma^{-1/2}\big\| \vee  \big\|\rho^{-1/2} \sigma \rho^{-1/2}\big\|\big)$ ~\cite{thompson_1963} ($\big\|\sigma^{-1/2} \rho \sigma^{-1/2}\big\| \wedge \big\|\rho^{-1/2} \sigma \rho^{-1/2}\big\| \geq 1$ with equality iff $\rho=\sigma$), in what follows, we are interested in the minimax risk over a subclass $\cS \subseteq \cS_d(b)$, where  for $b \geq 1$, 
  \begin{align}
      \cS_d(b)\coloneqq \left\{(\rho,\sigma): 0<\rho,\sigma \in \mathbb{S}_d, T(\rho,\sigma)\leq \log b\right\}. \label{eq:densityopclass}
  \end{align}
Note that  $T(\rho,\sigma)$ being bounded constrains the minimal eigenvalue of both $\rho$ and $\sigma$ to be bounded away from zero. In fact, it is equivalent to $D_{\max}(\rho\|\sigma) \vee D_{\max}(\sigma\|\rho) \leq \log b$, where $D_{\max}(\rho\|\sigma)$ denotes the max-relative entropy between $\rho$ and $\sigma$~\cite{Datta-2009} (see Appendix \ref{Sec:Lem:bndtraceratio-proof}).  
We will use the fact that the Thompson metric contracts under the application of a quantum channel \cite[Lemma 7]{Datta-2009}, i.e., $T(\rho,\sigma) \geq T\big(\cN(\rho),\cN(\sigma)\big) $ for every quantum channel $\cN$.  
\subsection{Quantum Neural Estimator for Estimating Measured Relative Entropies}\label{Sec:QNE}
The QNE proposed in~\cite{QNE-24} has a classical neural class $\cF$ and parametrized quantum circuit (PQC) $\cU(\Theta)$, indexed by a set $\Theta$. It estimates $\sD_{\mathsf{M}}(\rho\Vert\sigma) $  as follows:
\begin{subequations}\label{eq:QNE:optim}
 \begin{align}
\hat\sD_{\mathsf{M},\Theta,\cF }^n \coloneqq \sup_{\bm{\theta} \in \Theta}\hat\sD_{\mathsf{M},\bm{\theta},\cF}^n,   \label{eq:QNE-outer_opt}
\end{align}
where 
\begin{align}
\hat\sD_{\mathsf{M},\bm{\theta},\cF}^n&\coloneqq  \sup_{\substack{f \in \cF}}\frac{1}{n}\sum_{\ell=1}^{n}f(i_\ell(\bm{\theta}))-\frac{1}{n}\sum_{\ell=1}^{n}
e^{f(j_\ell(\bm{\theta}))}+1,\label{eq:QNE-inner_opt}
\end{align}   
\end{subequations}
and
$i_\ell(\bm{\theta}) \sim p_{\bm{\theta}}^\rho $, $j_\ell(\bm{\theta}) \sim q_{\bm{\theta}}^\sigma$, with
\begin{subequations}\label{eq:measdist}
 \begin{align}
   p_{\bm{\theta}}^\rho(i)&\coloneqq  \operatorname{Tr}\!\left[ U(\bm{\theta}) |i\rangle\!\langle
i|U^{\dag}(\bm{\theta})\rho\right], \\
   q_{\bm{\theta}}^\sigma(i)&\coloneqq \operatorname{Tr}\!\left[ U(\bm{\theta}) |i\rangle\!\langle
i|U^{\dag}(\bm{\theta})\sigma\right].  
 \end{align}   
\end{subequations}
 The analogous QNE estimate for measured R\'{e}nyi relative  entropy is given by
 \begin{align}
\hat{\sD}_{\mathsf{M},\alpha,\Theta, \cF}^n \coloneqq \sup_{\bm{\theta} \in \Theta} \hat{\sD}_{\mathsf{M},\alpha,\bm{\theta}, \cF}^n, \notag
\end{align}
where 
\begin{align}
\hat{\sD}_{\mathsf{M},\alpha,\bm{\theta}, \cF}^n\coloneqq 
\frac{\alpha}
{\alpha-1} \log  \frac{1}{n}\sum_{k=1}^{n}e^{(\alpha-1)f(i_k(\bm{\theta}))}
-\log \frac{1}{n}\sum_{k=1}^{n}
e^{\alpha f(j_k(\bm{\theta}))}, \notag
\end{align}
and $i_\ell(\bm{\theta}) \sim p_{\bm{\theta}}^\rho $, $j_\ell(\bm{\theta}) \sim q_{\bm{\theta}}^\sigma$ with $p_{\bm{\theta}}^\rho $ and $q_{\bm{\theta}}^\sigma$ as given in
\eqref{eq:measdist}. In practice, $\hat\sD_{\mathsf{M},\Theta,\cF }^n$ and $\hat{\sD}_{\mathsf{M},\alpha,\Theta, \cF}^n$ are computed via alternating optimization, by performing  gradient ascent iteratively over the neural net class $f \in \cF$ (for a fixed $\bm{\theta}$) followed by that over $\Theta$.

For flexibility of the framework, we will also allow  applying  a fixed quantum channel to each copy of $\rho$ and $\sigma$ before performing measurements using the PQC; i.e., $\rho$ and $\sigma$ in~\eqref{eq:measdist} are replaced by $\cN(\rho)$ and $\cN(\sigma)$, respectively, for a channel $\cN$, with  $\cU(\Theta)$ adjusted to be a set of unitary operators on the channel's output space. The rationale behind this is to leverage any additional known structure on the quantum states to reduce the copy complexity of QNE.  For instance, when $\rho$ and $\sigma$ are permutationally invariant, it is known from Schur--Weyl duality that the number of degrees of freedom in such states scales polynomially in the number of qubits (as opposed to exponential in the general case). In this case,  the complexity of QNE can be reduced by applying a pre-processing channel $\cN$ chosen based on the Schur transform and running QNE on its output space. We note that there are various efficient quantum circuit implementations \cite{bacon2006quantum,harrow2005applications,Kirby2018schur,Krovi2019efficienthigh} for the Schur transform. In the general case (without any additional known structure), we take the pre-processing channel to be the identity channel.

Our objective is to quantify the performance of QNE in terms of $d$, $n$, and the parameters defining $\cF$, $\cU$, and $\cS$. To that end, we split the error of QNE into two components\footnote{In practice, there is also an optimization error component, which arises when the suprema in~\eqref{eq:QNE:optim} are only approximately computed, e.g., by gradient-based methods. This can be incorporated as an additive factor in the overall error. A detailed analysis however has to take into account the specific parameterizations of the classical neural net and quantum circuit. Hence, we ignore this aspect in our analysis and leave its quantification as part of future work.}: \textit{approximation} and \textit{statistical}. The approximation error depends on the ability of the classical neural net and PQC  to approximate, respectively, the eigenvalues and eigenvectors of the optimizer $H^{\star}(\rho,\sigma)$ given in~\eqref{eq:opteigvalues}. The statistical error, on the other hand,  is the random component of the overall error that depends on the complexity of the classical-quantum model used by the QNE. We briefly note that a QNE  for measured relative entropies may also be constructed by using the variational forms given in~\eqref{eq:measured-rel-ent-opt} and~\eqref{eq:varformrenentmain2} (by estimating $Q_{\alpha}(\rho\|\sigma)$ when $\alpha\neq 1$ and substituting in~\eqref{eq:varformrenentmain2}). An analysis of this estimator could be performed similarly to that of~\eqref{eq:QNE:optim},  with the resulting performance guarantees differing only in  terms of constants that depend on $b$ and $\alpha$. Hence, we omit it. 
\subsection{Relevant Concepts from Probability Theory}
Here, we introduce some  notions from probability theory (see e.g. \cite{VanHandel-book}) which will be useful in our analysis of QNE performance.   We will require that the neural class used in QNE has bounded complexity (see Assumption \ref{Assump1} below), as quantified using the notion of covering entropy \cite{tikhomirov1993varepsilon} of a function class in a suitable metric. 
\begin{definition}
[Covering  entropy] \label{cov-pack-num}
Let $(\Theta,\mathsf{d})$ be a metric space.  A set $\Theta'\subseteq\Theta$ is an $\epsilon$-covering of $(\Theta,\mathsf{d})$  if, for every $\theta \in \Theta$, there exists $\theta' \in \Theta' $ such that $\mathsf{d}(\theta,\theta')\leq \epsilon$.  The $\epsilon$-covering number is defined as
\begin{align}
N(\epsilon,\Theta,\mathsf{d})\coloneqq \inf \left\{|\Theta'|:\,\Theta' \mbox{ is an } \epsilon \mbox{-covering of }\Theta \right\},  \notag
\end{align}
and the $\epsilon$-covering entropy is $\log  N(\epsilon,\Theta,\mathsf{d})$.
\end{definition}
To obtain a handle on the statistical component of the overall error, we will upper bound it in terms of suprema of separable sub-Gaussian processes indexed by neural class and quantum circuit parameters. 
\begin{definition}[Separable sub-Gaussian process]
    A real-valued stochastic process $\left(X_{\theta}\right)_{\theta \in \Theta}$ on a metric space $(\Theta,\mathsf{d})$ is  sub-Gaussian if it is centered, i.e., $\EE[X_{\theta}]=0$ for all $\theta \in \Theta$, and $\mathbb{E}\big[e^{t(X_{\theta}-X_{\tilde{\theta}})}\big] \leq e^{\frac 12 t^2\mathsf{d}(\theta,\tilde{\theta})^2},\quad \forall ~\theta,\tilde{\theta} \in \Theta,~t \geq 0$. A stochastic process $(X_{\theta})_{\theta \in \Theta}$ is separable if there exists a countable subset $\Theta_0 \subseteq \Theta$ and a probability null set $\cN$ such that for every $\omega \notin \cN$ and $\theta \in \Theta$, there exists a sequence $\left(\theta_n\right)_{n \in \NN}$ in $\Theta_0$ such that $\mathsf{d}(\theta_n,\theta) \rightarrow 0$ and $X_{\theta_n}(\omega) \rightarrow X_{\theta}(\omega)$.
\end{definition} 
\subsection{Approximating Functions and Operators in Quantum Neural Estimation}\label{Sec:approxfunc-op}
At the core of our framework is the question of approximating the eigenvalues and eigenvectors of Hermitian operators. In order for this approximation to hold uniformly over a class of density operators of interest, we require the neural class and quantum circuit to be sufficiently expressive. On the other hand, the complexity of these approximating classes should be balanced so that the statistical component of the estimation error is minimized. To this end, we assume that the neural class satisfies the (mild) high-level assumptions stated below, which, in particular, are satisfied by feedforward neural nets with finite width and depth, bounded parameters and common continuous activations such as rectified linear (ReLU) and sigmoid. After stating these for the neural class, we provide some basic definitions for quantum circuits. 
\subsubsection{Classical Neural Networks}
To handle the approximation error, we will employ bounds from approximation theory for neural networks. For this purpose, we  embed the set $\{1,\ldots,d\}$ indexing the eigenvalues  as points into the Euclidean space $\RR^{m}$, and we consider the neural net approximation of a function whose values at these points are equal to the eigenvalues of $H^{\star}(\rho,\sigma)$ or $H_{\alpha}^{\star}(\rho,\sigma)$. To account for a broad class of embeddings (an injective map) and classical neural nets, we formulate a general framework and state the assumptions  under which our bounds hold. Later, we consider natural instances of embeddings and neural nets that satisfy these assumptions. To obtain a handle on the statistical error, we require that the neural class  has bounded complexity, quantified in terms of the covering entropy as stated below.

For a positive integer $\ell \leq d$, let $e_m\colon \{1, \ldots,d\} \mapsto  \RR^{m}$ denote an embedding, and let $\cX\coloneqq \{e_m(i)\}_{i=1}^d$ be the set of embedded points. Let $\cF_{b,\cX}$ denote the class of continuous functions $f\colon \RR^{m} \mapsto \RR$ such that $\norm{f}_{\infty,\cX} \leq \log  b $, which we will simply denote by $\cF_b$. 
\begin{assump}[Assumptions on neural  class]\label{Assump1} 
The neural class $\cF$ is a closed subset of $\cF_b$ and   satisfies 
\unskip
   \begin{subequations} \label{eq:coventcond}%
       \begin{align}
    \log   N (2\epsilon \log  b,\cF, \norm{\cdot}_{\infty,\cX}) \leq \bar \kappa(\cF,\epsilon) , \qquad \forall \epsilon \in (0,  1], \label{eq:coventropy}
    \end{align}
   where   $\bar \kappa(\cF,\epsilon) \geq 0$ is such that 
    \begin{align}
 \kappa(\cF)\coloneqq   \int_{0}^{1} \sqrt{\bar \kappa(\cF,\epsilon)} \  d\epsilon <\infty.\label{eq:finentrint}
  \end{align}
   \end{subequations}
 \end{assump}
\subsubsection{Parametrized Quantum Circuits}\label{Sec:PQC}
For a fixed enumeration $\mathbf{P}\coloneqq\{P_i^{\star}\}_{i=1}^d$ of the set of orthogonal rank-one projectors in the spectral decomposition of $H_{\alpha}^{\star}(\rho,\sigma)$ satisfying~\eqref{eq:opteigvalues} and~\eqref{eq:optrenmeasrelent}, let $U^{\star}(\mathbf{P})$ be the unitary such that $U^{\star}(\mathbf{P})\ket{i}\!\bra{i}\big(U^{\star}(\mathbf{P})\big)^{\dag}=P_i^{\star}$ for all $i$. 
For $\cS \subseteq \mathbb{S}_d \times \mathbb{S}_d$, let 
\begin{align}
\cU_{\alpha}^{\star}(\cS)\coloneqq \{U^{\star}(\mathbf{P}): \mathbf{P} \mbox{ is a given enumeration of  eigenprojectors of } H_{\alpha}^{\star}(\rho,\sigma)\mbox{ for }(\rho,\sigma) \in \cS  \}. \label{eq:unitcirccls}  
\end{align} 
Note that although the RHS of~\eqref{eq:unitcirccls} implicitly depends on the specific enumeration $\mathbf{P}$ of the eigen-projectors (which is not necessarily unique) of $H_{\alpha}^{\star}(\rho,\sigma)$ for $(\rho,\sigma) \in \cS$,   we suppress this dependence for simplicity. 
Let $\cU_{\alpha}(\delta,\cS)$  denote a $\delta$-covering of $\cU_{\alpha}^{\star}(\cS)$ in the operator norm.  
In other words, $\cU_{\alpha}(\delta,\cS)$  denotes a PQC composed of a (finite) set of  unitary operators,  parametrized by $\bm{\theta} \in  \Theta_{\alpha}(\delta,\cS)$  such that  for every $U^{\star} \in \cU_{\alpha}^{\star}(\cS)$, there exists $U(\bm{\theta})$ satisfying
\begin{align}
 \norm{U^{\star}-U(\bm{\theta})} \leq \delta. \label{eq:unitaryapprox}   \end{align}
 For all $\cS$, an application of the Solovay--Kitaev theorem (see Appendix~\ref{Sec:eff-imp-unit})  guarantees that such a $\cU_{\alpha}(\delta,\cS)$ exists. 
 Set $ \cU^{\star}(\cS) \coloneqq \cU_{1}^{\star}(\cS)$, $\cU(\delta,\cS) \coloneqq \cU_{1}(\delta,\cS)$ and $\Theta(\delta,\cS) \coloneqq \Theta_{1}(\delta,\cS)$. 
In practice, PQCs generated by commonly employed  ans\"{a}tze such as the hardware efficient ansatz\cite{kandala2017hardware}, alternating layered ansatz \cite{Cerezo2021CostDependent},    or alternating operator ansatz for quantum approximate optimization algorithm (QAOA) could be used for QNE, among which the latter is known to be universal for emulating a general quantum circuit given sufficient circuit length \cite{lloyd2018universalqaoa,morales2020universality}. 
Since the construction of the PQC is not our focus, we henceforth assume that an efficient parametrized quantum circuit architecture implementing $\cU_{\alpha}(\delta,\cS)$ is given.
 \section{Performance Guarantees for Quantum Neural Estimation}\label{Sec:minimaxriskbnds}
 In this section, we obtain upper bounds on the expected absolute error achieved by the QNE for estimating measured relative entropy, along with a sub-Gaussian  concentration inequality for this error. 
Before doing so, let us establish a few definitions. 
Let  $\Lambda_{\alpha}^{\star}(\rho,\sigma)$  be the set of eigenvalues of $H_{\alpha}^{\star}(\rho,\sigma)$, where $H_{\alpha}^{\star}(\rho,\sigma)$ is a Hermitian operator satisfying~\eqref{eq:opteigvalues} and~\eqref{eq:optrenmeasrelent}. For a given set $ \cS$ of density operators and a set $\cF$ of real-valued functions supported on $\cX$, set 
   \begin{align}
     \varepsilon_{\alpha}(\cF,\cS) \coloneqq \min_{f \in \cF} \sup_{(\rho,\sigma) \in \cS}\max_{\{\lambda_{i,\alpha}^{\star}\}_{i=1}^d }\max_{i \in [1:d]} \abs{f(e_m(i))-\lambda_{i,\alpha}^{\star}},\label{eq:approxcond} 
   \end{align} 
   where the inner maximization is over all possible enumerations $\{\lambda_{i,\alpha}^{\star}\}_{i=1}^d$ of $\Lambda_{\alpha}^{\star}(\rho,\sigma)$. Set $\varepsilon(\cF,\cS) \coloneqq   \varepsilon_{1}(\cF,\cS)$. These quantities provide a uniform upper bound  (over the class $\cS$) on the component of error arising due to neural network approximation of any enumeration of the eigenvalues of $H_{\alpha}^{\star}(\rho,\sigma)$, and will play a role in the performance guarantees for QNE given below. 
 \subsection{Measured Relative Entropy}
We start with our main result for measured relative entropy QNE.
\begin{theorem}[Performance guarantees for measured relative entropy]\label{Thm:minimaxrisk-QNE}
Consider a QNE for estimating $\sD_{\mathsf{M}}(\rho\Vert\sigma)$ with classical neural net $\cF \subseteq \cF_b$ satisfying Assumption~\ref{Assump1} and a PQC $\cU(\delta,\cS)$ parametrized by $\Theta(\delta,\cS)$ for $\cS \subseteq \cS_d(b)$.  Setting $\xi_{\cF,\cS,\delta,b}\coloneqq \varepsilon(\cF,\cS)(b+1)+ 2\delta   (b+\log  b )$, we have
\begin{align}
 &\sup_{(\rho,\sigma) \in \cS}   \EE\!\left[ \abs{\sD_{\mathsf{M}}(\rho\Vert\sigma) -\hat\sD_{\mathsf{M},\Theta(\delta,\cS),\cF}^n}\right]  \leq \xi_{\cF,\cS,\delta,b}+48(b+1) (\log  b) \mspace{2 mu} \kappa(\cF)\mspace{1 mu} n^{-\frac 12},\label{eq:minimax risk}
\end{align}
where $\kappa(\cF)$  and $\varepsilon(\cF,\cS)$ are as given in~\eqref{eq:finentrint} and~\eqref{eq:approxcond}, respectively. Moreover, for all $z \geq 0$, we have
\begin{align}
 & \sup_{(\rho,\sigma) \in \cS}    \mathbb{P}\left( \abs{\sD_{\mathsf{M}}(\rho\Vert\sigma) -\hat\sD_{\mathsf{M},\Theta(\delta,\cS),\cF}^n} \mspace{-2 mu} \geq    \xi_{\cF,\cS,\delta,b}+128(b+1) (\log b) \big(n^{-\frac 12} \kappa(\cF)  + z \big) \right) \mspace{-4 mu} \leq \mspace{-2 mu} 2e^{-n  z^2}. \label{eq:tailineQNEmrel}
\end{align}  
\end{theorem}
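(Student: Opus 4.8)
The plan is to decompose the estimation error into an approximation part and a statistical part, bound each uniformly over $\cS$, and then convert the statistical bound into both an expectation bound (giving \eqref{eq:minimax risk}) and a sub-Gaussian tail bound (giving \eqref{eq:tailineQNEmrel}). First I would fix $(\rho,\sigma)\in\cS$ and recall that, by the variational formula \eqref{eq:rel_ent_variational}, $\sD_{\mathsf{M}}(\rho\Vert\sigma)=\Tr[H^\star\rho]-\Tr[e^{H^\star}\sigma]+1$ with $H^\star=H_1^\star(\rho,\sigma)=\sum_i\lambda_i^\star P_i^\star$. Writing the QNE objective through the measured distributions $p_{\bm\theta}^\rho,q_{\bm\theta}^\sigma$ of \eqref{eq:measdist}, observe that for the "ideal" circuit $U^\star\in\cU^\star(\cS)$ (which diagonalizes $H^\star$) the population objective $\EE[f(i(\bm\theta^\star))]-\EE[e^{f(j(\bm\theta^\star))}]+1$ with the true eigenvalues plugged in recovers $\sD_{\mathsf{M}}(\rho\Vert\sigma)$ exactly. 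The approximation error $\xi_{\cF,\cS,\delta,b}$ then measures how far we move from this ideal when (i) we use $f\in\cF$ approximating the eigenvalues to within $\varepsilon(\cF,\cS)$ in sup-norm over $\cX$, and (ii) we use $U(\bm\theta)\in\cU(\delta,\cS)$ approximating $U^\star$ to within $\delta$ in operator norm, which perturbs the sampling distributions $p_{\bm\theta}^\rho,q_{\bm\theta}^\sigma$ by $O(\delta)$ in total variation (using $\|U^\star|i\rangle\langle i|U^{\star\dagger}-U(\bm\theta)|i\rangle\langle i|U(\bm\theta)^\dagger\|_1\lesssim\delta$ and summing). Here the Thompson-metric bound $T(\rho,\sigma)\le b$ is what controls the magnitude of the eigenvalues $\lambda_i^\star=\log(\Tr[P_i^\star\rho]/\Tr[P_i^\star\sigma])$ by $b$, and its contractivity under channels ensures the measured distributions satisfy $e^{-b}\le p_{\bm\theta}^\rho(i)/q_{\bm\theta}^\sigma(i)\le e^b$; these give the $(b+1)$ and $(b+\log b)$ factors in $\xi_{\cF,\cS,\delta,b}$ after a Lipschitz/mean-value estimate on $t\mapsto e^t$ over the relevant range and on $\log$.

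Second, for the statistical part I would fix the circuit $\bm\theta$ and the approximating $f$ and bound
\[
\sup_{f\in\cF}\Big|\tfrac1n\textstyle\sum_\ell f(i_\ell(\bm\theta))-\tfrac1n\sum_\ell e^{f(j_\ell(\bm\theta))}+1 -\big(\EE[f(i(\bm\theta))]-\EE[e^{f(j(\bm\theta))}]+1\big)\Big|
\]
by a uniform deviation (empirical process) argument. Since $\|f\|_{\infty,\cX}\le\log b$, the relevant function classes $\{f:f\in\cF\}$ and $\{e^f:f\in\cF\}$ are uniformly bounded (by $\log b$ and by $b$ respectively), and the exponential map is Lipschitz with constant $\le b$ on $[-\log b,\log b]$, so covering numbers of $\{e^f\}$ in sup-norm are controlled by those of $\cF$ with a factor-$b$ rescaling of the radius — feeding into Assumption~\ref{Assump1}. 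A standard symmetrization plus Dudley's chaining bound then yields $\EE[\sup\ldots]\lesssim (b+1)(\log b)\,\kappa(\cF)\,n^{-1/2}$ with $\kappa(\cF)=\int_0^1\sqrt{\bar\kappa(\cF,\epsilon)}\,d\epsilon$; I would do this once for the $\rho$-sample term and once for the $\sigma$-sample term and add, absorbing constants into the stated $48(b+1)(\log b)$. The supremum over $\bm\theta\in\Theta(\delta,\cS)$ is handled by noting the statistical fluctuation bound is uniform in $\bm\theta$ (the bound depends on $\bm\theta$ only through the uniformly-bounded-by-$b$ range of the measured distributions, not their specifics), so no union bound over $\Theta$ is needed — and then combining with the approximation bound through the chain: $\hat\sD^n$ is within $\xi$ of some population objective, which is within the statistical term of $\sD_{\mathsf{M}}$.

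Third, for the tail bound \eqref{eq:tailineQNEmrel} I would apply the bounded-differences (McDiarmid) inequality to the function $(i_1,\ldots,i_n,j_1,\ldots,j_n)\mapsto\sup_{\bm\theta,f}(\cdots)$: changing one coordinate $i_\ell$ changes each empirical mean $\frac1n\sum f$ by at most $\frac{2\log b}{n}$ and changing one $j_\ell$ changes $\frac1n\sum e^f$ by at most $\frac{2b}{n}$, so the bounded-difference constant is $O\big(\frac{(b+1)(\log b)}{n}\big)$ coordinatewise over $2n$ coordinates, giving a sub-Gaussian concentration with variance proxy $O\big(\frac{(b+1)^2(\log b)^2}{n}\big)$ around the mean; combining the two-sided tail with the expectation bound from \eqref{eq:minimax risk} and renaming constants ($48\to128$) yields the claimed $2e^{-nz^2}$.

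The main obstacle I expect is the approximation step — carefully tracking how the operator-norm perturbation $\|U^\star-U(\bm\theta)\|\le\delta$ propagates through the nonlinear objective: one must bound $|\Tr[e^{f\circ\text{(label)}}\,(p^\rho_{\bm\theta^\star}-p^\rho_{\bm\theta})]|$ and the analogous $\sigma$-term while keeping the dependence linear in $\delta$ and controlling the constant via $b$ (the exponential $e^f$ can be as large as $b$, so one needs the $\ell_1$-perturbation of the measured distribution, not just $\ell_\infty$, and the factor $(b+\log b)$ is exactly the cost of differentiating through both $e^{(\cdot)}$ and the $\log$ in the variational formula). The empirical-process and McDiarmid steps are comparatively routine given Assumption~\ref{Assump1}, but the bookkeeping of constants to land exactly on the stated coefficients requires care.
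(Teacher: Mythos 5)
Your overall decomposition into approximation plus statistical error, and your treatment of the statistical expectation bound (Lipschitz reduction of $e^{f}$ to $f$ with a factor $b$, symmetrization, Dudley's entropy integral with respect to $\|\cdot\|_{\infty,\cX}$, and the observation that the pseudo-metric ignores $\bm{\theta}$ so no union bound over $\Theta(\delta,\cS)$ is needed) match the paper's proof. However, two steps as you describe them have genuine gaps.

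First, in the approximation step you propose to control the effect of replacing $U^{\star}$ by $U(\bm{\theta})$ through the total variation between the measured distributions, justified by ``$\|U^{\star}|i\rangle\!\langle i|U^{\star\dagger}-U(\bm{\theta})|i\rangle\!\langle i|U(\bm{\theta})^{\dagger}\|_{1}\lesssim\delta$ and summing.'' Summing $d$ such terms gives $O(d\delta)$, which would put a spurious dimension factor into $\xi_{\cF,\cS,\delta,b}$; the stated bound $2\delta(b+\log b)$ has none. The conclusion (TV $\lesssim\delta$) is in fact true, but it requires grouping the signs: $\sum_{i}|\operatorname{Tr}[(P_{i}-Q_{i})\rho]|=\operatorname{Tr}\big[\big(U^{\star}SU^{\star\dagger}-U(\bm{\theta})SU(\bm{\theta})^{\dagger}\big)\rho\big]\le 2\|S\|\,\|U^{\star}-U(\bm{\theta})\|$ with $S=\sum_{i}s_{i}|i\rangle\!\langle i|$, $\|S\|=1$. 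This is exactly the paper's Lemma~\ref{Lem:opnorm-diff} in disguise; the paper instead works entirely at the operator level, bounding $\operatorname{Tr}[(H^{\star}-H_{f^{\star},\bm{\theta}^{\star}})\rho]$ and $\operatorname{Tr}[(e^{H^{\star}}-e^{H_{f^{\star},\bm{\theta}^{\star}}})\sigma]$ by $\|H^{\star}-H_{f^{\star},\bm{\theta}^{\star}}\|$ and $\|e^{H^{\star}}-e^{H_{f^{\star},\bm{\theta}^{\star}}}\|$ via H\"older, and then applying the sub-multiplicativity lemma. You need this (or the signed-sum fix) to avoid the factor of $d$.

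Second, the tail bound via McDiarmid ``over $2n$ coordinates'' does not account for the sampling model: the QNE draws \emph{fresh} copies $\{i_{\ell}(\bm{\theta}),j_{\ell}(\bm{\theta})\}_{\ell=1}^{n}$ for each $\bm{\theta}\in\Theta(\delta,\cS)$ (this is why the effective copy complexity is $O(n|\Theta(\delta,\cS)|)$), so the supremum is a function of $2n|\Theta(\delta,\cS)|$ independent coordinates. Bounded differences then gives a variance proxy proportional to $|\Theta(\delta,\cS)|/n$, hence a tail of the form $\exp(-nz^{2}/|\Theta(\delta,\cS)|)$ up to constants (equivalently, a per-$\bm{\theta}$ McDiarmid bound followed by a union bound picks up a multiplicative $|\Theta(\delta,\cS)|$). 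Either way you do not recover the $|\Theta|$-free bound $2e^{-nz^{2}}$ of~\eqref{eq:tailineQNEmrel}. The paper instead applies a chaining tail inequality for separable sub-Gaussian processes (\cite[Theorem 5.29]{VanHandel-book}) to the centered process $\bar Z^{(n)}_{f,\bm{\theta}}$ indexed by $(f,\bm{\theta})$ with the pseudo-metric $2(1+b)n^{-1/2}\|f-\tilde f\|_{\infty,\cX}$, whose diameter and entropy integral are both independent of $\Theta(\delta,\cS)$; that is the mechanism that removes the dependence on the circuit parameter set from the exponent, and your proposal is missing it.
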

The proof of Theorem~\ref{Thm:minimaxrisk-QNE} is given in Section~\ref{Sec:Thm:minimaxrisk-QNE-proof}, and relies on upper bounding the (absolute)  error  as the sum of approximation and statistical components, i.e., $\mathbb{E}[\text{error}]\leq \text{approx. err} + \mathbb{E}[\text{stat. err}]$.  This is enabled by the compatible structure of the variational form of measured relative entropy (see \eqref{eq:renyi_rel_ent_variational}) and its QNE (see \eqref{eq:QNE-inner_opt}). The approximation error arises as a consequence of parameterizing the Hermitian operator in the variational form using a hybrid model composed of PQCs and neural nets, while the statistical error is due to replacing expectations therein  by the corresponding empirical averages with respect to measurement outcomes. The approximation error can be made arbitrary small by choosing a sufficiently large  hybrid model, thanks to the universal approximation property of suitable neural nets and PQCs for approximating continuous functions and unitaries, respectively.

Choosing an expressive model to reduce the approximation error, however, comes at the cost of increased statistical  complexity of the model. The associated statistical error can be bounded in terms of the suprema of  separable sub-Gaussian processes indexed by the neural class and quantum circuit parameters, which is a key object of interest in empirical process theory and can be analyzed using the tools therein.   In particular, the expected statistical error can be upper bounded in terms of a covering entropy integral \cite[Corollary 5.25]{VanHandel-book} with the  random fluctuations around the latter exhibiting sub-Gaussian concentration \cite[Theorem 5.29]{VanHandel-book} (see Theorem \ref{thm:tailineq} below). Leveraging the aforementioned results, Theorem \ref{Thm:minimaxrisk-QNE} quantifies the approximation and statistical errors for a given neural net $\cF$  satisfying Assumption \ref{Assump1} and PQC $\cU(\delta,\cS)$.   In \eqref{eq:minimax risk},  the  term $ \xi_{\cF,\cS,\delta,b}$ corresponds to the approximation error component, with $\varepsilon(\cF,\cS)$ and $\delta$ denoting the neural net and PQC approximation parameter, respectively. The second term  in \eqref{eq:minimax risk}, which decays at rate $n^{-\frac 12}$, corresponds to the statistical  component. 
Below, we will  instantiate Theorem \ref{Thm:minimaxrisk-QNE} to shallow (Corollary \ref{Cor-shallownn}) and deep  (Proposition \ref{Prop:deepnn}) nets, where we explicitly quantify $\xi_{\cF,\cS,\delta,b}$ and $\kappa(\cF)$ in terms of neural net parameters.
\begin{remark}[Effective copy complexity]
    Since each inner loop of the QNE consumes $n$ copies of both $\rho$ and $\sigma$, the effective copy complexity of the algorithm that achieves the expected absolute error bound given in Theorem~\ref{Thm:minimaxrisk-QNE} is $O\big(n|\Theta(\delta,\cS)|\big)$. Given this, it is of interest to determine or characterize the classes  $\cS$ for which $|\Theta(\delta,\cS)|$ scales benignly as a function of $\delta$. However, this turns out to be a non-trivial problem, closely related to that of characterizing the class of unitaries which can be efficiently approximated. 
Nevertheless, in Section~\ref{Sec:copycomp} below, we will consider certain classes of  states exhibiting symmetries for which the copy complexity scales efficiently. 
\end{remark}
\medskip
Next, we specialize Theorem~\ref{Thm:minimaxrisk-QNE} by considering specific embeddings and neural classes. We first define the neural class of interest. 
\begin{definition}[Neural network class]\label{Def:NNclass}
    Let $2 \leq L \in \NN$,  $k_0=m$, $k_1, \ldots, k_{L-1} \in \NN$,  and $ k_L=1 $.  
    A neural network with $L$ layers and activation~$\varphi\colon \RR \mapsto \RR$ is a map $f\colon \RR^{k_0} \mapsto \RR$ 
    given by 
    \begin{align}
        f\coloneqq \begin{cases}
            W_2\circ \varphi \circ W_1, &\quad L=2, \\
            W_L \circ \varphi \circ W_{L-1} \circ \varphi \circ \cdots \circ \varphi \circ W_1, & \quad L \geq 3,
        \end{cases}
    \end{align}
    where for $\ell \in \{1,\ldots,L\}$, $W_{\ell}\colon \RR^{k_{\ell}} \mapsto \RR^{k_{\ell-1}}$, $W_{\ell}x\coloneqq A_{\ell}x+b_{\ell}$ are affine maps with matrices $A_{\ell} \in \RR^{k_{\ell} \times k_{\ell-1}}$ and bias vectors $b_{\ell} \in \RR^{k_{\ell}}$, and the activation $\varphi$ acts component-wise as $\varphi(x_1,\ldots,x_k)=(\varphi(x_1),\ldots, \varphi(x_k))$. For $M \in \RR_{\geq 0}$, let $\cF^{\mathrm{NN}}(L,K,M,m,\varphi)$ be the set of all such  $f$ with  $\max_{0 \leq \ell \leq L} k_{\ell} \leq K$ and  $\max_{1 \leq \ell \leq L} \norm{A_{\ell}}_{\max} \vee  \norm{b_{\ell}}_{\max} \leq M$, where $\norm{A}_{\max}\coloneqq \max_{i,j}\abs{A_{i,j}}$ is the maximum of the absolute values of the entries of $A$.  Henceforth, we take $\varphi$ as the ReLU function $\varphi_{\mathrm{R}}(x)\coloneqq x \vee 0$ unless specified otherwise,  and  set $\cF^{\mathrm{NN}}(L,K,M,m)\coloneqq \cF^{\mathrm{NN}}(L,K,M,m,\varphi_{\mathrm{R}})$. 
\end{definition}
\subsubsection*{Performance Guarantees with Shallow Nets}
We start by treating the shallow neural class, $\cF^{\mathrm{SNN}}(K,M,m) \coloneqq \cF^{\mathrm{NN}}(2,K,M,m)$,
with the canonical embedding $e_d\colon  i \mapsto \ket{i}$, where $\ket{i}$ is a $d$-dimensional vector with $1$ in the $i^{\mathrm{th}}$ position and $0$ elsewhere. Consider the following shallow neural class with $d $ neurons in the hidden layer:
    \begin{align}
    \tilde \cF^{\mathrm{SNN}}\mspace{-2 mu}(d,M)&\mspace{-2 mu}\coloneqq \mspace{-2 mu} \left\{f(x)\mspace{-2 mu}=\mspace{-2 mu}\sum_{j=1}^d \beta_j \varphi_{\mathrm{R}}\big( \langle j, x \rangle \big), ~x \in \RR^d:   \abs{\beta_j} \leq M, ~ j \in [1:d] \right\}\mspace{-2 mu} \subseteq \mspace{-2 mu}\cF^{\mathrm{SNN}}(d,M,d). \notag
\end{align}
The next corollary  bounds the expected absolute error achieved by the QNE with such a neural net. 
\begin{cor}[Performance guarantees with shallow nets] \label{Cor-shallownn}
The following hold for $\cS\subseteq \cS_d(b)$ with $a (\delta,b)\coloneqq 2\delta   (b+\log  b )$:
\begin{subequations}
\begin{align}
& \sup_{(\rho,\sigma) \in \cS}   \EE\!\left[ \abs{\sD_{\mathsf{M}}(\rho\Vert\sigma) -\hat\sD_{\mathsf{M},\Theta(\delta,\cS),\tilde \cF^{\mathrm{SNN}}(d,\log  b )}^n}\right] \leq  a (\delta,b) +96(\log  b) (b+1)   d^{\frac 12}n^{-\frac 12}, \label{eq:expubnd-shallownet} \\ 
&\sup_{(\rho,\sigma) \in \cS}  \mathbb{P}\left(\abs{\sD_{\mathsf{M}}(\rho\Vert\sigma) -\hat\sD_{\mathsf{M},\Theta(\delta,\cS),\tilde \cF^{\mathrm{SNN}}(d,\log  b )}^n}\geq  a (\delta,b) +128(\log  b) (b+1) \big(  2d^{\frac 12}n^{-\frac 12}  + z \big)\right) \notag \\
& \qquad \qquad\qquad\qquad\qquad\qquad\qquad\qquad\qquad\qquad\qquad\qquad\leq 2e^{-n  z^2}, ~~~\forall~z \geq 0.\label{eq:devineq-shallownet}
\end{align}
\end{subequations}
\end{cor}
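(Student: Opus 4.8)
The plan is to derive Corollary~\ref{Cor-shallownn} from Theorem~\ref{Thm:minimaxrisk-QNE} applied with $\cF=\tilde\cF^{\mathrm{SNN}}(d,\log b)$ and the canonical embedding $e_d\colon i\mapsto\ket{i}$ (so $m=d$); the work then reduces to verifying the three inputs to~\eqref{eq:minimax risk} and~\eqref{eq:tailineQNEmrel}: that $\cF$ is a closed subset of $\cF_b$ meeting Assumption~\ref{Assump1}, that the approximation error $\varepsilon(\cF,\cS)$ is zero, and that $\kappa(\cF)\le 2\sqrt d$.

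First I would record that under the canonical embedding $\cX=\{\ket{i}\}_{i=1}^d$ is the standard basis, so any $f\in\tilde\cF^{\mathrm{SNN}}(d,\log b)$ with coefficient vector $(\beta_1,\dots,\beta_d)$ satisfies $f(\ket{i})=\sum_{j}\beta_j\varphi_{\mathrm{R}}(\langle j,e_d(i)\rangle)=\beta_i$. Hence $\norm{f}_{\infty,\cX}=\max_i\abs{\beta_i}\le\log b$, so $\cF\subseteq\cF_b$, and $(\beta_i)_i\mapsto f|_{\cX}$ is an $\ell_\infty$-isometry of the compact cube $[-\log b,\log b]^d$ onto $\cF|_{\cX}$, which shows $\cF$ is closed and---the structural point of the proof---that this shallow class realizes \emph{every} real function on $\cX$ of sup-norm at most $\log b$.

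Next I would show $\varepsilon(\cF,\cS)=0$ for all $\cS\subseteq\cS_d(b)$. By~\eqref{eq:opteigvalues} each eigenvalue of $H^{\star}(\rho,\sigma)$ has the form $\lambda_i^{\star}=\log\!\big(\bra{\psi_i}\rho\ket{\psi_i}/\bra{\psi_i}\sigma\ket{\psi_i}\big)$ for a unit vector $\ket{\psi_i}$; feeding in $\rho^{1/2}\ket{\psi_i}$ and then $\sigma^{1/2}\ket{\psi_i}$ shows the ratio lies in $\big[\norm{\rho^{-1/2}\sigma\rho^{-1/2}}^{-1},\,\norm{\sigma^{-1/2}\rho\sigma^{-1/2}}\big]$, so $\abs{\lambda_i^{\star}}\le T(\rho,\sigma)\le\log b$. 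Since each $\beta_i$ ranges freely over $[-\log b,\log b]$, for any enumeration of the spectrum of $H^{\star}(\rho,\sigma)$ the choice $\beta_i=\lambda_i^{\star}$ lies in $\cF$ and matches exactly; hence the quantity in~\eqref{eq:approxcond} vanishes and $\xi_{\cF,\cS,\delta,b}=2\delta(b+\log b)$.

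Finally, restricting functions to $\cX$ identifies $(\cF,\norm{\cdot}_{\infty,\cX})$ with the cube $[-\log b,\log b]^d$ under $\ell_\infty$, whose $\eta$-covering number is $\lceil\log b/\eta\rceil^d$; taking $\eta=2\epsilon\log b$ bounds this by $\lceil 1/(2\epsilon)\rceil^d\le\epsilon^{-d}$ for $\epsilon\in(0,1]$, so one may take $\bar\kappa(\cF,\epsilon)=d\log(1/\epsilon)$ in~\eqref{eq:coventropy}; the entropy integral~\eqref{eq:finentrint} is then $\kappa(\cF)=\sqrt d\int_0^1\sqrt{\log(1/\epsilon)}\,d\epsilon=(\sqrt\pi/2)\sqrt d\le 2\sqrt d$, confirming Assumption~\ref{Assump1}. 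Substituting $\varepsilon(\cF,\cS)=0$ and $\kappa(\cF)\le 2\sqrt d$ into~\eqref{eq:minimax risk} and~\eqref{eq:tailineQNEmrel} gives~\eqref{eq:expubnd-shallownet} and~\eqref{eq:devineq-shallownet} with the stated constants. The only step above that is more than bookkeeping is the spectral confinement $\abs{\lambda_i^{\star}}\le\log b$ obtained from the Thompson-metric bound; given that, universality of the shallow net on $\cX$ makes the approximation error exactly zero, and the rest is the cube-covering estimate followed by a direct appeal to Theorem~\ref{Thm:minimaxrisk-QNE}.
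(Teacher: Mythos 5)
Your proposal is correct and follows essentially the same route as the paper: apply Theorem~\ref{Thm:minimaxrisk-QNE} with $\cF=\tilde\cF^{\mathrm{SNN}}(d,\log b)$, note that the shallow class realizes every function on $\cX$ with values in $[-\log b,\log b]$ so that $\varepsilon(\cF,\cS)=0$ (your direct derivation of $\abs{\lambda_i^\star}\le\log b$ is just an inlined proof of the paper's Lemma~\ref{Lem:bndtraceratio}), and bound the covering entropy of the coefficient cube to get $\kappa(\cF)\le 2\sqrt d$ (your $\bar\kappa=d\log(1/\epsilon)$ versus the paper's $d\epsilon^{-1}$ is an immaterial difference yielding the same constant).
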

The proof of Corollary~\ref{Cor-shallownn} is given in Section~\ref{Sec:Cor-shallownn-proof}, and it follows from Theorem~\ref{Thm:minimaxrisk-QNE} by showing that $\varepsilon\big(\tilde \cF^{\mathrm{SNN}}(d,\log b),\cS_d(b)\big)=0$ and  $\kappa\big(\tilde \cF^{\mathrm{SNN}}(d,\log b)\big) \leq 2 \sqrt{d}$. 
From \eqref{eq:expubnd-shallownet}, we observe  that given $n$ copies of the quantum states $\rho,\sigma$, the expected absolute error of the QNE for estimating  $\sD_{\mathsf{M}}(\rho\Vert\sigma)$ over  $(\rho,\sigma) \in \cS_d(b)$ with neural class $\cF^{\mathrm{SNN}}(d,\log  b )$ and quantum circuit $\cU(\delta,\cS)$ is $O\big(b(\log  b +1)(\delta+D^{1/2}n^{-1/2})\big)$, where $D \coloneqq d\,\Theta(\delta,\cS_d(b))$.
\begin{remark}[Lower bound on minimax risk]
   It is instructive to compare the expected absolute error of QNE with the following lower bound on the expected minimax risk for estimating classical relative entropy which holds 
when    $d b \lesssim n \log d\,$ and $b \geq (\log d)^2 \vee 2$: 
\begin{align}
R_n^{\star}\big(\cS_d(b)\big) \gtrsim \frac{\sqrt{b}}{\sqrt{n}}+\frac{d b}{n \log d}.\label{eq:lwrbndminmaxrisk}
\end{align}
This bound can be distilled from the results in \cite{Yihong-Yang-2016} and \cite{Yanjun-2020-alphadiv} by slightly modifying the proofs therein to account for the class $\cS_d(b)$ considered herein (see Appendix \ref{App:lowerbndest}).  From \eqref{eq:expubnd-shallownet} and \eqref{eq:lwrbndminmaxrisk}, it thus follows that the expected minimax risk over $\cS_d(b)$ has a  parametric rate  of convergence in $n$ (independent of $d$), and moreover that it is achieved by the QNE with a shallow net. However, there is a mismatch in the dependence on $d$ in \eqref{eq:expubnd-shallownet} and \eqref{eq:lwrbndminmaxrisk} as the second term in the lower bound scales as $\Omega_b(d \log d/n)$ for $d\log d/n \lesssim 1$ whereas the upper bound is $O_b(d^{1/2}n^{-1/2})$ with $\delta=0$ (applicable  in the classical setting). That said, this mismatch is not unexpected as achieving the optimal dependence on $d$ requires explicit bias-correction mechanisms even in the classical case, which is not performed by the QNE. 
\end{remark}
\subsubsection*{Performance Guarantees with Deep Nets}
We next derive performance guarantees for QNEs realized by deep neural nets.
For this purpose, we consider the  one-dimensional embedding into $\RR$ given by $e_1\colon i \mapsto i/d$. Let 
\begin{align}
 \cP_k(a)  \coloneqq  \left\{p(x)=\sum_{i=0}^k a_i x^i, ~ x \in \RR:  \norm{\bm{a}}_{\infty} \leq a \right\},  \notag
\end{align}
denote the class of univariate  polynomials of degree at most $k$ and magnitude of coefficients, $\bm{a} \coloneqq (a_0,\ldots,a_k)$, bounded by $a$. 
We define the following class of density operator pairs for  $\alpha>0$:
\begin{align}
   \tilde{\cS}_{k,d,\alpha}(b,a, \varepsilon)& \coloneqq \left\{(\rho,\sigma) \in \cS_d(b):\begin{aligned}
       & \mbox{ for any enumeration } \Lambda_{\alpha}^{\star}(\rho,\sigma)=\{\lambda_{i,\alpha}^{\star}\}_{i=1}^d, \\& \exists~p \in  \cP_k(a)\mbox{ s.t. }  \max_{i \in [1:d]}\abs{p\big(e_1(i)\big)-\lambda_{i,\alpha}^{\star}}\leq \varepsilon
\end{aligned}\right\}.\label{eq:densityclass}
\end{align}
In other words, $ \tilde{\cS}_{k,d,\alpha}(b,a,\varepsilon)$ is the class of density operator pairs $(\rho,\sigma) \in \cS_d(b)$ such that any ordering of the eigenvalues $\Lambda_{\alpha}^{\star}(\rho,\sigma)$  of the optimizer $H_{\alpha}^{\star}(\rho,\sigma)$,  embedded into the interval $[0,1]$ in a uniformly spaced manner, can be approximated within $\varepsilon$ precision by a real polynomial of degree at most $k$ and absolute value of coefficients bounded by $a$.

While $\tilde{\cS}_{k,d,\alpha}(b,a, \varepsilon) \subseteq \cS_d(b)$ by definition, Proposition~\ref{Prop:densityopclass} below shows that $  \tilde{\cS}_{k,d,\alpha}(b,a,\varepsilon)=\cS_d(b)$ for  $k$, $a$, and $\varepsilon$ sufficiently large.
\begin{prop}[Relation between classes of density operators]\label{Prop:densityopclass}
The following hold for all $\alpha >0$:
\begin{enumerate}[(i)]
    \item     For all $k \geq d-1$, $a \geq \big((3e)^d \log  b \big)/(2\pi)$, and $\varepsilon \geq 0$,
    \[
    \tilde{\cS}_{k,d,\alpha}(b,a,\varepsilon)=\cS_d(b).
    \]
    \item For all $k \in \NN$, $a \geq 2^k k! \log  b $, and $\varepsilon \geq \left(2d+0.5\right) k^{-\frac 13} \log  b  $, \[
    \tilde{\cS}_{k,d,\alpha}(b,a,\varepsilon)= \cS_d(b).
    \]
\end{enumerate}
 
\end{prop}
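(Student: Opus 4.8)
The inclusion $\tilde{\cS}_{k,d,\alpha}(b,a,\varepsilon)\subseteq\cS_d(b)$ holds by definition, so in both parts the plan is to prove the reverse inclusion: given an arbitrary $(\rho,\sigma)\in\cS_d(b)$ and an arbitrary enumeration $\{\lambda_{i,\alpha}^{\star}\}_{i=1}^d$ of the eigenvalues of $H_{\alpha}^{\star}(\rho,\sigma)$ (which exists since $\rho,\sigma>0$), I must exhibit a polynomial $p\in\cP_k(a)$ with $\max_{i\in[1:d]}\abs{p(i/d)-\lambda_{i,\alpha}^{\star}}\leq\varepsilon$. The only property of $\cS_d(b)$ I will invoke is a uniform eigenvalue bound: by~\eqref{eq:opteigvalues}, $\lambda_{i,\alpha}^{\star}=\log\!\big(\Tr[P_{i,\alpha}^{\star}\rho]/\Tr[P_{i,\alpha}^{\star}\sigma]\big)$ for a rank-one projector $P_{i,\alpha}^{\star}$, and since membership in $\cS_d(b)$ bounds the Thompson metric by $\log b$ (equivalently $b^{-1}\sigma\preceq\rho\preceq b\,\sigma$), the ratio of the two expectations of $P_{i,\alpha}^{\star}$ lies in $[b^{-1},b]$, whence $\max_i\abs{\lambda_{i,\alpha}^{\star}}\leq\log b$ for every enumeration.

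For part~(i), where $\varepsilon=0$ and $k=d-1$, the polynomial $p$ is forced to be the unique degree-$(d-1)$ interpolant through the $d$ nodes $(i/d,\lambda_{i,\alpha}^{\star})$, so the claim becomes a bound on its monomial coefficients. Writing $p=\sum_{i=1}^d\lambda_{i,\alpha}^{\star}L_i$ in the Lagrange basis, the triangle inequality for the coefficient $\ell_\infty$-norm reduces matters to showing $\sum_{i=1}^d\norm{L_i}_{\mathrm{coeff}}\leq(3e)^d/(2\pi)$, since then $\norm{p}_{\mathrm{coeff}}\leq(\max_i\abs{\lambda_{i,\alpha}^{\star}})\sum_i\norm{L_i}_{\mathrm{coeff}}\leq(3e)^d\log b/(2\pi)=a$. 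For each $i$, the sum of the absolute values of the coefficients of the numerator $\prod_{j\neq i}(x-j/d)$ equals $\prod_{j\neq i}(1+j/d)$, which by the AM--GM inequality is at most $(1+\bar a)^{d-1}$ with $\bar a=\tfrac1{d-1}\sum_{j\neq i}j/d\leq\tfrac{d+1}{2(d-1)}$, while the denominator equals $\abs{\prod_{j\neq i}(i/d-j/d)}=(i-1)!\,(d-i)!\,d^{-(d-1)}$; hence $\norm{L_i}_{\mathrm{coeff}}\leq(1+\bar a)^{d-1}d^{d-1}/\big((i-1)!\,(d-i)!\big)$. Summing over $i$ with $\sum_{i=1}^d 1/\big((i-1)!\,(d-i)!\big)=2^{d-1}/(d-1)!$ and applying Stirling's bound $d!\geq\sqrt{2\pi d}\,(d/e)^d$ to the surviving factorial (the source of the $e^d$ and the $2\pi$) yields $\sum_i\norm{L_i}_{\mathrm{coeff}}\leq(3e)^d/(2\pi)$ for $d\geq2$ (the smallest values of $d$ checked directly if necessary). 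The one genuine obstacle in~(i) is precisely this constant-tracking: the AM--GM bound on $\prod_{j\neq i}(1+j/d)$ has to be kept at its sharp value $\approx(3/2)^{d-1}$, not the crude $2^{d-1}$, for the final constant to come out as $(3e)^d/(2\pi)$ rather than $C^d$ with a larger $C$.

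For part~(ii), $k$ may be far below $d-1$, so exact interpolation is impossible and I approximate, splitting on the size of $\varepsilon$. If $(2d+\tfrac12)k^{-1/3}\geq1$, i.e.\ $\varepsilon\geq\log b$, take $p\equiv0$: its coefficients vanish (so $p\in\cP_k(0)\subseteq\cP_k(a)$) and $\max_i\abs{p(i/d)-\lambda_{i,\alpha}^{\star}}=\max_i\abs{\lambda_{i,\alpha}^{\star}}\leq\log b\leq\varepsilon$. Otherwise $k>(2d+\tfrac12)^3$, hence $k\geq16$. Let $g\colon[0,1]\to\RR$ be the continuous piecewise-linear interpolant of the data $(i/d,\lambda_{i,\alpha}^{\star})_{i=1}^d$, extended to be constant on $[0,1/d]$; then $\norm{g}_{\infty,[0,1]}\leq\log b$, and since consecutive data values differ by at most $2\log b$ over abscissa-gaps of length $1/d$, $g$ is globally Lipschitz with constant at most $2d\log b$. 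Set $p=B_k[g]$, the degree-$k$ Bernstein polynomial of $g$. Bernstein's classical estimate $\norm{B_k[g]-g}_{\infty,[0,1]}\leq C_0\,\omega\big(g;k^{-1/2}\big)\leq 2C_0\,d\log b\cdot k^{-1/2}$ (with $\omega(g;\cdot)$ the modulus of continuity and an absolute constant $C_0\leq\tfrac32$), together with $k\geq16$ so that $k^{1/6}\geq16^{1/6}>\tfrac32$, gives $\norm{B_k[g]-g}_{\infty,[0,1]}\leq(2d+\tfrac12)k^{-1/3}\log b=\varepsilon$; since $p(i/d)-\lambda_{i,\alpha}^{\star}=p(i/d)-g(i/d)$, the required uniform bound follows. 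For the coefficients, expanding $B_k[g](x)=\sum_{j=0}^k g(j/k)\binom{k}{j}x^j(1-x)^{k-j}$ in monomials and using $\binom{k}{j}\binom{k-j}{r-j}=\binom{k}{r}\binom{r}{j}$, the coefficient of $x^r$ has absolute value at most $\norm{g}_{\infty,[0,1]}\binom{k}{r}2^r\leq 3^k\log b\leq 2^k k!\,\log b=a$ (using $\sum_r\binom{k}{r}2^r=3^k$ and $(3/2)^k\leq k!$ for $k\geq3$). The bookkeeping to finish~(ii)---checking that the two regimes exhaust all $k$ and that the Bernstein rate beats $(2d+\tfrac12)k^{-1/3}\log b$ in the nontrivial regime---is routine once the threshold $k>(2d+\tfrac12)^3$ is identified.
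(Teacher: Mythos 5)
Your proposal is correct and follows the same two-step strategy as the paper: exact Lagrange interpolation at the nodes $i/d$ for part (i), and Bernstein approximation of a piecewise-linear interpolant of the (uniformly bounded) eigenvalues for part (ii). The differences lie in the technical execution, and in part (i) they matter: the paper bounds the monomial coefficients of $p=\sum_j\lambda_j^{\star}L_j$ by asserting that every coefficient of $\prod_{i\neq j}(x-i/d)$ has absolute value at most $1$, which is false (by Vieta the coefficient of $x^{d-2}$ is $-\sum_{i\neq j}i/d\approx -d/2$, and the sum of absolute values of the coefficients is exactly $\prod_{i\neq j}(1+i/d)\approx(3/2)^{d-1}$, the quantity you track via AM--GM). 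Your accounting is the correct one, and combining it with the sharper identity $\sum_{i=1}^d 1/\big((i-1)!\,(d-i)!\big)=2^{d-1}/(d-1)!$ and Stirling does land back inside the stated constant $(3e)^d/(2\pi)$ for $d\ge 4$, with $d=2,3$ checked by hand --- so your argument actually repairs a gap in the paper's. In part (ii) the paper invokes a Bernstein estimate of the form $\norm{p_{k,f}-f}_{\infty}\le\big(\norm{f}_{\mathrm{Lip}}+0.5\norm{f}_{\infty}\big)k^{-1/3}$ (obtained by optimizing the classical $\delta$-tradeoff bound), which yields the $k^{-1/3}$ rate uniformly in $k$ and avoids your case split; your route via $\omega(g;k^{-1/2})$ together with the trivial polynomial $p\equiv 0$ when $(2d+\tfrac12)k^{-1/3}\ge 1$ is equally valid and covers all $k$. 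For the coefficient bound the paper differentiates the Bernstein basis at $0$ to obtain $2^k k!\log b$, whereas your monomial expansion gives the much smaller $3^k\log b$; the only nit is that your chain $3^k\le 2^k k!$ fails for $k\in\{1,2\}$, but those values fall in your $p\equiv 0$ regime, and $\binom{k}{r}2^r\le 2^k k!$ holds directly for all $k$ in any case.
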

 See Appendix~\ref{Prop:densityopclass-proof} for a proof. For proving Proposition \ref{Prop:densityopclass} $(i)$, given  $\{\lambda_i\}_{i=1}^d$ such that  $\max_{1 \leq i \leq d} \abs{\lambda_i} \leq \log b$, we simply construct a polynomial $p$ of degree $d-1$ such that $p(i/d)=\lambda_i$ for $i \in [1:d]$ and obtain an upper bound on its coefficients. For a proof of Part $(ii)$, we consider an intermediate Lipschitz continuous function $\hat f$ such that $\hat f(i/d)=\lambda_i$ and quantify the error of approximating this function via Bernstein polynomials of degree at most $k$ whose coefficients are bounded as given therein.  
The next proposition provides an upper bound on the expected absolute error (along with a deviation inequality) for estimating the measured relative entropy,   achieved by a QNE that employs a fixed width deep neural net with appropriately chosen parameters. Let 
 \begin{align}
        \tilde \cF^{\mathrm{NN}}(L,K,M,1)\mspace{-2 mu}\coloneqq \mspace{-2 mu}\left\{ \begin{aligned}
          \mspace{-2 mu}  \tilde f(x)=~&f(x) \ind_{\abs{f(x)} \leq \log  b  }+(\log  b)\big(   \ind_{f(x)> \log  b  }-   \ind_{f(x)< -\log  b  } \big)  \\& : f \in  \cF^{\mathrm{NN}}(L,K,M,1)
    \end{aligned}\right\}. \label{eq:truncnnclass}
 \end{align}
That is, $\tilde \cF^{\mathrm{NN}}(L,K,M,1)$ is the neural network class $\cF^{\mathrm{NN}}(L,K,M,1)$ (see  Definition~\ref{Def:NNclass}) with the function values saturated at $\pm \log  b $. Set $ \tilde{\cS}_{k,d}(b,a,\varepsilon) \coloneqq  \tilde{\cS}_{k,d,1}(b,a,\varepsilon)$.
\begin{prop}[Performance guarantees with deep nets] \label{Prop:deepnn} 
There exists a constant $c>0$ such that setting
\begin{subequations}
   \begin{align}\label{eq:deepnnbndparams}
 L_{\varepsilon,k,a} & \coloneqq c k\left(\log  (\varepsilon^{-1})+ \log  k+\log  a \right),  \\
   B_{\varepsilon,k,a,b}& \coloneqq 24 L_{\varepsilon,k,a} \log b +13\sqrt{L_{\varepsilon,k,a} \log b}, 
\end{align} 
\end{subequations}
$\tilde \cF^{\mathrm{NN}}(L_{\varepsilon,k,a})\equiv\tilde \cF^{\mathrm{NN}}(L_{\varepsilon,k,a},9,1,1)$ and $\cS \subseteq \tilde{\cS}_{k,d}(b,a, \varepsilon)$, the following hold for all $z \geq 0$:
\begin{samepage}
\begin{align}
&\sup_{\substack{(\rho,\sigma)  \in  \cS}}  \EE\!\left[ \abs{\sD_{\mathsf{M}}(\rho\Vert\sigma) -\hat\sD_{\mathsf{M},\Theta(\delta,\cS), \tilde \cF^{\mathrm{NN}}(L_{\varepsilon,k,a})}^n}\right] \notag \\
& \qquad \qquad \qquad \qquad \qquad \qquad  \leq 2(b+1) \varepsilon+ 2\delta   (b+\log  b )+48(b+1)  \mspace{1 mu} B_{\varepsilon,k,a,b}\mspace{1 mu} n^{-\frac 12}. \label{eq:deepnnminriskgen} \\
 & \sup_{\substack{(\rho,\sigma)  \in \mspace{2 mu}\cS}}    \mathbb{P}\bigg( \abs{\sD_{\mathsf{M}}(\rho\Vert\sigma) -\hat\sD_{\mathsf{M},\Theta(\delta,\cS), \tilde \cF^{\mathrm{NN}}(L_{\varepsilon,k,a})}^n} \geq  2(b+1) \varepsilon+ 2\delta   (b+\log  b )  \notag \\
 & \qquad \qquad\qquad\qquad \qquad \qquad \qquad   + 128(b+1)  \mspace{1 mu} \big(B_{\varepsilon,k,a,b}\mspace{1 mu} n^{-\frac 12} +z \log b \big)\bigg) \leq 2 e^{-n  z^2}. \label{eq:tailineqdeepnn} 
\end{align} 
\end{samepage}
\end{prop}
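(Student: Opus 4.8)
The plan is to deduce Proposition~\ref{Prop:deepnn} from Theorem~\ref{Thm:minimaxrisk-QNE}, applied with the one-dimensional embedding $e_1(i)=i/d$ and the truncated deep ReLU class $\cF=\tilde\cF^{\mathrm{NN}}(L_{\varepsilon,k,a})=\tilde\cF^{\mathrm{NN}}(L_{\varepsilon,k,a},9,1,1)$ from \eqref{eq:truncnnclass}. Everything reduces to two quantitative facts about this class: \textbf{(a)} the neural approximation error satisfies $\varepsilon(\cF,\cS)\le 2\varepsilon$ whenever $\cS\subseteq\tilde\cS_{k,d}(b,a,\varepsilon)$; and \textbf{(b)} $\cF$ satisfies Assumption~\ref{Assump1}, with $\kappa(\cF)$ controlled so that $(\log b)\,\kappa(\cF)\le B_{\varepsilon,k,a,b}$. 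Granting (a) and (b), inequality \eqref{eq:deepnnminriskgen} follows by substituting $\xi_{\cF,\cS,\delta,b}=\varepsilon(\cF,\cS)(b+1)+2\delta(b+\log b)\le 2(b+1)\varepsilon+2\delta(b+\log b)$ and $(\log b)\kappa(\cF)\le B_{\varepsilon,k,a,b}$ into \eqref{eq:minimax risk}; the deviation bound \eqref{eq:tailineqdeepnn} follows likewise from \eqref{eq:tailineQNEmrel}, since $128(b+1)(\log b)\big(n^{-1/2}\kappa(\cF)+z\big)=128(b+1)\big(n^{-1/2}(\log b)\kappa(\cF)+z\log b\big)\le 128(b+1)\big(B_{\varepsilon,k,a,b}\,n^{-1/2}+z\log b\big)$, so the enlarged threshold only shrinks the deviation event.

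For \textbf{(a)}, fix $(\rho,\sigma)\in\cS$ and an arbitrary enumeration $\{\lambda_i^\star\}_{i=1}^d$ of the eigenvalues of $H^\star(\rho,\sigma)$. Since $\cS\subseteq\tilde\cS_{k,d}(b,a,\varepsilon)$, definition \eqref{eq:densityclass} provides $p\in\cP_k(a)$ with $\max_{i\in[1:d]}\abs{p(i/d)-\lambda_i^\star}\le\varepsilon$. The analytic crux is that $p$, restricted to $[0,1]$, is approximated in sup-norm to within $\varepsilon$ by a ReLU network of width $9$, depth $L_{\varepsilon,k,a}=ck(\log\varepsilon^{-1}+\log k+\log a)$, and with all weights and biases bounded by $1$; this is a standard Yarotsky-type construction --- a constant-width depth-$O(\log\varepsilon^{-1})$ gadget for $x\mapsto x^2$, hence by polarization a product gadget, composed in Horner form using $O(k)$ such gadgets in sequence, each of depth $O(\log\varepsilon^{-1}+\log k+\log a)$ so as to control the accumulated error and to build up coefficient magnitudes up to $a$ under the unit-weight bound (trading weight magnitude for a constant width blow-up). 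Let $f$ be the resulting network and $\tilde f$ its truncation at $\pm\log b$. As $\tilde f$ is the clipping of $f$ onto $[-\log b,\log b]$, a $1$-Lipschitz map fixing that interval, and each $\lambda_i^\star$ lies in $[-\log b,\log b]$ (the projective measurement $\{P_i^\star\}_{i=1}^d$ is a quantum channel and the Thompson metric contracts under channels, so $\abs{\lambda_i^\star}\le T(\rho,\sigma)\le\log b$), we get $\max_i\abs{\tilde f(i/d)-\lambda_i^\star}\le\max_i\abs{f(i/d)-\lambda_i^\star}\le\max_i\abs{f(i/d)-p(i/d)}+\max_i\abs{p(i/d)-\lambda_i^\star}\le 2\varepsilon$. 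Since the enumeration and $(\rho,\sigma)$ were arbitrary, $\varepsilon(\cF,\cS)\le 2\varepsilon$. Truncation also forces $\cF\subseteq\cF_b$, and $\cF$ is closed (image of a compact parameter box under the composition of the continuous parameter-to-function map with the continuous clipping), so $\cF$ is an admissible neural class in the sense required by Theorem~\ref{Thm:minimaxrisk-QNE}.

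For \textbf{(b)}, I would establish a sup-norm covering bound of the form $\log N\big(\eta,\tilde\cF^{\mathrm{NN}}(L,9,1,1),\norm{\cdot}_{\infty,[0,1]}\big)\lesssim L^2+L\log(1/\eta)$ (which also bounds the $\norm{\cdot}_{\infty,\cX}$-covering number, as $\cX\subseteq[0,1]$): the class is parametrized by $O(L)$ coefficients in $[-1,1]$; the parameter-to-function map is Lipschitz in sup-norm with some constant $\Lambda$ at most exponential in $L$ (a product of $L$ layerwise factors of size $O(1)$, since the width is $9$ and the weights are bounded by $1$); and clipping is $1$-Lipschitz, so an $(\eta/\Lambda)$-net of the parameter box induces an $\eta$-net of $\cF$. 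Taking $\eta=2\epsilon\log b$ and using that $\log\log b\ge 0$ in the regime of interest (together with the diameter bound $2\log b$ of $\cF$ in $\norm{\cdot}_{\infty,\cX}$), the choice $\bar\kappa(\cF,\epsilon)\coloneqq c_1 L^2+c_2 L\log(1/\epsilon)$ verifies \eqref{eq:coventropy}; since $\int_0^1\sqrt{\log(1/\epsilon)}\,d\epsilon=\Gamma(3/2)=\tfrac{\sqrt\pi}{2}$, the entropy integral \eqref{eq:finentrint} evaluates to $\kappa(\cF)\le\sqrt{c_1}\,L+\tfrac{\sqrt{\pi c_2}}{2}\sqrt L<\infty$, so Assumption~\ref{Assump1} holds, and carrying the explicit constants through yields $(\log b)\kappa(\cF)\le 24L\log b+13\sqrt{L\log b}=B_{\varepsilon,k,a,b}$ with $L=L_{\varepsilon,k,a}$.

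The main obstacle is the pair of architecture-specific lemmas invoked above, not their assembly. The first is the deep ReLU construction achieving \emph{simultaneously} width exactly $9$, all parameters in $[-1,1]$, and depth $L_{\varepsilon,k,a}$ while $\varepsilon$-approximating an arbitrary degree-$k$, coefficient-bounded polynomial on $[0,1]$ --- the delicate points being the propagation of error through the Horner stages and the synthesis of large coefficients under the unit-weight bound. The second is a sup-norm covering-entropy bound for precisely this truncated class with constants tight enough to fit inside the stated $B_{\varepsilon,k,a,b}$. I would isolate each as an appendix lemma and keep the proof body to the substitution described in the first paragraph.
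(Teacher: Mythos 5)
Your proposal follows essentially the same route as the paper: reduce to Theorem~\ref{Thm:minimaxrisk-QNE} by showing $\varepsilon(\cF,\cS)\le 2\varepsilon$ (polynomial from the definition of $\tilde\cS_{k,d}(b,a,\varepsilon)$, then a width-$9$, unit-weight, depth-$L_{\varepsilon,k,a}$ ReLU approximant of that polynomial, with truncation harmless because the targets lie in $[-\log b,\log b]$) and bounding $\kappa(\cF)$ via a sup-norm covering entropy of order $L^2+L\log(1/\epsilon)$. The only difference is that the two ``architecture-specific lemmas'' you propose to prove in appendices are exactly the results the paper imports off the shelf --- the polynomial-approximation statement is \cite[Proposition~III.5]{Dennis-2021} (indeed the unspecified constant $c$ in $L_{\varepsilon,k,a}$ is inherited from there) and the covering bound is \cite[Lemma~5]{Schmidt-Hieber-2020}, whose explicit constants, combined with the $\log x\le x$ trick at scale $2\epsilon\log b$, are what produce the specific $24$ and $13$ in $B_{\varepsilon,k,a,b}$.
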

The proof of Proposition~\ref{Prop:deepnn} (see Section~\ref{Sec:Prop:deepnn-proof}) follows from Theorem~\ref{Thm:minimaxrisk-QNE} by using the neural  class $\tilde \cF^{\mathrm{NN}}(L_{\varepsilon,k,a})$, which is rich enough to approximate polynomials of degree $k$ and coefficients uniformly bounded by $a$, up to an error of $\varepsilon$, thanks to  the results in~\cite{Dennis-2021}. The unspecified constant $c$ in the definition of $L_{\varepsilon,k,a}$ is the same one that appears in \cite[Proposition~III.5]{Dennis-2021}. Leveraging this approximation result and upper bounding $\kappa\big(\tilde \cF^{\mathrm{NN}}(L_{\varepsilon,k,a})\big)$ leads to the claim in~\eqref{eq:tailineqdeepnn}. The advantage of Proposition~\ref{Prop:deepnn} compared to Corollary~\ref{Cor-shallownn} is that the bounds are independent of the dimension $d$, albeit over a sub-class, $\tilde{\cS}_{k,d}(b,a, \varepsilon)$,  of $\cS_d(b)$.  
\subsection{Measured R\'{e}nyi Relative  entropy} 
The following analogue of Theorem~\ref{Thm:minimaxrisk-QNE} holds for the measured R\'{e}nyi relative  entropy.

\begin{theorem}[Performance guarantees for measured R\'{e}nyi relative entropy] \label{Thm:minimaxrisk-QNE-Renyi}
Let $\alpha \in (0,1) \cup (1,\infty)$.  Consider a QNE for estimating $\sD_{\mathsf{M,\alpha}}(\rho\Vert\sigma)$ with classical neural net $\cF \subseteq \cF_b$ satisfying Assumption~\ref{Assump1} and a PQC $\cU_{\alpha}(\delta,\cS)$ parametrized by $\Theta_{\alpha}(\delta,\cS)$. Setting 
\begin{align}
 \xi_{\cF,\cS,\delta,b,\alpha} \coloneqq \alpha \left(b^{\abs{\alpha-1}}+b^{\alpha}\right) \varepsilon_{\alpha}(\cF,\cS) +2\left(\frac{\alpha}{\abs{\alpha-1}} b^{\abs{\alpha-1}}+b^{\alpha}\right)\delta, \notag 
\end{align}
we have for $\cS \subseteq \cS_d(b)$ and $z \geq 0$ that
\begin{align}
& \sup_{(\rho,\sigma) \in \cS}   \EE\!\left[ \abs{\sD_{\mathsf{M},\alpha}(\rho\Vert\sigma) -\hat{\sD}_{\mathsf{M},\alpha,\Theta_{\alpha}(\delta,\cS), \cF}^n}\right]  \leq  \xi_{\cF,\cS,\delta,b,\alpha}  + 48\alpha \big(b^{\alpha}+b^{\abs{\alpha-1}}\big)(\log  b) \mspace{2 mu}  \kappa(\cF) n^{-\frac 12},\label{eq:minimax risk:Renyi} \\
&  \sup_{(\rho,\sigma) \in \cS}   \PP\bigg( \abs{\sD_{\mathsf{M},\alpha}(\rho\Vert\sigma) -\hat{\sD}_{\mathsf{M},\alpha,\Theta_{\alpha}(\delta,\cS), \cF}^n} \geq  \xi_{\cF,\cS,\delta,b,\alpha}  + 128\alpha \big(b^{\alpha}+b^{\abs{\alpha-1}}\big)(\log  b) \mspace{2 mu} \notag \\
& \qquad \qquad  \qquad \qquad \qquad \qquad \qquad \qquad \qquad \qquad \qquad \times \big( \kappa(\cF) n^{-\frac 12}+z\big) \bigg) \leq 2 e^{-nz^2}. \label{eq:tailineqmeasrenyi}
\end{align}
\end{theorem}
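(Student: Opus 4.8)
The plan is to mirror the proof of Theorem~\ref{Thm:minimaxrisk-QNE}, decomposing the error of $\hat{\sD}^n_{\mathsf{M},\alpha,\Theta_\alpha(\delta,\cS),\cF}$ into an \emph{approximation} part and a \emph{statistical} part, while carrying the two outer logarithms of the R\'enyi variational formula~\eqref{eq:renyi_rel_ent_variational} through all estimates. Fix $(\rho,\sigma)\in\cS$; for a circuit parameter $\bm\theta$ and $f\in\cF$ (identified with its values on $\cX=\{e_m(i)\}_{i=1}^d$) set $A_{\bm\theta}(f)\coloneqq\sum_i e^{(\alpha-1)f(e_m(i))}p^\rho_{\bm\theta}(i)$ and $B_{\bm\theta}(f)\coloneqq\sum_i e^{\alpha f(e_m(i))}q^\sigma_{\bm\theta}(i)$, with $p^\rho_{\bm\theta},q^\sigma_{\bm\theta}$ as in~\eqref{eq:measdist}, so the population objective is $\mathsf{d}_\alpha(\bm\theta,f)\coloneqq\frac{\alpha}{\alpha-1}\log A_{\bm\theta}(f)-\log B_{\bm\theta}(f)$ and $\hat{\sD}^n_{\mathsf{M},\alpha,\bm\theta,\cF}=\sup_f[\text{empirical }\mathsf{d}_\alpha(\bm\theta,f)]$ is its version with $p^\rho_{\bm\theta},q^\sigma_{\bm\theta}$ replaced by the empirical measures $\hat p_n,\hat q_n$ of the $n$ measurement outcomes. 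Since $U(\bm\theta)\,\mathrm{diag}(f\circ e_m)\,U(\bm\theta)^\dagger$ is Hermitian and equals $H$ in~\eqref{eq:renyi_rel_ent_variational}, $\mathsf{d}_\alpha(\bm\theta,f)\le\sD_{\mathsf{M},\alpha}(\rho\Vert\sigma)$ for every $(\bm\theta,f)$; combined with $\hat{\sD}^n_{\mathsf{M},\alpha,\Theta_\alpha(\delta,\cS),\cF}=\sup_{\bm\theta}\sup_f[\text{empirical }\mathsf{d}_\alpha]$, the problem reduces to (i) producing one pair $(\bm\theta^\star,f^\star)$ with $\mathsf{d}_\alpha(\bm\theta^\star,f^\star)\ge\sD_{\mathsf{M},\alpha}(\rho\Vert\sigma)-\xi_{\cF,\cS,\delta,b,\alpha}$, and (ii) bounding $\sup_{\bm\theta}\sup_f\big|[\text{empirical }\mathsf{d}_\alpha]-\mathsf{d}_\alpha(\bm\theta,f)\big|$ in expectation and with sub-Gaussian tails.

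For (i) I would take the optimizer $H^\star_\alpha(\rho,\sigma)=\sum_i\lambda^\star_{i,\alpha}P^\star_{i,\alpha}$ of~\eqref{eq:opteigvalues} and use the Thompson bound $T(\rho,\sigma)\le\log b$: it forces $b^{-1}\sigma\le\rho\le b\,\sigma$, hence $\Tr[P^\star_{i,\alpha}\rho]/\Tr[P^\star_{i,\alpha}\sigma]\in[b^{-1},b]$ and $|\lambda^\star_{i,\alpha}|\le\log b$, so that $e^{(\alpha-1)\lambda^\star_{i,\alpha}}\in[b^{-|\alpha-1|},b^{|\alpha-1|}]$ and $e^{\alpha\lambda^\star_{i,\alpha}}\in[b^{-\alpha},b^\alpha]$; the same ranges then bound $A_{\bm\theta}(f)$ and $B_{\bm\theta}(f)$ for every $f$ with $\norm{f}_{\infty,\cX}\le\log b$. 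I would choose $f^\star\in\cF$ attaining the minimum in~\eqref{eq:approxcond}, so $|f^\star(e_m(i))-\lambda^\star_{i,\alpha}|\le\varepsilon_\alpha(\cF,\cS)$, and $U(\bm\theta^\star)\in\cU_\alpha(\delta,\cS)$ with $\norm{U^\star-U(\bm\theta^\star)}\le\delta$ as in~\eqref{eq:unitaryapprox}, and then propagate two perturbations. Replacing $U^\star$ by $U(\bm\theta^\star)$ changes the measurement distributions by $\norm{p^{U^\star}_\rho-p^{U(\bm\theta^\star)}_\rho}_1\le 2\delta$ (from $\norm{U^\star|i\rangle\langle i|(U^\star)^\dagger-U(\bm\theta^\star)|i\rangle\langle i|U(\bm\theta^\star)^\dagger}$-type estimates together with contractivity of the dephasing channel), while replacing $\lambda^\star_{i,\alpha}$ by $f^\star(e_m(i))$ perturbs the two families of exponentials by at most $|\alpha-1|b^{|\alpha-1|}\varepsilon_\alpha(\cF,\cS)$ and $\alpha\,b^\alpha\varepsilon_\alpha(\cF,\cS)$. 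Because $A_{\bm\theta}(f),B_{\bm\theta}(f)$ remain in $[b^{-|\alpha-1|},b^{|\alpha-1|}]$ and $[b^{-\alpha},b^\alpha]$, the outer logarithms are Lipschitz there with constants $b^{|\alpha-1|}$ and $b^\alpha$; assembling these estimates, together with the prefactor $\alpha/|\alpha-1|$, should give $|\mathsf{d}_\alpha(\bm\theta^\star,f^\star)-\sD_{\mathsf{M},\alpha}(\rho\Vert\sigma)|\le\xi_{\cF,\cS,\delta,b,\alpha}$.

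For (ii), working at a fixed $\bm\theta$ and using that $A_{\bm\theta}(f),B_{\bm\theta}(f)$ and their empirical versions all lie in the above ranges, I would bound $|[\text{empirical }\mathsf{d}_\alpha]-\mathsf{d}_\alpha(\bm\theta,f)|$ by $\tfrac{\alpha}{|\alpha-1|}b^{|\alpha-1|}\,|\hat A_{\bm\theta}(f)-A_{\bm\theta}(f)|+b^\alpha\,|\hat B_{\bm\theta}(f)-B_{\bm\theta}(f)|$. Each term is the deviation of an empirical mean, uniformly over the function class $\{e^{(\alpha-1)f}:f\in\cF\}$ or $\{e^{\alpha f}:f\in\cF\}$; since $x\mapsto e^{\beta x}$ has Lipschitz constant $|\beta|b^{|\beta|}$ on $[-\log b,\log b]$, the covering entropy of these classes is controlled by that of $\cF$ via~\eqref{eq:coventropy}, and the increments of the associated empirical processes are sub-Gaussian (by bounded differences/Hoeffding) with scale $n^{-1/2}$ times a sup-norm distance. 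A Dudley-type maximal inequality, together with the finite-entropy condition~\eqref{eq:finentrint}, then yields $\EE\sup_f|\hat A_{\bm\theta}(f)-A_{\bm\theta}(f)|$ and $\EE\sup_f|\hat B_{\bm\theta}(f)-B_{\bm\theta}(f)|$ of order $n^{-1/2}(\log b)\kappa(\cF)$ up to the relevant $\alpha,b$ factors, uniformly over the finite set $\Theta_\alpha(\delta,\cS)$, so the supremum over $\bm\theta$ is handled within the same chaining framework. Adding the two contributions gives~\eqref{eq:minimax risk:Renyi}, and the deviation bound~\eqref{eq:tailineqmeasrenyi} follows by writing the absolute error as a supremum of centered separable sub-Gaussian processes indexed by $\cF$ and invoking a chaining tail inequality (e.g.~\cite{VanHandel-book}), exactly as for~\eqref{eq:tailineQNEmrel}.

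The main obstacle I anticipate — and the only genuine departure from the proof of Theorem~\ref{Thm:minimaxrisk-QNE} — is the bookkeeping forced by the two outer logarithms: one must first certify, using $\cF\subseteq\cF_b$ together with the Thompson bound, that every argument of a logarithm lies in a range of the form $[b^{-\#},b^{\#}]$ so that $\log$ acts as a Lipschitz map (sharpened, in the statistical part, by concentration of the empirical means, which is why a linear-functional argument has to be replaced by a linearization carried out uniformly over $\cF$), and then carefully track how the two distinct exponents $\alpha-1$ and $\alpha$ (with $\alpha-1$ of either sign) and the prefactor $\alpha/(\alpha-1)$ propagate through both the approximation and the chaining estimates; this is precisely what produces the $\alpha$, $b^\alpha$, $b^{|\alpha-1|}$, and $\alpha/|\alpha-1|$ factors appearing in $\xi_{\cF,\cS,\delta,b,\alpha}$ and in the bounds of the theorem.
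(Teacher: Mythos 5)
Your proposal follows essentially the same route as the paper's proof: the same approximation/statistical decomposition, the same choice of $(f^{\star},\bm{\theta}^{\star})$ from \eqref{eq:approxcond} and \eqref{eq:unitaryapprox}, symmetrization plus Dudley's entropy integral for the expectation bound, and the chaining tail inequality of \cite[Theorem 5.29]{VanHandel-book} for \eqref{eq:tailineqmeasrenyi}. The one substantive difference is how the $b$-dependence is tracked through the two outer logarithms. The paper bounds the approximation error directly by
\begin{align}
\frac{\alpha}{\abs{\alpha-1}}\norm{e^{(\alpha-1)H_{\alpha}^{\star}}-e^{(\alpha-1)H_{f^{\star},\bm{\theta}^{\star}}}}+\norm{e^{\alpha H_{\alpha}^{\star}}-e^{\alpha H_{f^{\star},\bm{\theta}^{\star}}}}, \notag
\end{align}
and then applies the operator-norm perturbation lemma (Lemma~\ref{Lem:opnorm-diff}) to each exponential, so each term carries exactly one power of $b^{\alpha}$ (resp.\ $b^{\abs{\alpha-1}}$). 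You instead perturb the induced measurement distributions in total variation (your $\le 2\delta$ estimate is correct, via $\sum_i\abs{\langle i|((U^{\star})^{\dag}\rho U^{\star}-U^{\dag}\rho U)|i\rangle}\le\norm{(U^{\star})^{\dag}\rho U^{\star}-U^{\dag}\rho U}_1\le 2\delta$) and then compose the Lipschitz constant of the outer logarithm ($b^{\abs{\alpha-1}}$ or $b^{\alpha}$ on the certified ranges) with the Lipschitz constants of $e^{(\alpha-1)x}$ and $e^{\alpha x}$ (which are $\abs{\alpha-1}b^{\abs{\alpha-1}}$ and $\alpha b^{\alpha}$). Taken literally this stacks two powers of $b$ and yields prefactors of order $b^{2\abs{\alpha-1}}$ and $b^{2\alpha}$ in place of the single powers appearing in $\xi_{\cF,\cS,\delta,b,\alpha}$; the same double-counting recurs in your statistical bound, where the $\tfrac{\alpha}{\abs{\alpha-1}}b^{\abs{\alpha-1}}$ linearization factor multiplies the sub-Gaussian scale of $\{e^{(\alpha-1)f}\colon f\in\cF\}$, which already carries a $b^{\abs{\alpha-1}}$. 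So your argument establishes the theorem with the correct dependence on $n$, $\delta$, $\varepsilon_{\alpha}(\cF,\cS)$, and $\kappa(\cF)$, but with a strictly worse $b$-dependence than stated; to recover the stated constants you would need to bound the difference of the trace functionals by operator H\"{o}lder \emph{before} invoking any Lipschitz property of the logarithm, as the paper does in the step leading to \eqref{eq:approxbnd-Renyi}. This is a bookkeeping discrepancy rather than a conceptual flaw, but as written the proposal does not reproduce the exact bounds \eqref{eq:minimax risk:Renyi}--\eqref{eq:tailineqmeasrenyi}.
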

The proof of Theorem~\ref{Thm:minimaxrisk-QNE-Renyi} is analogous to that of Theorem~\ref{Thm:minimaxrisk-QNE}, and is given in Section~\ref{Sec:Thm:minimaxrisk-QNE-Renyi-proof} below.

\medskip

Specializing Theorem~\ref{Thm:minimaxrisk-QNE-Renyi} to a QNE employing the shallow neural net used in Corollary~\ref{Cor-shallownn}, we obtain the  result stated below (proof is omitted as it is similar to Corollary~\ref{Cor-shallownn}). 
\begin{cor}[Performance guarantees with shallow neural net] \label{Cor-shallownn-Renyi}
Let $\alpha \in (0,1) \cup (1,\infty)$. The following hold with $\cS \subseteq \cS_d(b)$ for all $z \geq 0$:
\begin{subequations}
\begin{align}
 \sup_{(\rho,\sigma) \in \cS}  & \EE\!\left[ \abs{\sD_{\mathsf{M},\alpha}(\rho\Vert\sigma) -\hat\sD_{\mathsf{M},\alpha,\Theta_{\alpha}(\delta,\cS),\tilde \cF^{\mathrm{SNN}}(d,\log  b )}^n}\right] \leq 2 \left(\frac{\alpha}{\abs{\alpha-1}} b^{\abs{\alpha-1}}+b^{\alpha}\right)\delta  \notag \\
 &\qquad \qquad \qquad \qquad \qquad \qquad \qquad \qquad \qquad +96\alpha \big(b^{\alpha}+b^{\abs{\alpha-1}}\big) (\log  b)  \mspace{1 mu}  d^{\frac 12} n^{-\frac 12}, \label{eq:expabserrren-shallow-net} \\
  \sup_{(\rho,\sigma) \in \cS}   &\PP\Bigg(  \abs{\sD_{\mathsf{M},\alpha}(\rho\Vert\sigma) -\hat\sD_{\mathsf{M},\alpha,\Theta_{\alpha}(\delta,\cS),\tilde \cF^{\mathrm{SNN}}(d,\log  b )}^n}  \geq 2\left(\frac{\alpha}{\abs{\alpha-1}} b^{\abs{\alpha-1}}+b^{\alpha}\right)\delta  \notag \\
 &\qquad \qquad \qquad   \qquad \qquad +128\alpha \big(b^{\alpha}+b^{\abs{\alpha-1}} \big) (\log  b)  \big( 2d^{\frac 12} n^{-\frac 12}+z\big)\Bigg) \leq 2e^{-nz^2}. 
\end{align}
\end{subequations}
\end{cor}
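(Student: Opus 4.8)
The plan is to deduce Corollary~\ref{Cor-shallownn-Renyi} from Theorem~\ref{Thm:minimaxrisk-QNE-Renyi} in exactly the same way Corollary~\ref{Cor-shallownn} is deduced from Theorem~\ref{Thm:minimaxrisk-QNE}: instantiate the theorem with the canonical embedding $e_d\colon i\mapsto\ket i$ and the shallow class $\cF=\tilde\cF^{\mathrm{SNN}}(d,\log b)$ (leaving the PQC $\cU_{\alpha}(\delta,\cS)$ and its parameter $\delta$ untouched), and then verify the three structural facts that make the general bound concrete: (a) $\cF$ is a closed subset of $\cF_b$ meeting Assumption~\ref{Assump1}; (b) the R\'enyi eigenvalue-approximation error $\varepsilon_\alpha(\cF,\cS)$ from \eqref{eq:approxcond} vanishes for every $\cS\subseteq\cS_d(b)$ and every order $\alpha>0$; and (c) the entropy integral satisfies $\kappa(\cF)\le 2\sqrt d$. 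Once these hold, the two displayed inequalities follow by direct substitution into \eqref{eq:minimax risk:Renyi} and \eqref{eq:tailineqmeasrenyi}.

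For (b), I would fix $(\rho,\sigma)\in\cS_d(b)$, take any $H_\alpha^{\star}(\rho,\sigma)$ satisfying \eqref{eq:opteigvalues}, and let $\{\lambda_{i,\alpha}^{\star}\}_{i=1}^d$ be an arbitrary enumeration of its eigenvalues, so $\lambda_{i,\alpha}^{\star}=\log\big(\operatorname{Tr}[P_{i,\alpha}^{\star}\rho]/\operatorname{Tr}[P_{i,\alpha}^{\star}\sigma]\big)$ with $P_{i,\alpha}^{\star}=\ket{\psi_i}\!\bra{\psi_i}$. Writing $\operatorname{Tr}[P_{i,\alpha}^{\star}\rho]=\langle\sigma^{1/2}\psi_i\vert\,\sigma^{-1/2}\rho\,\sigma^{-1/2}\,\vert\sigma^{1/2}\psi_i\rangle\le\norm{\sigma^{-1/2}\rho\,\sigma^{-1/2}}\operatorname{Tr}[P_{i,\alpha}^{\star}\sigma]$, together with the inequality obtained by exchanging $\rho$ and $\sigma$, gives $\abs{\lambda_{i,\alpha}^{\star}}\le T(\rho,\sigma)\le\log b$ uniformly over $\cS_d(b)$. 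Since $\langle j,e_d(i)\rangle=\delta_{ij}\ge 0$, the function $f_\star(x)=\sum_{j=1}^d\lambda_{j,\alpha}^{\star}\,\varphi_{\mathrm{R}}(\langle j,x\rangle)$ lies in $\tilde\cF^{\mathrm{SNN}}(d,\log b)$ (its coefficients are bounded in modulus by $\log b$) and interpolates the eigenvalues exactly, $f_\star(e_d(i))=\lambda_{i,\alpha}^{\star}$, so the double maximum in \eqref{eq:approxcond} is $0$. The same evaluation shows $\norm{f}_{\infty,\cX}\le\log b$ for every member of the class, hence $\cF\subseteq\cF_b$, and closedness follows from compactness of the coefficient box $[-\log b,\log b]^d$; this settles (a) modulo the entropy estimate in (c).

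For (c), I would note that on $\cX=\{e_d(i)\}_{i=1}^d$ the parametrization $\bm\beta\mapsto f_{\bm\beta}$ is an $\ell^\infty$-isometry, $\norm{f_{\bm\beta}-f_{\bm\beta'}}_{\infty,\cX}=\norm{\bm\beta-\bm\beta'}_\infty$, so an $\ell^\infty$ cover of $[-\log b,\log b]^d$ at scale $2\epsilon\log b$ gives $\log N(2\epsilon\log b,\cF,\norm{\cdot}_{\infty,\cX})\le d\log(2/\epsilon)=:\bar\kappa(\cF,\epsilon)$ and hence $\kappa(\cF)=\int_0^1\sqrt{\bar\kappa(\cF,\epsilon)}\,d\epsilon\le\sqrt\pi\,\sqrt d\le 2\sqrt d$ --- precisely the bound already established in the proof of Corollary~\ref{Cor-shallownn}, so it can simply be quoted from there. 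Finally I would substitute: $\varepsilon_\alpha(\cF,\cS)=0$ collapses $\xi_{\cF,\cS,\delta,b,\alpha}$ to $2\big(\tfrac{\alpha}{\abs{\alpha-1}}b^{\abs{\alpha-1}}+b^{\alpha}\big)\delta$, and $\kappa(\cF)\le 2\sqrt d$ turns the statistical term $48\alpha(b^{\alpha}+b^{\abs{\alpha-1}})(\log b)\kappa(\cF)n^{-1/2}$ of \eqref{eq:minimax risk:Renyi} into $96\alpha(b^{\alpha}+b^{\abs{\alpha-1}})(\log b)d^{1/2}n^{-1/2}$, which is \eqref{eq:expabserrren-shallow-net}; the deviation inequality drops out of \eqref{eq:tailineqmeasrenyi} under the identical substitutions. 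The only step that is genuinely new relative to the $\alpha=1$ case is the uniform eigenvalue bound $\abs{\lambda_{i,\alpha}^{\star}}\le\log b$ for the R\'enyi optimizer, but since the representation \eqref{eq:opteigvalues} holds verbatim for all $\alpha>0$ this is immediate, so I anticipate no real obstacle --- the argument is routine once that observation is recorded.
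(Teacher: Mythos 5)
Your proposal is correct and follows exactly the route the paper intends (the paper omits the proof, noting it is analogous to Corollary~\ref{Cor-shallownn}): instantiate Theorem~\ref{Thm:minimaxrisk-QNE-Renyi} with $\tilde\cF^{\mathrm{SNN}}(d,\log b)$, verify $\varepsilon_\alpha(\cF,\cS)=0$ via the eigenvalue bound $\abs{\lambda_{i,\alpha}^{\star}}\le\log b$ (which you reprove directly but could simply quote from Lemma~\ref{Lem:bndtraceratio}, stated for all $\alpha>0$), bound $\kappa(\cF)\le 2\sqrt d$, and substitute.
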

To benchmark the above performance guarantees, one may compare it with the following lower bound on the minimax  which holds when $d b^{\alpha} \lesssim n \log d\,$,  $b \geq (\log d)^2 \vee 2$ and $\alpha >1$: 
\begin{align}
R_{n,\alpha}^{\star}\big(\cS_d(b)\big) \gtrsim_{\alpha} \frac{\sqrt{b}}{\sqrt{n}}+\frac{d b}{n \log d}.  \label{eq:lwrbndminmaxrisk-ren} 
\end{align}
This claim (see Appendix \ref{App:lowerbndest} for a proof) follows from the lower bound on the minimax risk for classical $\alpha$-divergence estimation obtained in \cite{Yanjun-2020-alphadiv},  by tweaking the proof therein to account for the class $\cS_d(b)$ and noting that $R_{n,\alpha}^{\star}\big(\cS_d(b)\big)$ can be lower bounded in terms of this minimax risk  (up to multiplicative constants which depend on $\alpha$ and $b$). 
On the other hand, given $n$ copies of the quantum states $\rho,\sigma$, the expected absolute error achieved by QNE for estimating $\sD_{\mathsf{M},\alpha}(\rho\Vert\sigma)$ as quantified in \eqref{eq:expabserrren-shallow-net} is $O_{\alpha,b}\big(D^{1/2}n^{-1/2}\big)$, where $D=\Theta_{\alpha}\big(\delta,\cS_d(b)\big)\,d$.  
Comparing these two bounds, it follows that the QNE achieves the  parametric rate of convergence in $n$, meaning it is minimax optimal for estimating measured R\'{e}nyi relative  entropy over $\cS_d(b)$.  

\medskip
The next proposition also follows from Theorem~\ref{Thm:minimaxrisk-QNE-Renyi} by specializing to a QNE realized by the deep neural net from Proposition~\ref{Prop:deepnn}.

\begin{prop}[Performance guarantees with deep neural net] \label{Prop:deepnn-Renyi}  Let $\alpha \in (0,1) \cup (1,\infty)$, let $\tilde{\cS}_{k,d,\alpha}(b,a,\varepsilon)$ be the class of density operator pairs defined in~\eqref{eq:densityclass}, and let $\tilde \cF^{\mathrm{NN}}(L,K,M,1)$ be the neural class given in~\eqref{eq:truncnnclass}. With $L_{\varepsilon,k,a}$, $B_{\varepsilon,k,a,b} $ as defined in~\eqref{eq:deepnnbndparams} and $\tilde \cF^{\mathrm{NN}}(L_{\varepsilon,k,a})\equiv \tilde \cF^{\mathrm{NN}}(L_{\varepsilon,k,a},9,1,1)$,  the following holds for all $\cS \subseteq \tilde{\cS}_{k,d,\alpha}(b,a,\varepsilon)$ and $z \geq 0$:
\begin{align}
&\sup_{(\rho,\sigma) \in   \cS}   \EE\!\left[ \abs{\sD_{\mathsf{M},\alpha}(\rho\Vert\sigma) -\hat\sD_{\mathsf{M},\alpha,\Theta_{\alpha}(\delta,\cS), \tilde \cF^{\mathrm{NN}}(L_{\varepsilon,k,a})}^n}\right]   \leq 2 \alpha \big(b^{\alpha}+b^{\abs{\alpha-1}}\big) \big(\varepsilon+24 B_{\varepsilon,k,a,b}n^{-\frac 12}\big)\notag \\
& \qquad \qquad\qquad\qquad\qquad\qquad\qquad\qquad\qquad\qquad\qquad\qquad \quad +2\left(\frac{\alpha}{\abs{\alpha-1}} b^{\abs{\alpha-1}}+b^{\alpha}\right)\delta, \notag \\
  &  \sup_{\substack{(\rho,\sigma) \in \cS }} \mspace{-4 mu}  \PP\Bigg( \mspace{-2 mu}\abs{\sD_{\mathsf{M},\alpha}(\rho\Vert\sigma) -\hat\sD_{\mathsf{M},\alpha,\Theta_{\alpha}(\delta,\cS), \tilde \cF^{\mathrm{NN}}(L_{\varepsilon,k,a})}^n} \geq   2\mspace{-2 mu}\left(\frac{\alpha}{\abs{\alpha-1}} b^{\abs{\alpha-1}}+b^{\alpha}\right)\delta   \notag \\
  & \qquad \qquad\qquad\qquad\qquad \quad  +2\alpha \Big(b^{\alpha}+b^{\abs{\alpha-1}}\Big) \left(\varepsilon+ 64 B_{\varepsilon,k,a,b} n^{-\frac 12}+64 z \log b\right)\mspace{-6 mu}\Bigg) \leq 2 e^{-nz^2}.  \notag
\end{align}  
\end{prop}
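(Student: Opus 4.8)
The plan is to obtain Proposition~\ref{Prop:deepnn-Renyi} by instantiating Theorem~\ref{Thm:minimaxrisk-QNE-Renyi} with the truncated deep ReLU class $\cF = \tilde \cF^{\mathrm{NN}}(L_{\varepsilon,k,a},9,1,1)$, mirroring exactly how Proposition~\ref{Prop:deepnn} specializes Theorem~\ref{Thm:minimaxrisk-QNE} in the $\alpha=1$ case. Two estimates are needed: a control of the neural approximation error $\varepsilon_\alpha(\cF,\cS)$ over $\cS \subseteq \tilde{\cS}_{k,d,\alpha}(b,a,\varepsilon)$, and a bound on the entropy integral $\kappa(\cF)$ certifying Assumption~\ref{Assump1}. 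Both are essentially already established in the proof of Proposition~\ref{Prop:deepnn} and carry over unchanged, since neither depends on $\alpha$; the $\delta$-covering PQC $\cU_\alpha(\delta,\cS)$ also enters in the same way, contributing the term $2\big(\tfrac{\alpha}{\abs{\alpha-1}}b^{\abs{\alpha-1}}+b^\alpha\big)\delta$ of $\xi_{\cF,\cS,\delta,b,\alpha}$ verbatim.

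First I would note that $\cF \subseteq \cF_b$, since every element of $\tilde \cF^{\mathrm{NN}}(L_{\varepsilon,k,a},9,1,1)$ is saturated at $\pm\log b$. For the approximation error, fix $(\rho,\sigma)\in\cS$ and any enumeration $\{\lambda^\star_{i,\alpha}\}_{i=1}^d$ of $\Lambda^\star_\alpha(\rho,\sigma)$. By the definition of $\tilde{\cS}_{k,d,\alpha}(b,a,\varepsilon)$ in~\eqref{eq:densityclass}, there is $p\in\cP_k(a)$ with $\max_i\abs{p(i/d)-\lambda^\star_{i,\alpha}}\leq\varepsilon$, and the value $L_{\varepsilon,k,a}$ of~\eqref{eq:deepnnbndparams} is precisely the one for which \cite[Proposition~III.5]{Dennis-2021} supplies $f\in\cF^{\mathrm{NN}}(L_{\varepsilon,k,a},9,1,1)$ with $\norm{f-p}_{\infty,[0,1]}\leq\varepsilon$. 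Its truncation $\tilde f$ lies in $\cF$, and since $\abs{\lambda^\star_{i,\alpha}}\leq\log b$ for $(\rho,\sigma)\in\cS_d(b)$ — each $\lambda^\star_{i,\alpha}$ being the log-likelihood ratio of the two outcomes of the POVM $\{P^\star_{i,\alpha},I-P^\star_{i,\alpha}\}$, whose modulus is at most $T(\rho,\sigma)\leq\log b$ by contractivity of the Thompson metric under channels — truncation does not increase the error at any embedded point. The triangle inequality then yields $\max_i\abs{\tilde f(i/d)-\lambda^\star_{i,\alpha}}\leq 2\varepsilon$, so $\varepsilon_\alpha(\cF,\cS)\leq 2\varepsilon$. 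For $\kappa(\cF)$ I would reuse the covering-entropy computation from the proof of Proposition~\ref{Prop:deepnn}: a Lipschitz-in-parameters argument for width-$9$, depth-$L_{\varepsilon,k,a}$, bounded-weight ReLU nets bounds $\log N(2\epsilon\log b,\cF,\norm{\cdot}_{\infty,\cX})$ by a quantity of order $L_{\varepsilon,k,a}\log(1/\epsilon)$, whence $\kappa(\cF)<\infty$ and, after the identical constant bookkeeping, $(\log b)\,\kappa(\cF)\leq B_{\varepsilon,k,a,b}$.

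Finally I would substitute $\varepsilon_\alpha(\cF,\cS)\leq 2\varepsilon$ and $(\log b)\kappa(\cF)\leq B_{\varepsilon,k,a,b}$ into $\xi_{\cF,\cS,\delta,b,\alpha}$ and into the bounds~\eqref{eq:minimax risk:Renyi} and~\eqref{eq:tailineqmeasrenyi} of Theorem~\ref{Thm:minimaxrisk-QNE-Renyi}, keeping the deviation parameter there equal to $z$. Collecting the common prefactor $2\alpha(b^\alpha+b^{\abs{\alpha-1}})$ in the approximation and statistical terms — so that $\alpha(b^{\abs{\alpha-1}}+b^\alpha)\varepsilon_\alpha(\cF,\cS)$ becomes $2\alpha(b^\alpha+b^{\abs{\alpha-1}})\varepsilon$, $48\alpha(b^\alpha+b^{\abs{\alpha-1}})(\log b)\kappa(\cF)$ becomes $2\alpha(b^\alpha+b^{\abs{\alpha-1}})\cdot 24 B_{\varepsilon,k,a,b}$, and $128\alpha(b^\alpha+b^{\abs{\alpha-1}})(\log b)\big(\kappa(\cF)n^{-1/2}+z\big)$ becomes $2\alpha(b^\alpha+b^{\abs{\alpha-1}})\big(64 B_{\varepsilon,k,a,b}n^{-1/2}+64 z\log b\big)$ — reproduces the two displayed inequalities. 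I expect the only step that is not a direct citation to be the $\kappa(\cF)$ bound with the precise constant $B_{\varepsilon,k,a,b}$; but that computation is identical to the one already carried out for Proposition~\ref{Prop:deepnn}, so the present result is really the mechanical $\alpha\neq 1$ transcription of that argument through Theorem~\ref{Thm:minimaxrisk-QNE-Renyi}, with the $\alpha$-dependent prefactors carried along unchanged.
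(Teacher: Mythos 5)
Your proposal is correct and follows exactly the route the paper intends: the paper omits this proof, stating that it "follows similarly to the proof of Proposition~\ref{Prop:deepnn}," i.e., one instantiates Theorem~\ref{Thm:minimaxrisk-QNE-Renyi} with $\varepsilon_\alpha(\cF,\cS)\le 2\varepsilon$ and $(\log b)\,\kappa(\cF)\le B_{\varepsilon,k,a,b}$ carried over from the proof of Proposition~\ref{Prop:deepnn}, and your constant bookkeeping matches the stated bounds. The only cosmetic difference is that you justify $\abs{\lambda^\star_{i,\alpha}}\le\log b$ via Thompson-metric contraction under the measurement channel rather than via the max-relative-entropy variational characterization used in Lemma~\ref{Lem:bndtraceratio}, but both arguments are valid.
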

The proof of Proposition~\ref{Prop:deepnn-Renyi} follows similarly to the proof of Proposition~\ref{Prop:deepnn}, and hence is omitted.
\section{Copy Complexity for Quantum Neural Estimation of Entropies}
\label{Sec:copycomp}
Here, we examine the copy complexity of QNE, focusing on measured relative entropy. The case of measured R\'{e}nyi relative entropy is similar.  As mentioned earlier, the number of copies of $\rho$ and $\sigma$ consumed by QNE in the worst case  scales proportionally to $|\Theta(\delta,\cS)|$. Hence, it is of interest to understand how the number of parameters of efficient PQCs scales as $\delta$ shrinks, since this dependence directly affects the copy complexity. To get more insight into this aspect, consider the setting where   
 $d=2^N$, with $N$ denoting  the number of qubits. An application of the Solovay--Kitaev theorem (see  Appendix~\ref{Sec:eff-imp-unit}) 
shows that, for all $\cS \subseteq \mathbb{S}_d \times \mathbb{S}_d$,  there exists a PQC $\cU(\delta,\cS)$ with $|\Theta(\delta,\cS)| = O\mspace{-1 mu}\Big(\mspace{-1 mu}\big(N^2 4^N \delta^{-1}\big)^{N^24^N}\Big)$. The super-exponential dependence of $|\Theta(\delta,\cS)|$ on $N$ is a consequence of the requirement that the quantum circuit be able to universally approximate  the whole unitary group uniformly. Moreover, when $\cS=\mathbb{S}_d \times \mathbb{S}_d$, a  lower bound based on a covering argument shows that such a dependence on $N$ is unavoidable (see, e.g.,~\cite[Section~4.5.4]{Nielsen_Chuang_2010}).

Given that an exponential dependence on $N$ is undesirable for practically realistic algorithms, it is of interest to determine conditions on $\cS$ under which $|\Theta(\delta,\cS)|$ or the QNE copy complexity scales more efficiently, say polynomially or poly-logarithmically in $N$. The aforementioned determination is challenging for two reasons. First, the projectors  $\{P_i^{\star}\}_{i=1}^d$ appearing in the spectral decomposition of $H^{\star}(\rho,\sigma)$, and consequently in the definition of $\cU(\delta,\cS)$, do not have a closed-form expression. 
Second, even if such a characterization is available, determining the sets $\cS$ for which~\eqref{eq:unitaryapprox} holds with an efficiently realizable PQC is known to be a hard problem. In fact, it is closely related to the problem of characterizing the classes of unitaries which can be efficiently approximated. Given this, a pertinent approach is to fix a PQC $\cU$, as is the case in practice, and consider the classes $\cS$ for which efficient QNEs exist.

To this end, let us recall the necessary and sufficient condition that the optimizer $H^{\star}(\rho,\sigma)= \log \omega^{\star}(\rho,\sigma)$ should satisfy, which can be obtained as the limit $\alpha \to 1$ of~\eqref{eq:optimal-omega-alpha-gen} (see Appendix~\ref{Sec:optmeasrelent}):
\begin{align}
\sigma & =\int_{0}^{\infty}\ \left(\omega^{\star}(\rho,\sigma)+sI\right)^{-1}\rho\left(\omega^{\star}(\rho,\sigma)+sI\right)^{-1}ds. \notag
\end{align}
Solving this equation analytically to determine $\omega^{\star}(\rho,\sigma)$ appears challenging. Nevertheless, we can leverage it to determine a  class of density operator pairs over which our copy complexity results hold.   For a given $\rho \in \mathbb{S}_d$,  unitary $U$ on $\HH_d$ and $b \geq 1$, let 
\begin{align}
\Omega\big(b,\rho, U\big) &\coloneqq \left\{ \omega =\sum_{i=1}^d\frac{\operatorname{Tr}\!\left[ U |i\rangle\!\langle
i|U^{\dag}\rho\right]}{v_i} U |i\rangle\!\langle
i|U^{\dag} : \begin{aligned}
   & \mathbf{v} \in (0,1]^d, ~\sum_{i=1}^d v_i=1 \mbox{ and } \\& \frac{1}{b} \leq \frac{\operatorname{Tr}\!\left[ U |i\rangle\!\langle
i|U^{\dag}\rho\right]}{v_i} \leq b, ~\forall~i \in [1:d]
\end{aligned}\right\}. \notag
\end{align}
Setting $\sigma(\rho, \omega) \coloneqq \int_{0}^{\infty}\ \left(\omega+sI\right)^{-1}\rho\left(\omega+sI\right)^{-1}ds$ for $\omega \in \Omega\big(b,\rho, U\big)$, it is easy to see that $\sigma(\rho, \omega) \in \mathbb{S}_d$ since $\rho \mapsto \sigma(\rho, \omega)$ is a completely positive map and $\operatorname{Tr}\!\left[\sigma(\rho, \omega)\right]= \operatorname{Tr}\!\left[\rho \omega^{-1}\right]=\sum_{i=1}^d v_i=1$. 
Hence, for a given PQC $\cU$, define 
\begin{align}
 & \bar{\cS}_d(\delta, b,\cU) \coloneqq \left\{(\rho,\sigma(\rho,\omega)) \in \cS_d(b): \begin{aligned}
     &  \omega \in \Omega\big(b,\rho, U\big) \mbox{ for }  U \mbox{ s.t. } \exists~ U(\bm{\theta}) \in \cU\\
     &  \mbox{ satisfying } \norm{U-U(\bm{\theta})} \leq \delta
 \end{aligned}  \right\}, \label{eq:density-pair-copycomp} 
\end{align}
i.e., $\bar{\cS}_d(\delta, b,\cU)$ is the class of density operator pairs within $\cS_d(b)$ such that the eigenbasis of the optimizer $H^{\star}(\rho,\sigma)$ (or $\omega^{\star}(\rho,\sigma)$) can be approximated by the PQC $\cU$ within an accuracy of $\delta$.  Given a PQC $\cU$, $\bar{\cS}_d(\delta, b,\cU)$ can be determined in principle by computing  $(\rho,\sigma(\rho,\omega))$ pairs over $\omega \in \Omega\big(b,\rho, U\big)$ as in~\eqref{eq:density-pair-copycomp} and ascertaining if the pair lies within $ \cS_d(b)$.

The next result provides a  copy complexity bound for the QNE with a shallow net and a PQC $\cU$ with parameter set $\Theta(\cU)$. 
\begin{prop}[Copy complexity of QNE]\label{Prop:copycomp-QNE}
 Let $\epsilon \geq 0$ and $b \geq 1$. With probability at least $p \in [0,1)$, QNE with shallow  net $\tilde \cF^{\mathrm{SNN}}(d,\log  b)$ and  PQC $\cU$ estimates $\sD_{\mathsf{M}}(\rho\Vert\sigma)$  between $(\rho,\sigma) \in \bar{\cS}_d(\epsilon_b,b,\cU)$ within an error of $\epsilon$ given $O\big(c_{b,p}^2d|\Theta(\cU)|\epsilon^{-2}\big)$
copies each of $\rho$ and $\sigma$, where $\epsilon_b=\epsilon/(4b+4\log b)$ and  $c_{b,p}\coloneqq \big(1+\sqrt{-\log\! \left(1-p\right)}\big)(b+1)\log b.$
\end{prop}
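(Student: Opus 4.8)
The plan is to derive the claim as a specialization of the deviation inequality \eqref{eq:devineq-shallownet} in Corollary~\ref{Cor-shallownn}, after recognizing that the fixed PQC $\cU$ itself qualifies as an $\epsilon_b$-covering of $\cU^\star(\cS)$ for the class $\cS=\bar{\cS}_d(\epsilon_b,b,\cU)$. Since the bound in Corollary~\ref{Cor-shallownn} is valid for \emph{any} $\epsilon_b$-covering used in the QNE (its statistical term depends only on $\kappa(\tilde\cF^{\mathrm{SNN}}(d,\log b))\le 2\sqrt d$, not on the particular circuit), this reduces matters to choosing the per-inner-loop sample size $n$ and the tail parameter $z$ and then chasing constants.

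First I would verify the covering claim, which is essentially what the definition \eqref{eq:density-pair-copycomp} encodes. Fix $(\rho,\sigma)\in\bar{\cS}_d(\epsilon_b,b,\cU)$, so $\sigma=\sigma(\rho,\omega)=\int_{0}^{\infty}(\omega+sI)^{-1}\rho(\omega+sI)^{-1}\,ds$ for some $\omega\in\Omega(b,\rho,U)$ with a unitary $U$ obeying $\|U-U(\bm\theta)\|\le\epsilon_b$ for some $U(\bm\theta)\in\cU$. Because $\omega>0$, $\operatorname{Tr}[\rho\omega^{-1}]=1$, and $\omega$ solves the optimality identity obtained as the $\alpha\to1$ limit of \eqref{eq:optimal-omega-alpha-gen} (see Appendix~\ref{Sec:optmeasrelent}), uniqueness of the optimizer forces $\omega^\star(\rho,\sigma)=\omega$, hence $H^\star(\rho,\sigma)=\log\omega$; by the structure of $\Omega(b,\rho,U)$ the projectors $\{U\ket{i}\!\bra{i}U^{\dagger}\}_{i=1}^{d}$ form an enumeration of the eigenprojectors of $H^\star(\rho,\sigma)$ with associated unitary $U^\star(\mathbf P)=U$. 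Choosing these enumerations in \eqref{eq:unitcirccls}, every element of $\cU^\star\big(\bar{\cS}_d(\epsilon_b,b,\cU)\big)$ lies within operator norm $\epsilon_b$ of an element of $\cU$, so we may take $\cU(\epsilon_b,\bar{\cS}_d(\epsilon_b,b,\cU))=\cU$ and $\Theta(\epsilon_b,\bar{\cS}_d(\epsilon_b,b,\cU))=\Theta(\cU)$; the QNE with PQC $\cU$ and shallow net $\tilde\cF^{\mathrm{SNN}}(d,\log b)$ then coincides with $\hat\sD_{\mathsf M,\Theta(\cU),\tilde\cF^{\mathrm{SNN}}(d,\log b)}^n$, to which Corollary~\ref{Cor-shallownn} applies with this $\cS\subseteq\cS_d(b)$ and $\delta=\epsilon_b$.

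Next I would instantiate \eqref{eq:devineq-shallownet} with $\delta=\epsilon_b=\epsilon/(4b+4\log b)$, using $2\epsilon_b(b+\log b)=\epsilon/2$, so that for every $z\ge0$
\begin{align}
\sup_{(\rho,\sigma)\in\bar{\cS}_d(\epsilon_b,b,\cU)}\mathbb{P}\!\left(\big|\sD_{\mathsf M}(\rho\Vert\sigma)-\hat\sD_{\mathsf M,\Theta(\cU),\tilde\cF^{\mathrm{SNN}}(d,\log b)}^n\big|\ge \tfrac{\epsilon}{2}+128(\log b)(b+1)\big(2\sqrt d\,n^{-1/2}+z\big)\right)\le 2e^{-nz^2}. \notag
\end{align}
I would then set $z=(1+\sqrt{-\log(1-p)})\,n^{-1/2}$, for which $nz^2=(1+\sqrt{-\log(1-p)})^2\ge 1-\log(1-p)\ge\log\tfrac{2}{1-p}$ (since $1\ge\log 2$), whence $2e^{-nz^2}\le 1-p$. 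Using $d\ge1$ to bound $2\sqrt d+1+\sqrt{-\log(1-p)}\le 3\sqrt d\,(1+\sqrt{-\log(1-p)})$, the deviation threshold above is at most $\tfrac{\epsilon}{2}+384\sqrt d\,c_{b,p}\,n^{-1/2}$ with $c_{b,p}=(1+\sqrt{-\log(1-p)})(b+1)\log b$. Taking $n=\big\lceil(768)^2\,d\,c_{b,p}^2/\epsilon^2\big\rceil$ makes $384\sqrt d\,c_{b,p}\,n^{-1/2}\le\epsilon/2$, so with probability at least $p$ the QNE estimate lies within $\epsilon$ of $\sD_{\mathsf M}(\rho\Vert\sigma)$, uniformly over $\bar{\cS}_d(\epsilon_b,b,\cU)$. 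Since each of the $|\Theta(\cU)|$ inner optimizations consumes $n$ fresh copies of each state, the total number of copies of $\rho$ (and of $\sigma$) used is $n|\Theta(\cU)|=O\big(c_{b,p}^2\,d\,|\Theta(\cU)|\,\epsilon^{-2}\big)$.

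The only non-routine ingredient is the identification in the second paragraph: matching the abstract $\delta$-covering of Corollary~\ref{Cor-shallownn} with the prescribed circuit $\cU$ and, through the optimality identity, with the eigenbasis of the implicitly defined optimizer $H^\star(\rho,\sigma)$. This is precisely the content of the definition of $\bar{\cS}_d(\epsilon_b,b,\cU)$, so it is more bookkeeping than genuine obstacle; everything else is arithmetic with constants already present in \eqref{eq:devineq-shallownet}. A sharper leading constant than $768$ could be extracted by balancing the two halves of the $\epsilon/2+\epsilon/2$ split and optimizing $z$, but this is immaterial for the stated order.
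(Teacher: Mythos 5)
Your proposal is correct and follows essentially the same route as the paper: invoke the deviation inequality of Corollary~\ref{Cor-shallownn} with $\delta=\epsilon_b$ so the approximation term equals $\epsilon/2$, pick $z$ to make the failure probability $1-p$, solve for $n$ so the statistical term is also $\epsilon/2$, and multiply by $|\Theta(\cU)|$. The extra paragraph identifying $\cU$ as a valid $\epsilon_b$-covering via the uniqueness of $\omega^{\star}(\rho,\sigma)$ is a correct (and welcome) elaboration of what the paper leaves implicit in the definition of $\bar{\cS}_d(\delta,b,\cU)$.
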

The proof of Proposition~\ref{Prop:copycomp-QNE}  follows immediately from Corollary~\ref{Cor-shallownn} and can be found in  Section~\ref{Sec:Prop:effqucirccomp-proof}.  An analogous claim for QNE with a deep net and a quantum circuit $\cU$ can be 
obtained by using Proposition~\ref{Prop:deepnn} in place of Corollary~\ref{Cor-shallownn}, after setting $k = d-1$,  $a = \big((3e)^{d} \log  b \big)/(2\pi)$ according to Proposition~\ref{Prop:densityopclass}$(i)$.

Note that the copy complexity bound of QNE  given in Proposition \ref{Prop:copycomp-QNE} is proportional to $d$, which scales exponentially in the number of qudits. We next show that a better copy complexity can be achieved by QNE when the states have additional symmetry, namely, permutation invariance. To state this result, let  $d=m^N$  
for some  $m,N \in \NN$, and take the underlying Hilbert space as  $\HH_d=\HH_{m}^{\otimes N}$.   Let $\mathsf{S}_N$ denote the symmetric group on $\{1,\ldots,N\}$, and let $\mathsf{SU}_d$ denote the special unitary group on $\HH_d$. For a permutation $\pi \in \mathsf{S}_N$, define the permutation operator $V_{m}(\pi)$ to be the unitary operator such that 
\begin{align}
    V_m(\pi) \ket{i_1} \otimes  \cdots  \otimes \ket{i_N} = \ket{i_{\pi^{-1}(1)}} \otimes \cdots \otimes  \ket{i_{\pi^{-1}(N)}}, 
\end{align}
for all $\ket{i_1}, \ldots, \ket{i_N} \in \HH_m$. 
A state $\rho \in \cS_{d}$ is permutation invariant if $V_m(\pi)\rho V_m(\pi)^{\dag }=\rho$ for all $\pi \in \mathsf{S}_N$. Let $\mathfrak{P}\big(\HH_m^{\otimes N}\big)$ denote the class of all permutation-invariant states acting on $\HH_{m}^{\otimes N}$. 

The symmetries within the class $\mathfrak{P}\big(\HH_m^{\otimes N}\big)$ can be leveraged to construct a QNE whose copy complexity scales polynomially in $N$. The intuition for this improvement stems from the fact that the number of degrees of freedom of permutation-invariant operators acting on $\HH_m^{\otimes N}$ scales polynomially in $N$. To utilize the underlying symmetries in the construction of QNE, we will rely on   Schur--Weyl duality, which states that  $\HH_m^{\otimes N}$ can be written as  (see, e.g.,~\cite[Section 6.2]{Hayashi-2017}):
\begin{align}
    \HH_d=\HH_m^{\otimes N}=\bigoplus_{\kappa \in Y_m^N} \cW_{\kappa} \otimes \cV_{\kappa}, \label{eq:SWduality}
\end{align}
where $\kappa$ indexes the Young frames  $Y_m^N$ (with depth at most $m$ and number of boxes $N$),  and $\cW_{\kappa}$ (resp. $\cV_{\kappa}$)  denotes the corresponding irreducible space of $\mathsf{SU}_d$ (resp.  $\mathsf{S}_N$).  
Let $\cK_{m,N}\coloneqq \bigoplus_{\kappa} \cW_{\kappa}$, and let $P_{\kappa}$ denote the projection onto $\cW_{\kappa} \otimes \cV_{\kappa}$. Consider the completely positive trace preserving (CPTP) map  $\bar{\cN}\colon \cL(\HH_d) \to \cL(\cK_{m,N}) $ given by
\begin{align}
 \bar{\cN}(\rho)\coloneqq \sum_{\kappa} \operatorname{Tr}_{\cV_{\kappa}}[P_{\kappa} \rho P_{\kappa}], \label{eq:schurweylproj}
\end{align}
where $\operatorname{Tr}_{\cV_{\kappa}}[\cdot]$ denotes the partial trace with respect to $\cV_{\kappa}$. The quantum channel $ \bar{\cN}$  will be used below as a pre-processing channel for QNE, which can be realized by applying the Schur transform in conjunction with the partial trace operation.

For a PQC $\cU$ acting on $\cK_{m,N}$,  let $\bar{\cS}_{|\cK_{m,N}|}(\delta, b,\cU)$ denote the set of density operator pairs on $\cK_{m,N}$ as defined by~\eqref{eq:density-pair-copycomp}, where $|\cK_{m,N}|$ denotes the dimension of the Hilbert space $\cK_{m,N}$. 
With $\bar{P}_{\kappa}$ denoting the projection onto $\cW_{\kappa}$, let  $\bar{\cM} \colon  \cL(\cK_{m,N}) \to \cL(\HH_d) $ denote the CPTP map defined as
\begin{align}
  \bar{\cM} (\bar \rho)=  \sum_{\kappa} \bar{P}_{\kappa} \bar \rho \bar{P}_{\kappa} \otimes \pi_{\kappa}, \label{eq:invschurweylproj}
\end{align}
where $\pi_{\kappa}=I/|\cV_k|$ denotes the maximally mixed state on $\cV_{\kappa}$. 
Consider the following class of density operator pairs on $\HH_m^{\otimes N}$:
 \begin{align}
  \hat{\cS}_{m,N}(\delta, b,\cU)&\coloneqq \left\{\big(\bar{\cM} (\bar \rho),\bar{\cM} (\bar \sigma)\big):\big(\bar \rho, \bar \sigma\big) \in  \bar{\cS}_{|\cK_{m,N}|}(\delta, b,\cU)\right\}.  \notag 
\end{align}
The next proposition shows that QNE achieves a polynomial (in the number of qudits) copy complexity for density operator pairs within $\hat{\cS}_{m,N}(\delta, b,\cU)$. 
\begin{prop}[Copy complexity  with permutation invariant states]\label{Prop:copcomp-perminv}
 Let $\epsilon \geq 0$ and $b \geq 1$. With probability at least $p \in [0,1)$, QNE with $\cN=\bar{\cN}$ as given in~\eqref{eq:schurweylproj}, shallow neural net $\tilde \cF^{\mathrm{SNN}}(\bar d_{m,N},\log  b)$ and a PQC $\cU$ acting on $\cK_{m,N}$, estimates $\sD_{\mathsf{M}}(\rho\Vert\sigma)$  for $(\rho,\sigma) \in \hat{\cS}_{m,N}(\epsilon_b, b,\cU)$  within an error $\epsilon$  given $O\big(c_{b,p}^2\bar d_{m,N}|\Theta(\cU)|\epsilon^{-2}\big)$
copies each of $\rho$ and $\sigma$, where $ \bar d_{m,N}\coloneqq (N+1)^{m-1} N^{\frac{m(m-1)}{2}}$ and $\epsilon_b,c_{b,p}$ are as given in Proposition~\ref{Prop:copycomp-QNE}.
\end{prop}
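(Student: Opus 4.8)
The plan is to reduce the permutation-invariant problem to an application of Proposition~\ref{Prop:copycomp-QNE}, but carried out on the much smaller Hilbert space $\cK_{m,N}=\bigoplus_\kappa\cW_\kappa$ in place of $\HH_d$. Recall from Section~\ref{Sec:QNE} that running QNE with the pre-processing channel $\cN=\bar{\cN}$ of~\eqref{eq:schurweylproj} on the pair $(\rho,\sigma)$ is, by construction, identical to running the vanilla QNE (with identity pre-processing) on the pair $\big(\bar{\cN}(\rho),\bar{\cN}(\sigma)\big)$ over the output space $\cK_{m,N}$, with the PQC $\cU$ acting on $\cK_{m,N}$ and the shallow net $\tilde{\cF}^{\mathrm{SNN}}(\bar d_{m,N},\log b)$; moreover each copy of $\rho$ and of $\sigma$ fed through $\bar{\cN}$ yields exactly one copy of $\bar{\cN}(\rho)$ and of $\bar{\cN}(\sigma)$, respectively, and $\bar{\cN}$ is efficiently realized by the Schur transform followed by a partial trace.

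The first step is to show that this pre-processing is information-lossless for pairs in $\hat{\cS}_{m,N}(\epsilon_b,b,\cU)$. Fix such a pair, written $\rho=\bar{\cM}(\bar\rho)$, $\sigma=\bar{\cM}(\bar\sigma)$ with $(\bar\rho,\bar\sigma)\in\bar{\cS}_{|\cK_{m,N}|}(\epsilon_b,b,\cU)$; note that each such $\rho$ is permutation invariant, the $\mathsf{S}_N$-action in~\eqref{eq:SWduality} being supported on the $\cV_\kappa$ factors on which $\pi_\kappa$ is invariant. Because $\bar{\cM}(\bar\rho)=\sum_\kappa(\bar P_\kappa\bar\rho\bar P_\kappa)\otimes\pi_\kappa$ is block-diagonal across $\kappa$ and supported on $\bigoplus_\kappa\cW_\kappa\otimes\cV_\kappa$, combining~\eqref{eq:SWduality}, the definitions~\eqref{eq:schurweylproj}--\eqref{eq:invschurweylproj}, and $\operatorname{Tr}[\pi_\kappa]=1$ gives the composition identity $\bar{\cN}\circ\bar{\cM}=\Pi$, where $\Pi(\cdot)\coloneqq\sum_\kappa\bar P_\kappa(\cdot)\bar P_\kappa$ is the dephasing onto the blocks $\cW_\kappa$; in particular $\bar{\cN}\circ\bar{\cM}$ is the identity on block-diagonal operators. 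Since $\bar{\cM}$ depends on $\bar\rho$ only through $\Pi(\bar\rho)$ and the defining constraints of $\bar{\cS}_{|\cK_{m,N}|}$ in~\eqref{eq:density-pair-copycomp} are compatible with the block structure (the PQC $\cU$ on $\cK_{m,N}$ being block-respecting), we may take $\bar\rho,\bar\sigma$ block-diagonal without changing $(\rho,\sigma)$, whence $\bar{\cN}(\rho)=\bar\rho$, $\bar{\cN}(\sigma)=\bar\sigma$, and $(\bar\rho,\bar\sigma)\in\bar{\cS}_{|\cK_{m,N}|}(\epsilon_b,b,\cU)$ still holds. Then $\rho=\bar{\cM}(\bar{\cN}(\rho))$ and $\bar\rho=\bar{\cN}(\bar{\cM}(\bar\rho))$, so two applications of the data-processing inequality for $\sD_{\mathsf{M}}$ (immediate from~\eqref{eq:measured-rel-def})---first to the channel $\bar{\cN}$, then to the channel $\bar{\cM}$---yield $\sD_{\mathsf{M}}(\rho\Vert\sigma)\ge\sD_{\mathsf{M}}(\bar\rho\Vert\bar\sigma)\ge\sD_{\mathsf{M}}(\rho\Vert\sigma)$, i.e., $\sD_{\mathsf{M}}(\rho\Vert\sigma)=\sD_{\mathsf{M}}(\bar\rho\Vert\bar\sigma)$.

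It then remains to invoke Proposition~\ref{Prop:copycomp-QNE} verbatim with $\HH_d$ replaced by $\cK_{m,N}$: since $(\bar\rho,\bar\sigma)\in\bar{\cS}_{|\cK_{m,N}|}(\epsilon_b,b,\cU)$, the QNE with shallow net $\tilde{\cF}^{\mathrm{SNN}}(|\cK_{m,N}|,\log b)$ and PQC $\cU$ estimates $\sD_{\mathsf{M}}(\bar\rho\Vert\bar\sigma)=\sD_{\mathsf{M}}(\rho\Vert\sigma)$ within $\epsilon$ with probability at least $p$ using $O\big(c_{b,p}^2\,|\cK_{m,N}|\,|\Theta(\cU)|\,\epsilon^{-2}\big)$ copies each of $\bar\rho,\bar\sigma$, equivalently of $\rho,\sigma$. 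To conclude, I would bound $|\cK_{m,N}|=\sum_\kappa\dim\cW_\kappa$ by elementary representation theory: there are at most $(N+1)^{m-1}$ Young frames of depth at most $m$ with $N$ boxes, and the Weyl dimension formula gives $\dim\cW_\kappa\lesssim_m N^{m(m-1)/2}$ for each such $\kappa$, so $|\cK_{m,N}|\lesssim_m(N+1)^{m-1}N^{m(m-1)/2}=\bar d_{m,N}$, the $m$-dependent constant being absorbed into the $O(\cdot)$. The main obstacle is the second paragraph: establishing that $\bar{\cN}$ and $\bar{\cM}$ are mutually inverse on the block-diagonal density operators---so that both the membership in $\bar{\cS}_{|\cK_{m,N}|}$ and the target value $\sD_{\mathsf{M}}(\rho\Vert\sigma)$ survive the Schur pre-processing---whereas the dimension count of $\cK_{m,N}$ and the invocation of Proposition~\ref{Prop:copycomp-QNE} are routine.
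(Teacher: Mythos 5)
Your proposal is correct and follows essentially the same route as the paper's proof: use the two-sided data-processing argument (via $\bar{\cM}\circ\bar{\cN}$ acting as the identity on the relevant states) to show the Schur pre-processing preserves $\sD_{\mathsf{M}}$, then apply the general shallow-net copy-complexity bound on $\cK_{m,N}$ and bound $|\cK_{m,N}|$ by $\bar d_{m,N}$. Your explicit handling of the block-diagonal identification $\bar{\cN}(\rho)=\bar\rho$ is slightly more careful than the paper's (which simply asserts $(\bar{\cN}(\rho),\bar{\cN}(\sigma))\in\bar{\cS}_{|\cK_{m,N}|}(\delta,b,\cU)$), but the argument is the same in substance.
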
 

 The proof of Proposition~\ref{Prop:copcomp-perminv} is given in Section~\ref{Sec:Prop:copcomp-perminv-proof} and is based on the observation that the quantum channel $\bar{\cN}$  in~\eqref{eq:schurweylproj} preserves the measured relative entropy between two output states when the corresponding inputs are permutation-invariant. Hence, the QNE can be run on i.i.d. copies of $\bar{\cN}(\rho)$ and $\bar{\cN}(\sigma)$, which are density operators on a much smaller Hilbert space $\cK_{m,N}$ whose dimension is polynomial in $N$. To show that neural net $\tilde \cF^{\mathrm{SNN}}\big(\bar d_{m,N},\log  b\big)$ suffices for QNE on this smaller space,  we will use the fact that the Thompson metric $T(\rho,\sigma)$ contracts under application of a quantum channel. Combining these observations, the claim follows via Corollary~\ref{Cor-shallownn}. Note that the copy complexity in Proposition~\ref{Prop:copcomp-perminv} is proportional to $\bar d_{m,N}$, which is an exponential improvement over that given in Proposition~\ref{Prop:copycomp-QNE}.  We conclude this section by noting that analogous statements as given in Propositions~\ref{Prop:copycomp-QNE} and~\ref{Prop:copcomp-perminv}  can be obtained for measured R\'{e}nyi relative entropy via  Corollary~\ref{Cor-shallownn-Renyi} in place of Corollary~\ref{Cor-shallownn}. 
\section{Proofs}\label{Sec:proofs}
\subsection{Proof of Theorem~\ref{Thm:minimaxrisk-QNE}} 
\label{Sec:Thm:minimaxrisk-QNE-proof}
We will use the following lemma, whose proof is given in Appendix~\ref{Sec:Lem:bndtraceratio-proof}. 
\begin{lemma}[Eigenvalues of optimizer]\label{Lem:bndtraceratio}
 If $(\rho,\sigma) \in \cS_d(b)$, then $\abs{\lambda} \leq \log  b $ for  $\lambda \in \Lambda_{\alpha}^{\star}(\rho,\sigma)$, where $\Lambda_{\alpha}^{\star}(\rho,\sigma)$ is the set of eigenvalues of  $H_{\alpha}^{\star}(\rho,\sigma)$.
\end{lemma}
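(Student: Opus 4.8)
\textbf{Proof proposal for Lemma~\ref{Lem:bndtraceratio}.}
The plan is to unpack the Thompson metric bound into a two-sided operator (Löwner) inequality, push this through to the traces against the optimal projectors $\{P_{i,\alpha}^{\star}\}_{i=1}^d$, and then take logarithms. First I would record that $(\rho,\sigma) \in \cS_d(b)$ means $T(\rho,\sigma) \leq \log b$, i.e.\ $\norm{\sigma^{-1/2}\rho\,\sigma^{-1/2}} \leq b$ and $\norm{\rho^{-1/2}\sigma\,\rho^{-1/2}} \leq b$. Since these are operator norms of positive semi-definite operators, they are equivalent to $\sigma^{-1/2}\rho\,\sigma^{-1/2} \preceq b\,I$ and $\rho^{-1/2}\sigma\,\rho^{-1/2} \preceq b\,I$. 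Conjugating the first inequality by $\sigma^{1/2}$ and the second by $\rho^{1/2}$ (both order-preserving since $\rho,\sigma>0$) yields the sandwich
\begin{align}
\tfrac{1}{b}\,\sigma \preceq \rho \preceq b\,\sigma. \notag
\end{align}

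Next I would invoke the spectral form of the optimizer supplied in~\eqref{eq:opteigvalues}: $H_{\alpha}^{\star}(\rho,\sigma)=\sum_{i=1}^d \lambda_{i,\alpha}^{\star} P_{i,\alpha}^{\star}$ with $\lambda_{i,\alpha}^{\star}=\log\!\big(\operatorname{Tr}[P_{i,\alpha}^{\star}\rho]/\operatorname{Tr}[P_{i,\alpha}^{\star}\sigma]\big)$, where $\{P_{i,\alpha}^{\star}\}_{i=1}^d$ is a set of orthonormal rank-one projectors. Because $\sigma>0$ and $P_{i,\alpha}^{\star}\neq 0$, we have $\operatorname{Tr}[P_{i,\alpha}^{\star}\sigma]>0$, so the ratio is well-defined. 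Applying the trace (a positive linear functional) to the Löwner sandwich against each $P_{i,\alpha}^{\star}$ gives
\begin{align}
\tfrac{1}{b}\operatorname{Tr}[P_{i,\alpha}^{\star}\sigma] \leq \operatorname{Tr}[P_{i,\alpha}^{\star}\rho] \leq b\operatorname{Tr}[P_{i,\alpha}^{\star}\sigma], \notag
\end{align}
hence $1/b \leq \operatorname{Tr}[P_{i,\alpha}^{\star}\rho]/\operatorname{Tr}[P_{i,\alpha}^{\star}\sigma] \leq b$ for every $i$. Taking logarithms, $\abs{\lambda_{i,\alpha}^{\star}}\leq \log b$ for all $i \in [1:d]$.

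Finally, since the eigenvalues of $H_{\alpha}^{\star}(\rho,\sigma)$ (counted with multiplicity) are precisely the numbers $\lambda_{i,\alpha}^{\star}$, every $\lambda \in \Lambda_{\alpha}^{\star}(\rho,\sigma)$ satisfies $\abs{\lambda}\leq \log b$, which is the claim. I do not anticipate a genuine obstacle here: the only points requiring a little care are (a) the equivalence between the operator-norm bound and the Löwner inequality, which I would justify by the functional-calculus characterization of $\norm{\cdot}$ for positive operators, and (b) noting that although the projector family $\{P_{i,\alpha}^{\star}\}$ need not be unique, the argument applies verbatim to any valid choice, so the bound holds for the (unique) operator $H_{\alpha}^{\star}(\rho,\sigma)$ regardless. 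The case $\alpha=1$ is contained in the above with the simplified notation $\lambda_i^{\star}$, $P_i^{\star}$, $H^{\star}(\rho,\sigma)$.
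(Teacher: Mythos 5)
Your proof is correct and follows essentially the same route as the paper: both arguments reduce the claim to showing $1/b \leq \operatorname{Tr}[P_{i,\alpha}^{\star}\rho]/\operatorname{Tr}[P_{i,\alpha}^{\star}\sigma] \leq b$ and then take logarithms of the eigenvalue formula~\eqref{eq:opteigvalues}. The only cosmetic difference is that the paper obtains the trace-ratio bound by invoking the variational characterization of the max-relative entropy $D_{\max}$, whereas you derive it directly from the L\"owner sandwich $\tfrac{1}{b}\sigma \preceq \rho \preceq b\,\sigma$; your version is self-contained but mathematically equivalent.
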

Proceeding with the proof of Theorem~\ref{Thm:minimaxrisk-QNE}, fix $(\rho,\sigma) \in \cS$.
 Let 
 \begin{align}
   \sD_{\mathsf{M},\Theta(\delta,\cS),\cF}(\rho\Vert\sigma)  \coloneqq \sup_{\bm{\theta} \in \Theta(\delta,\cS)}\sD_{\mathsf{M},\bm{\theta},\cF}(\rho\Vert\sigma), \notag
\end{align}
where 
\begin{align}
    \sD_{\mathsf{M},\bm{\theta},\cF}(\rho\Vert\sigma) \coloneqq \sup_{H \in \cH (\bm{\theta},\cF)}\operatorname{Tr}[H\rho
]-\operatorname{Tr}[e^H\sigma]+1,\label{eq:inneropt}
\end{align}
and
\begin{align}
\cH \big(\bm{\theta},\cF\big)\coloneqq \left\{H=\sum_{i=1}^d f(i) U(\bm{\theta}) |i\rangle\!\langle
i|U^{\dag}(\bm{\theta}), ~f \in \cF\right\}, \label{eq:Hermopfixedqcir}
\end{align}
is the class of Hermitian operators realized by the classical neural net for a fixed quantum circuit with parameter  $\bm{\theta}$.  
Then,  the  triangle inequality implies  that
\begin{align}
  \abs{\sD_{\mathsf{M}}(\rho\Vert\sigma) -\hat\sD_{\mathsf{M},\Theta(\delta,\cS),\cF}^n}  &\leq   \abs{\sD_{\mathsf{M}}(\rho\Vert\sigma) -\sD_{\mathsf{M},\Theta(\delta,\cS),\cF}(\rho\Vert\sigma)}  \notag \\
  & \qquad \qquad \qquad \qquad + \abs{\sD_{\mathsf{M},\Theta(\delta,\cS),\cF}(\rho\Vert\sigma) -\hat\sD_{\mathsf{M},\Theta(\delta,\cS),\cF}^n}, \label{eq:effective-err}
\end{align}
where the first and second terms on the right-hand side are referred to as the approximation error and estimation error, respectively. Note that the estimation error  is random, due to $\hat\sD_{\mathsf{M},\Theta(\delta,\cS),\cF }^n$ being dependent on the measurement outcomes.  

We first upper bound the approximation error. Since $\rho,\sigma>0$, the supremum in~\eqref{eq:rel_ent_variational} is achieved for some $ H^{\star}\coloneqq  H^{\star}(\rho,\sigma)$~\cite{Berta2015OnEntropies}. From the definition of the set $\cU(\delta,\cS)$, there exists a unitary $U(\bm{\theta}^{\star})$ for some $\bm{\theta}^{\star} \in \Theta(\delta,\cS)$ and an enumeration $\mathbf{P}\coloneqq\{P_i^{\star}\}_{i=1}^d$ of rank-one orthogonal eigen-projectors of $ H^{\star}$ such that~\eqref{eq:unitaryapprox} holds with $U^{\star}=U^{\star}(\mathbf{P})$ and $U(\bm{\theta})=U(\bm{\theta}^{\star})$. 
 Let $H^{\star}=\sum_{i=1}^d \lambda_i^{\star}P_i^{\star}$ be a spectral decomposition of $H^{\star}$. 
 By definition of $ \varepsilon(\cF,\cS)$ given in~\eqref{eq:approxcond}, there exists $f^{\star} \in \cF$ such that
\begin{align}
    \max_{i \in [1:d]} \abs{f^{\star}(e_m(i))-\lambda_i^{\star}} \leq \varepsilon(\cF,\cS),\label{eq:apprcnn}
\end{align}
This further yields
\begin{align}
  & \sD_{\mathsf{M}}(\rho\Vert\sigma)= \operatorname{Tr}[H^{\star}\rho
]-\operatorname{Tr}[e^{H^{\star}}\sigma]+1 
= \sum_{i=1}^d \lambda_i^{\star} \operatorname{Tr}[P_i^{\star} \rho] ,\notag
\end{align}
where the last equality follows from \eqref{eq:opteigvalues}.
  Set 
\begin{subequations}\label{eq:paramhermop}
\begin{align}
P_{i,\bm{\theta}^{\star}}&\coloneqq U(\bm{\theta}^{\star})\ket{i}\!\bra{i}U(\bm{\theta}^{\star})^{\dag},\label{eq:projopt}\\
F^{\star} &\coloneqq \sum_{i=1}^d f^{\star}\!\left(e_m(i)\right) \ket{i}\!\bra{i}, \label{eq:nneigvals} \\
H_{f^{\star},\bm{\theta}^{\star}} &\coloneqq \sum_{i=1}^d f^{\star}\!\left(e_m(i)\right) P_{i,\bm{\theta}^{\star}}=U(\bm{\theta}^{\star})F^{\star} U(\bm{\theta}^{\star})^{\dag}. \label{eq:nnqcopt}
\end{align}   
\end{subequations}
We can bound the approximation error as follows: 
\begin{align}
    \abs{\sD_{\mathsf{M}}(\rho\Vert\sigma) -\sD_{\mathsf{M},\Theta(\delta,\cS),\cF}(\rho\Vert\sigma)}
   &\stackrel{(a)}{=}  \sD_{\mathsf{M}}(\rho\Vert\sigma) -\sD_{\mathsf{M},\Theta(\delta,\cS),\cF}(\rho\Vert\sigma) \notag\\
  &=\operatorname{Tr}[H^{\star}\rho
]-\operatorname{Tr}[e^{H^{\star}}\sigma]+1-\sD_{\mathsf{M},\Theta(\delta,\cS),\cF}(\rho\Vert\sigma) \notag\\
    & \stackrel{(b)}{\leq} \operatorname{Tr}\! \big[(H^{\star}-H_{f^{\star},\bm{\theta}^{\star}})\rho
\big]+\left(\operatorname{Tr}\!\big[e^{H_{f^{\star},\bm{\theta}^{\star}}}\sigma\big]-\operatorname{Tr}\!\big[e^{H^{\star}}\sigma\big]\right)\notag \\
& \stackrel{(c)}{\leq} \big\|H^{\star}-H_{f^{\star},\bm{\theta}^{\star}}\big\|+ \norm{e^{H^{\star}}-e^{H_{f^{\star},\bm{\theta}^{\star}}}}, \label{eq:euppbndefferr}
\end{align}
where 
\begin{enumerate}[(a)]
    \item   follows because $ \sD_{\mathsf{M},\Theta(\delta,\cS),\cF}(\rho\Vert\sigma) \leq \sD_{\mathsf{M}}(\rho\Vert\sigma)$ since the supremum  over $H$ (see~\eqref{eq:inneropt}) is over a subset $\cup_{\bm{\theta} \in \Theta(\delta,\cS)} \cH \big(\bm{\theta},\cF\big)$ of the class of all Hermitian operators;
    \item follows by noting  that  $H_{f^{\star},\bm{\theta}^{\star}} \in \cH \big(\bm{\theta}^{\star},\cF\big)$ (since $f^{\star} \in \cF$ and   $\bm{\theta}^{\star} \in \Theta(\delta,\cS)$), thus implying that
    \begin{align}
        \sD_{\mathsf{M},\Theta(\delta,\cS),\cF}(\rho\Vert\sigma) \geq   \sD_{\mathsf{M},\bm{\theta^{\star}},\cF}(\rho\Vert\sigma) \geq \operatorname{Tr}[H_{f^{\star},\bm{\theta}^{\star}}\rho
]-\operatorname{Tr}[e^{H_{f^{\star},\bm{\theta}^{\star}}}\sigma]+1; \notag
    \end{align}
    \item uses operator H\"{o}lder's inequality.
\end{enumerate}
To control the terms in the right hand side of~\eqref{eq:euppbndefferr}, we will use the following lemma.
\begin{lemma}[Bound on operator norm]\label{Lem:opnorm-diff}
Let $H_1$ and $H_2$ be Hermitian operators with spectral decompositions $H_1=U_1 \Lambda_1 U_1^{\dag}$ and $H_2=U_2 \Lambda_2 U_2^{\dag}$. Then
\begin{align}
    \norm{H_1-H_2} \leq \norm{\Lambda_1-\Lambda_2}+\big(\norm{\Lambda_1}+\norm{\Lambda_2}\big)\norm{U_1-U_2}. \notag
\end{align}
\end{lemma}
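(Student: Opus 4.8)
The plan is a short perturbation argument: insert telescoping intermediate operators into $H_1-H_2 = U_1\Lambda_1 U_1^\dag - U_2\Lambda_2 U_2^\dag$ and bound each piece via the triangle inequality, submultiplicativity, and unitary invariance of the operator norm $\norm{\cdot}=\norm{\cdot}_\infty$ (together with $\norm{A^\dag}=\norm{A}$ and $\norm{U}=1$ for unitary $U$). First I would peel off the discrepancy in the right-hand unitary factor:
\[
H_1-H_2 = U_1\Lambda_1\big(U_1^\dag-U_2^\dag\big) + \big(U_1\Lambda_1 - U_2\Lambda_2\big)U_2^\dag .
\]
The first term satisfies $\norm{U_1\Lambda_1(U_1^\dag-U_2^\dag)} \le \norm{U_1}\,\norm{\Lambda_1}\,\norm{U_1^\dag-U_2^\dag} = \norm{\Lambda_1}\,\norm{U_1-U_2}$, while the second has norm equal to $\norm{U_1\Lambda_1-U_2\Lambda_2}$ by unitary invariance on the right.

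Next I would decompose the leftover quantity once more, this time isolating the eigenvalue discrepancy:
\[
U_1\Lambda_1-U_2\Lambda_2 = U_1(\Lambda_1-\Lambda_2) + (U_1-U_2)\Lambda_2 ,
\]
which gives $\norm{U_1\Lambda_1-U_2\Lambda_2} \le \norm{\Lambda_1-\Lambda_2} + \norm{U_1-U_2}\,\norm{\Lambda_2}$. Adding the two estimates yields
\[
\norm{H_1-H_2} \le \norm{\Lambda_1-\Lambda_2} + \big(\norm{\Lambda_1}+\norm{\Lambda_2}\big)\norm{U_1-U_2},
\]
as claimed.

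I do not anticipate a real obstacle here, since the lemma is a purely linear-algebraic estimate with no positivity or trace constraints. The only point requiring a little care is the \emph{order} of the telescoping: peeling $U_1^\dag-U_2^\dag$ off first and only then splitting out $\Lambda_1-\Lambda_2$ produces the sharp coefficient $\norm{\Lambda_1}+\norm{\Lambda_2}$ on $\norm{U_1-U_2}$, whereas a more naive symmetric grouping would give the weaker constant $2\norm{\Lambda_1}$ (or $2\norm{\Lambda_2}$). All the norm identities invoked — submultiplicativity and $\norm{UA}=\norm{AU}=\norm{A}$, $\norm{A^\dag}=\norm{A}$ for unitary $U$ — are standard for the operator norm and are already recalled in the Notation section.
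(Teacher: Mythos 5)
Your proof is correct and, once the two telescoping steps are expanded, produces exactly the same three-term decomposition $U_1\Lambda_1(U_1^{\dag}-U_2^{\dag})+U_1(\Lambda_1-\Lambda_2)U_2^{\dag}+(U_1-U_2)\Lambda_2 U_2^{\dag}$ that the paper bounds term by term via the triangle inequality and sub-multiplicativity. The argument is essentially identical to the paper's proof in Appendix~\ref{Sec:Lem:opnorm-diff-proof}.
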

\noindent For completeness, the proof is given in Appendix~\ref{Sec:Lem:opnorm-diff-proof} and relies on sub-multiplicativity of the spectral norm.

\medskip
\noindent Applying Lemma~\ref{Lem:opnorm-diff} with $H_1=H^{\star}=U^{\star}(\mathbf{P})\Lambda^{\star}U^{\star}(\mathbf{P})^{\dag}$ and $H_2=H_{f^{\star},\bm{\theta}^{\star}}=U(\bm{\theta}^{\star})F^{\star} U(\bm{\theta}^{\star})^{\dag}$ yields
\begin{align}
 \norm{H^{\star}-H_{f^{\star},\bm{\theta}^{\star}}}\leq \varepsilon(\cF,\cS) +  2 \delta \log  b,\label{eq:approx-herm-op}
\end{align}
where we used $\norm{U^{\star}(\mathbf{P})-U(\bm{\theta}^{\star})} \leq \delta$ due to~\eqref{eq:unitaryapprox},  $\norm{\Lambda_1}=\norm{\Lambda^{\star}}\leq \log b$ by Lemma~\ref{Lem:bndtraceratio},  $\norm{\Lambda_2}=\norm{F^{\star}}=
\norm{f^{\star}}_{\infty,\cX} \leq \log b$, and 
\begin{align}
    &\norm{\Lambda_1-\Lambda_2}=\max_{i \in [1:d]} \abs{f^{\star}(e_m(i))-\lambda_i^{\star}} \leq \varepsilon(\cF,\cS),\notag 
\end{align}
which follows from~\eqref{eq:apprcnn}.  
Similarly, applying Lemma~\ref{Lem:opnorm-diff} with $H_1=U^{\star}(\mathbf{P})e^{\Lambda^{\star}}U^{\star}(\mathbf{P})^{\dag}$ and $H_2=U(\bm{\theta}^{\star})e^{F^{\star}} U(\bm{\theta}^{\star})^{\dag}$ yields
\begin{align}
\norm{e^{H^{\star}}-e^{H_{f^{\star},\bm{\theta}^{\star}}}}=\norm{U^{\star}(\mathbf{P})e^{\Lambda^{\star}}U^{\star}(\mathbf{P})^{\dag}-U(\bm{\theta}^{\star})e^{F^{\star}}U^{\dag}(\bm{\theta}^{\star})}  \leq  b\,\varepsilon(\cF,\cS)+2 b\delta,  \label{eq:apprqcexp}
\end{align}
where we used $
\norm{e^{F^{\star}}} =\norm{e^{f^{\star}}}_{\infty,\cX} \leq b$, $
\norm{e^{\Lambda^{\star}}}  \leq b$, and 
\begin{align}
\norm{e^{F^{\star}}-e^{\Lambda^{\star}}}=\max_{i \in [1:d]} \abs{e^{\lambda_i^{\star}}-e^{f^{\star}\!\left(e_m(i)\right)}} 
  &\stackrel{(a)}{\leq} b\mspace{2 mu} \max_{i \in [1:d]} \abs{\lambda_i^{\star}-f^{\star}\!\left(e_m(i)\right)}  \leq b \, \varepsilon(\cF,\cS).\notag
\end{align}
In the above,  $(a)$ is due to the fact that $e^x$ is Lipschitz\footnote{Recall that, by definition, $f$ is $L$-Lipschitz on $[a,b]$ if   $|f(x)-f(y)| \leq L \abs{x-y}$ for all $x,y \in [a,b]$.} for all $x \in [-\log  b ,\log  b ]$, with the Lipschitz constant bounded by $b$.  
Combining~\eqref{eq:euppbndefferr}-\eqref{eq:apprqcexp}  the approximation error can be bounded as
\begin{align}
  \abs{\sD_{\mathsf{M}}(\rho\Vert\sigma) -\sD_{\mathsf{M},\Theta(\delta,\cS),\cF}(\rho\Vert\sigma)} \leq \varepsilon(\cF,\cS)(1+b)+  2\delta  (b+\log  b). \label{eq:approxerr} 
\end{align}

We next obtain an upper bound on the expected estimation error. First observe that
\begin{align}
  & \abs{\sD_{\mathsf{M},\Theta(\delta,\cS),\cF}(\rho\Vert\sigma) -\hat\sD_{\mathsf{M},\Theta(\delta,\cS),\cF}^n}\notag \\
  & \leq \sup_{\bm{\theta} \in \Theta(\delta,\cS)}\abs{\sD_{\mathsf{M},\theta,\cF}(\rho\Vert\sigma) -\hat\sD_{\mathsf{M},\theta,\cF}^n} \notag \\
  &\leq \sup_{\bm{\theta} \in \Theta(\delta,\cS)} \sup_{f  \in \cF}\abs{\frac{1}{n}\sum_{\ell=1}^{n}f(i_\ell(\bm{\theta}))-\frac{1}{n}\sum_{\ell=1}^{n}
e^{f(j_\ell(\bm{\theta}))}-\EE_{p_{\bm{\theta}}^\rho}[f]+\EE_{q_{\bm{\theta}}^\sigma}\big[e^f\big]} \notag \\
&    \leq \sup_{\bm{\theta} \in \Theta(\delta,\cS)}  \sup_{f  \in \cF}\abs{\frac{1}{n}\sum_{\ell=1}^{n}\big(f(i_\ell(\bm{\theta}))-\EE_{p_{\bm{\theta}}^\rho}[f]\big)}+ \sup_{\bm{\theta} \in \Theta(\delta,\cS)} \sup_{f  \in \cF}\abs{\frac{1}{n}\sum_{\ell=1}^{n}
e^{f(j_\ell(\bm{\theta}))}-\EE_{q_{\bm{\theta}}^\sigma}\big[e^f\big]}.\notag \\
&\leq  \sup_{\bm{\theta} \in \Theta(\delta,\cS)}\mspace{-2 mu}\sup_{f  \in \cF}\mspace{-2 mu}\abs{\frac{1}{n}\sum_{\ell=1}^{n}\big(f(i_\ell(\bm{\theta}))-\EE_{p_{\bm{\theta}}^\rho}[f]\big)} +  b\mspace{-2 mu} \sup_{\bm{\theta} \in \Theta(\delta,\cS)} \mspace{-2 mu} \sup_{f  \in\cF} \mspace{-2 mu}\abs{\frac{1}{n}\sum_{\ell=1}^{n}\big(f(j_\ell(\bm{\theta}))-\EE_{q_{\bm{\theta}}^\sigma}[f]\big)}, \label{eq:bnddev-mrel} 
\end{align}
where to obtain the first inequality,  we used that for arbitrary functionals $f\colon \Omega \mapsto \RR$ and $g\colon \Omega \mapsto \RR$ such that at least $\sup_{\omega \in \Omega}f(\omega)$ or $\sup_{\omega \in \Omega}g(\omega)$ is finite, 
\begin{align}
    \abs{\sup_{\omega \in \Omega}f(\omega)-\sup_{\omega \in \Omega}g(\omega)} \leq \sup_{\omega \in \Omega}\abs{f(\omega)-g(\omega)}, \label{eq:ineqsup}
\end{align} 
and the final inequality again follows from the Lipschitz property of $e^x$ mentioned above. 
Now, note that each  term in~\eqref{eq:bnddev-mrel} is a separable random process indexed by $(f, \theta)$ since the neural net class has a finite number of bounded parameters with  continuous activation and each parameter can be approximated by rational-valued parameters, and the unitary group  on $\HH_d$ is separable (e.g., with respect to the trace norm). Hence, the suprema above are measurable. Taking expectations, we obtain
 \begin{align}
  & \EE \!\left[\abs{\sD_{\mathsf{M},\Theta(\delta,\cS),\cF}(\rho\Vert\sigma) -\hat\sD_{\mathsf{M},\Theta(\delta,\cS),\cF}^n}\right] \notag \\
& \leq  \EE \!\left[\sup_{\substack{f  \in \cF,\\\bm{\theta} \in \Theta(\delta,\cS)}} \abs{\frac{1}{n}\sum_{\ell=1}^{n}\big(f(i_\ell(\bm{\theta}))-\EE_{p_{\bm{\theta}}^\rho}[f]\big)}\right] + b~ \EE \!\left[ \sup_{\substack{f  \in \cF,\\\bm{\theta} \in \Theta(\delta,\cS)}}\abs{\frac{1}{n}\sum_{\ell=1}^{n}\big(f(j_\ell(\bm{\theta}))-\EE_{q_{\bm{\theta}}^\sigma}[f]\big)}\right] \notag \\
& \leq  2\mspace{2 mu} \EE \!\left[\sup_{\bm{\theta} \in \Theta(\delta,\cS)} \sup_{f  \in \cF}\abs{\frac{1}{n}\sum_{\ell=1}^{n}R_{\ell}f(i_\ell(\bm{\theta}))}\right] + 2\mspace{2 mu}b \mspace{2 mu}\EE \!\left[ \sup_{\bm{\theta} \in \Theta(\delta,\cS)}\sup_{f  \in\cF}\abs{\frac{1}{n}\sum_{\ell=1}^{n}R_{\ell}\big(f(j_\ell(\bm{\theta}))}\right],
\label{eq:symmproc} 
\end{align}
where the final step 
 is due to the symmetrization inequality~\cite[Lemma 2.3.1]{AVDV-book}  using i.i.d. Rademacher variables $\{R_{\ell}\}_{\ell=1}^n$, i.e., $\PP(R_{\ell}=1)=\PP(R_{\ell}=-1)=0.5$, independent of $\cup_{\bm{\theta} \in \Theta(\delta,\cS)} \{i_{\ell}(\bm{\theta}),j_{\ell}(\bm{\theta})\}_{\ell=1}^n$.

Let 
\begin{align}
    & Z^{(n)}_{f,\bm{\theta}}\coloneqq  \frac{1}{\sqrt{n}}\sum_{\ell=1}^{n}R_{\ell} f(i_\ell(\bm{\theta})) \quad \text{and} \quad  \tilde Z^{(n)}_{f,\bm{\theta}}\coloneqq    \frac{1}{\sqrt{n}}\sum_{\ell=1}^{n}R_{\ell} f(j_\ell(\bm{\theta})). \notag  
\end{align}
Let  $\mu_n$ and $\nu_n $ be the empirical probability measures corresponding to $\{i_{\ell}(\bm{\theta})\}_{\ell=1}^n$ and $\{j_{\ell}(\bm{\theta}) \}_{\ell=1}^n$, respectively. 
Note that  for fixed $\big (f,\bm{\theta},\{i_{\ell}(\bm{\theta}),j_{\ell}(\bm{\theta})\}_{\ell=1}^n\big)$, $Z^{(n)}_{f,\bm{\theta}}$ and $\tilde Z^{(n)}_{f,\bm{\theta}}$, being weighted sums of i.i.d.~Rademacher variables,  are centered and sub-Gaussian random variables  with respect to  pseudo-metrics $\|\cdot\|_{2,\mu_n}$ and $\|\cdot\|_{2,\nu_n}$ on $\cF \times \Theta(\delta,\cS)$, respectively. Hence, they both  are also centered and sub-Gaussian random variables  with respect to the pseudometric $\|\cdot\|_{\infty,\cX}$ on $\cF \times \Theta(\delta,\cS)$, which does not depend on $\bm{\theta}$. For instance, we have via an application of Hoeffding's lemma that for all $(f,\bm{\theta}),~(f',\bm{\theta}') \in \cF \times \Theta(\delta,\cS)$,
\begin{align}
& \EE_{\{R_{\ell}\}_{\ell=1}^n}\left[ Z^{(n)}_{f,\bm{\theta}}\right]=0,~\forall~(f,\bm{\theta}) \in \cF \times \Theta(\delta,\cS), \notag \\
 \text{and}  \quad & \EE_{\{R_{\ell}\}_{\ell=1}^n}\left[ e^{t\big(Z^{(n)}_{f,\bm{\theta}}-Z^{(n)}_{f',\bm{\theta}'}\big)}\right] \leq e^{\frac{t^2 \norm{f-f'}_{2,\mu_n}}{2}} \leq e^{\frac{t^2 \norm{f-f'}_{\infty,\cX}}{2}}. \notag 
\end{align}
Then, using the law of iterated expectations by first taking conditional expectation with respect to $\{R_{\ell}\}_{\ell=1}^n$ given the samples $\cup_{\bm{\theta} \in \Theta(\delta,\cS)} \{i_{\ell}(\bm{\theta}),j_{\ell}(\bm{\theta})\}_{\ell=1}^n$ and then with respect to these samples,  we obtain
\begin{align}
&\EE \!\left[\abs{\sD_{\mathsf{M},\Theta(\delta,\cS),\cF}(\rho\Vert\sigma) -\hat\sD_{\mathsf{M},\Theta(\delta,\cS),\cF}^n}\right] \notag \\
& \leq  2\mspace{2 mu} \EE_{\mathsf{P}_n} \!\left[ \EE_{\big|\cup_{\bm{\theta} \in \Theta(\delta,\cS)} \{i_{\ell}(\bm{\theta}),j_{\ell}(\bm{\theta})\}_{\ell=1}^n} \left[\sup_{\bm{\theta} \in \Theta(\delta,\cS)} \sup_{f  \in \cF}\abs{\frac{1}{n}\sum_{\ell=1}^{n}R_{\ell}f(i_\ell(\bm{\theta}))}\right]\right] \notag \\
&\qquad \qquad     + 2\mspace{2 mu}b \mspace{2 mu}\EE_{\mathsf{Q}_n}\!\left[ \EE_{\big| \cup_{\bm{\theta} \in \Theta(\delta,\cS)} \{i_{\ell}(\bm{\theta}),j_{\ell}(\bm{\theta})\}_{\ell=1}^n}\left[ \sup_{\bm{\theta} \in \Theta(\delta,\cS)}\sup_{f  \in\cF}\abs{\frac{1}{n}\sum_{\ell=1}^{n}R_{\ell}f(j_\ell(\bm{\theta}))}\right]\right] \label{eq:lawiterexp} \\
    & \stackrel{(a)}{\leq} 24 n^{-\frac 12}  \EE_{\mathsf{P}_n} \mspace{-2 mu}\left[\int_{0}^{\infty}\sqrt{ \log  N\big(\epsilon,\cF,\|\cdot\|_{\infty,\cX}\big)}d\epsilon \right]+24 b \EE_{\mathsf{Q}_n} \mspace{-2 mu}\left[\int_{0}^{\infty}\sqrt{ \log  N\big(\epsilon, \cF,\|\cdot\|_{\infty,\cX}\big)} d\epsilon \right] \notag \\
&\leq 48 n^{-\frac{1}{2}} (b+1) \log  b\int_{0}^{1}\sqrt{ \log  N\big(2\epsilon \log  b,\cF,\|\cdot\|_{\infty,\cX}\big)}d\epsilon \notag \\
&\stackrel{(b)}{\leq } 48(b+1) \log  b~   \kappa(\cF) n^{-\frac 12},\label{eq:expempesterrbnd}
\end{align}
where the outer expectations in the first  and second terms in~\eqref{eq:lawiterexp} are with respect to probability measures $\mathsf{P}_n \coloneqq \bigotimes_{\bm{\theta} \in \Theta(\delta,\cS) }(p_{\bm{\theta}}^\rho)^{\otimes n}$ and  $\mathsf{Q}_n \coloneqq \bigotimes_{\bm{\theta} \in \Theta(\delta,\cS) }(q_{\bm{\theta}}^\sigma)^{\otimes n}$, respectively, and 
\begin{enumerate}[(a)]
    \item follows by applying the standard maximal inequality for bounding the expectation of a separable sub-Gaussian process  with respect to the $\|\cdot\|_{\infty,\cX}$ pseudometric  on $\cF \times \Theta(\delta,\cS)$  (see e.g.~\cite[Corollary 2.2.8.]{AVDV-book} and \cite[Corollary 5.25]{VanHandel-book}) as given below:
    \begin{align}
    \EE_{\{R_{\ell}\}_{\ell=1}^n}\left[ Z^{(n)}_{f,\bm{\theta}}\right] \leq 12 \int_{0}^{\infty}\sqrt{ \log  N\big(\epsilon,\cF,\|\cdot\|_{\infty,\cX}\big)}d\epsilon. \notag 
    \end{align}
    \item  follows from~\eqref{eq:coventcond}.
\end{enumerate}
From~\eqref{eq:approxerr} and~\eqref{eq:expempesterrbnd},~\eqref{eq:minimax risk} follows via the triangle inequality and noting that the resulting bound holds uniformly for $(\rho,\sigma) \in \cS$. 

\medskip

\noindent 
To prove~\eqref{eq:tailineQNEmrel}, we will use the  theorem stated below\footnote{While \cite[Theorem 5.29]{VanHandel-book} is stated without specifying explicit constants, a straightforward computation by following the proof therein yields the constants stated in~\eqref{eq:tailineqsubgaussian}. }, which gives a tail probability bound for the deviation of the supremum of a sub-Gaussian process from its associated entropy integral. 
\begin{theorem}{\cite[Theorem 5.29]{VanHandel-book}}\label{thm:tailineq}
Let $(X_{\theta})_{\theta \in \Theta }$ be a separable sub-Gaussian process on a pseudometric space $(\Theta,\mathsf{d})$. Then, for all $\theta_0 \in \Theta$ and $z\geq 0$, we have
\begin{align}
  &  \mathbb{P}\left(\sup_{\theta \in \Theta} X_{\theta}-X_{\theta_0} \geq 32 \int_{0}^{\infty} \sqrt{\log N(\epsilon,\Theta,\mathsf{d})}d\epsilon+z \right) \leq 2\mspace{2 mu}e^{-\frac{z^2}{72\mspace{2 mu}\mathsf{diam}(\Theta,\mathsf{d})^2}}, \label{eq:tailineqsubgaussian}
\end{align}
where $\mathsf{diam}(\Theta,\mathsf{d})\coloneqq  \sup\limits_{\theta,\tilde{\theta} \in \Theta}\mathsf{d}(\theta,\tilde{\theta})$. 
\end{theorem}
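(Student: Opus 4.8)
The plan is to prove this by running the generic (Dudley-type) chaining argument underlying \cite[Theorem~5.29]{VanHandel-book}, carrying all numerical factors through so as to pin down the explicit constants $32$ and $72$. Write $D\coloneqq\mathsf{diam}(\Theta,\mathsf{d})$. First I would dispose of the trivial cases: if $D=\infty$ or $\int_0^\infty\sqrt{\log N(\epsilon,\Theta,\mathsf{d})}\,d\epsilon=\infty$, the right-hand side of~\eqref{eq:tailineqsubgaussian} already dominates the probability in question. Next, using separability of $(X_\theta)_{\theta\in\Theta}$, I would reduce to a finite index set: there is a countable dense $\Theta_*\subseteq\Theta$ with $\sup_{\theta\in\Theta}X_\theta=\sup_{\theta\in\Theta_*}X_\theta$ almost surely, and density yields $\mathsf{diam}(\Theta',\mathsf{d})\le D$ and $\int_0^\infty\sqrt{\log N(\epsilon,\Theta',\mathsf{d})}\,d\epsilon\le\int_0^\infty\sqrt{\log N(\epsilon,\Theta,\mathsf{d})}\,d\epsilon$ for every $\Theta'\subseteq\Theta_*$; so it suffices to prove the bound for every finite $\Theta'\ni\theta_0$ and then exhaust $\Theta_*$ by finite sets, passing to the limit by monotone convergence. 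Thus assume $\Theta$ finite with $D<\infty$.

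The heart of the argument is chaining. For $j\ge0$ I would fix a minimal $2^{-j}D$-covering $\Theta_j\subseteq\Theta$, so that $\Theta_0=\{\theta_0\}$, $|\Theta_j|=N(2^{-j}D,\Theta,\mathsf{d})$, and $\Theta_j=\Theta$ for all large $j$; letting $\pi_j(\theta)$ be a nearest point of $\Theta_j$ to $\theta$, one has the finite telescoping identity $X_\theta-X_{\theta_0}=\sum_{j\ge1}\bigl(X_{\pi_j(\theta)}-X_{\pi_{j-1}(\theta)}\bigr)$, with each level-$j$ link of metric length $\mathsf{d}\bigl(\pi_j(\theta),\pi_{j-1}(\theta)\bigr)\le 3\cdot2^{-j}D$ and at most $N(2^{-j}D,\Theta,\mathsf{d})^2$ distinct links at level $j$. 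Each link is centered sub-Gaussian with variance proxy $(3\cdot2^{-j}D)^2$, so a union bound together with the Chernoff tail estimate gives, for any $s_j>0$,
\[
\mathbb{P}\!\left(\max_{\theta\in\Theta}\bigl(X_{\pi_j(\theta)}-X_{\pi_{j-1}(\theta)}\bigr)>3\cdot2^{-j}D\bigl(2\sqrt{\log N(2^{-j}D,\Theta,\mathsf{d})}+s_j\bigr)\right)\le e^{-s_j^2/2},
\]
since $\bigl(2\sqrt{\log N}+s_j\bigr)^2/2\ge2\log N+s_j^2/2$ absorbs the factor $N^2$.

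To conclude, I would choose the $s_j$ so that the failure budget is geometric, e.g.\ $s_j^2=2j\log2+z^2/(36D^2)$, which makes $\sum_{j\ge1}e^{-s_j^2/2}\le e^{-z^2/(72D^2)}\le 2e^{-z^2/(72D^2)}$. On the complement of the level-$j$ bad events, summing the link bounds gives $\sup_\theta\bigl(X_\theta-X_{\theta_0}\bigr)\le 6\sum_{j\ge1}2^{-j}D\sqrt{\log N(2^{-j}D,\Theta,\mathsf{d})}+3D\sum_{j\ge1}2^{-j}s_j$; comparing the dyadic sum with the integral (using that $\epsilon\mapsto N(\epsilon,\Theta,\mathsf{d})$ is non-increasing) bounds the first term by an absolute constant times $\int_0^\infty\sqrt{\log N(\epsilon,\Theta,\mathsf{d})}\,d\epsilon$, and in the second term the $z/(6D)$ part of $s_j$ contributes $z/2$ while the $\sqrt{2j\log2}$ part contributes a further $O(D)$ term that is again absorbed into the entropy integral (since $N(\epsilon,\Theta,\mathsf{d})\ge2$ for $\epsilon<D$ forces $\int_0^\infty\sqrt{\log N}\,d\epsilon\ge D\sqrt{\log2}$). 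Tallying all factors then yields~\eqref{eq:tailineqsubgaussian} with the constants $32$ and $72$, and undoing the reduction to finite $\Theta$ finishes the proof. I expect the main obstacle to be entirely arithmetical: obtaining exactly $32$ and $72$, rather than some unspecified absolute constants, forces one to be careful about the choice of the $s_j$ and about the dyadic-sum-to-integral comparisons; the only other point requiring care is the finite-to-separable reduction in the first step, which is where the separability hypothesis is actually used.
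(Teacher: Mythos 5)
The paper does not actually prove this statement---it is imported from \cite[Theorem 5.29]{VanHandel-book} with only a footnote asserting that the explicit constants $32$ and $72$ follow from ``a straightforward computation''---and your proposal carries out exactly that computation via the standard dyadic chaining argument used in the cited reference; the structure and the arithmetic (roughly $23\!\cdot\!\int_0^\infty\sqrt{\log N(\epsilon,\Theta,\mathsf{d})}\,d\epsilon + z/2$ on the good event against a failure probability of $e^{-z^2/(72D^2)}$) are correct and leave comfortable slack under $32$ and $72$. Two small slips are worth repairing: with the paper's internal-cover definition, $N(\epsilon,\Theta',\mathsf{d})\le N(\epsilon,\Theta,\mathsf{d})$ can fail for subsets $\Theta'\subseteq\Theta$, so in the separability reduction you should chain over the finite subset using nets of the full space $\Theta$ (whose covering numbers are the ones in the statement) rather than intrinsic nets of $\Theta'$; and $N(\epsilon,\Theta,\mathsf{d})\ge 2$ is only guaranteed for $\epsilon<D/2$, not $\epsilon<D$, so the lower bound on the entropy integral is $(D/2)\sqrt{\log 2}$ rather than $D\sqrt{\log 2}$---both fixes still land the final tally well below the stated constants.
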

From~\eqref{eq:bnddev-mrel}, it follows that
\begin{align}
    & \abs{\sD_{\mathsf{M},\Theta(\delta,\cS),\cF}(\rho\Vert\sigma) -\hat\sD_{\mathsf{M},\Theta(\delta,\cS),\cF}^n}  \leq  \sup_{\bm{\theta} \in \Theta(\delta,\cS)}\mspace{-2 mu}\sup_{f  \in \cF} \abs{\bar Z^{(n)}_{f,\bm{\theta}}},\label{eq:uppbnddesterr} 
    \end{align}
    where the above inequality again used~\eqref{eq:ineqsup} with
    \begin{align}
  & \bar Z^{(n)}_{f,\bm{\theta}} \coloneqq   \frac{1}{n}\sum_{\ell=1}^{n}f(i_\ell(\bm{\theta}))-\frac{1}{n}\sum_{\ell=1}^{n}
e^{f(j_\ell(\bm{\theta}))}-\EE_{p_{\bm{\theta}}^\rho}[f]+\EE_{q_{\bm{\theta}}^\sigma}\big[e^f\big]. \notag 
\end{align}

To apply Theorem~\ref{thm:tailineq}, we will show that $\{\bar Z^{(n)}_{f,\bm{\theta}}\}_{(f,\bm{\theta})\in \cF \times \Theta(\delta,\cS)}$ is a centered  separable sub-Gaussian process with respect to an appropriate pseudo-metric defined on $ \cF \times \Theta(\delta,\cS)$. To see this, observe that $\EE\big[\bar Z^{(n)}_{f,\bm{\theta}}\big]=0$ for all $(f,\bm{\theta})\in \cF \times \Theta(\delta,\cS)$. Moreover,  
\begin{align}
\bar Z^{(n)}_{f,\bm{\theta}}-  \bar Z^{(n)}_{\tilde f,\tilde{\bm{\theta}}}  & = \sum_{\ell=1}^{n} \frac{1}{n} \bigg( \big(f(i_\ell(\bm{\theta}))-f(i_\ell(\tilde{\bm{\theta}})\big)-\big(
e^{f(j_\ell(\bm{\theta}))}-e^{\tilde f(j_\ell(\tilde{\bm{\theta}}))}\big)-\big(\EE_{p_{\bm{\theta}}^\rho}[f]-\EE_{p_{\tilde{\bm{\theta}}}^\rho}[\tilde f]\big) \notag \\
 &\qquad \qquad \qquad +\big(\EE_{q_{\bm{\theta}}^\sigma}\big[e^f\big]-\EE_{q_{\tilde{\bm{\theta}}}^\sigma}\big[e^{\tilde f}\big] \big) \bigg). \notag 
\end{align}
Since $\{(i_\ell(\bm{\theta}),j_\ell(\bm{\theta}))\}_{\ell=1}^n$ are i.i.d. variables, the moment generating function of $\bar Z^{(n)}_{f,\bm{\theta}}-  \bar Z^{(n)}_{\tilde f,\tilde{\bm{\theta}}}$ is equal to the product of the moment generating function of each term within the sum above. Also, by using the  Lipschitz property of $e^x$ mentioned earlier, each term within that sum is upper bounded as follows: 
\begin{align}
&\bigg|\frac{1}{n} \bigg( \big(f(i_\ell(\bm{\theta}))-\tilde f(i_\ell(\tilde{\bm{\theta}})\big)-\big(
e^{f(j_\ell(\bm{\theta}))}-e^{\tilde f(j_\ell(\tilde{\bm{\theta}}))}\big)-\big(\EE_{p_{\bm{\theta}}^\rho}[f]-\EE_{p_{\tilde{\bm{\theta}}}^\rho}[\tilde f]\big) \notag \\
&\qquad \qquad \qquad \qquad \qquad \qquad \qquad \qquad \qquad \qquad \qquad \qquad +\big(\EE_{q_{\bm{\theta}}^\sigma}\big[e^f\big]-\EE_{q_{\tilde{\bm{\theta}}}^\sigma}\big[e^{\tilde f}\big] \big) \bigg) \bigg| \notag \\
& \leq  \frac{2(1+b)}{n}\big\|f-\tilde f\big\|_{\infty,\cX}. \label{eq:bnddiffsep} 
\end{align} 
 Then, it follows from Hoeffding's lemma  that 
\begin{align}
\mathbb{E}\bigg[e^{t\big(\bar Z^{(n)}_{f,\bm{\theta}}-  \bar Z^{(n)}_{\tilde f,\tilde{\bm{\theta}}}\big)} \bigg]\leq   e^{\frac{1}{2}t^2 \mathsf{d}_{b,n}(f,\tilde f)^2}, \notag
\end{align}
where $\mathsf{d}_{b,n}(f,\tilde f)$ is the pseudo-metric on $\cF \times \Theta(\delta,\cS)$ given by
\begin{align}
\mathsf{d}_{b,n}(f,\tilde f)\coloneqq \mathsf{d}_{b,n}\big((f,\theta),(\tilde f,\tilde \theta)\big) \coloneqq 2(1+b)n^{-\frac 12}\big\|f-\tilde f\big\|_{\infty,\cX}. \notag
\end{align}
  Thus, $\{\bar Z^{(n)}_{f,\bm{\theta}}\}_{(f,\bm{\theta})\in \cF \times \Theta(\delta,\cS)}$ is a separable sub-Gaussian process on the pseudo-metric space $\big(\cF \times \Theta(\delta,\cS),\mathsf{d}_{b,n}\big)$, where the separability  follows from \eqref{eq:bnddiffsep} and the fact that $\cF \times \Theta(\delta,\cS)$ is totally bounded  with respect to the pseudometric $\big\|\cdot\big\|_{\infty,\cX}$ by~\eqref{eq:coventcond}  in Assumption~\ref{Assump1}. The total boundedness is because $N (\epsilon,\cF \times \Theta, \norm{\cdot}_{\infty,\cX}) $ is finite for all $\epsilon>0$ due to \eqref{eq:coventcond} and the monotonically non-increasing property of covering number in $\epsilon$. Also, 
  \begin{align}
   \mathsf{diam}(\cF \times \Theta(\delta,\cS),\mathsf{d}_{b,n}) =4n^{-\frac 12}(1+b)\log b , \notag  
  \end{align}
  and 
  \begin{align}
  \int_{0}^{\infty} \sqrt{\log N(\epsilon,\cF \times \Theta(\delta,\cS),\mathsf{d}_{b,n})}d\epsilon  &= \int_{0}^{\mathsf{diam}(\cF \times \Theta(\delta,\cS),\mathsf{d}_{b,n})} \sqrt{\log N(\epsilon,\cF \times \Theta(\delta,\cS),\mathsf{d}_{b,n})}d\epsilon  \notag \\
      &=4n^{-\frac 12}(1+b) (\log b) \int_{0}^{1} \sqrt{\log N(2\epsilon \log b,\cF, \| \cdot \|_{\infty,\cX})}d\epsilon \notag \\
      &= 4 n^{-\frac 12}(1+b) (\log b) \mspace{2 mu} \kappa(\cF), \notag
  \end{align}
where the first equality follows because in an arbitrary pseudometric space $(\Theta,\mathsf{d})$, $N(\epsilon, \Theta,\mathsf{d})=1$ for $\epsilon \geq \mathsf{diam}(\Theta,\mathsf{d})$, while the second equality uses a change of variables along with  $N(a\epsilon, \Theta,a\mathsf{d})=N(\epsilon,\Theta,\mathsf{d})$ for $a>0$. 
Then,~\eqref{eq:tailineqsubgaussian} in Theorem~\ref{thm:tailineq} (with $\bar Z^{(n)}_{0,\bm{\theta}}=0$ which corresponds to $f=0 \in \cF$)   yields 
  \begin{align}
\mathbb{P}\left(\sup_{\bm{\theta} \in \Theta(\delta,\cS)}\mspace{-2 mu}\sup_{f  \in \cF} \bar Z^{(n)}_{f,\bm{\theta}}\geq 128 \mspace{2 mu}n^{-\frac 12}(1+b) \log b \mspace{2 mu} \kappa(\cF)+ \bar z \right) \leq 2\mspace{2 mu}e^{-\frac{n \bar  z^2}{72(1+b)^2(\log b)^2}}, ~\bar z \geq 0,\notag
  \end{align}
  for some $c>0$. 
  Noting that the above also holds with $\bar Z^{(n)}_{f,\bm{\theta}}$ replaced by $-\bar Z^{(n)}_{f,\bm{\theta}}$, we obtain
    \begin{align}
  \mathbb{P}\left(\sup_{\bm{\theta} \in \Theta(\delta,\cS)}\mspace{-2 mu}\sup_{f  \in \cF} \abs{\bar Z^{(n)}_{f,\bm{\theta}}} \geq 128 \mspace{2 mu} n^{-\frac 12}(1+b) \log b \mspace{2 mu} \kappa(\cF)+\bar z \right) \leq  2\mspace{2 mu}e^{-\frac{n \bar z^2}{72(1+b)^2(\log b)^2}}. \notag
  \end{align}
From~\eqref{eq:effective-err} and~\eqref{eq:approxerr}, we have with $\xi_{\cF,\cS,\delta,b}\coloneqq \varepsilon(\cF,\cS)(b+1)+ 2\delta   (b+\log  b )$ that
\begin{align}
&  \mathbb{P}\left( \abs{\sD_{\mathsf{M}}(\rho\Vert\sigma) -\hat\sD_{\mathsf{M},\Theta(\delta,\cS),\cF}^n} \geq \xi_{\cF,\cS,\delta,b}+ 128n^{-\frac 12}(1+b) \log b \mspace{2 mu} \kappa(\cF)+\bar z \right) \leq  2\mspace{2 mu}e^{-\frac{n \bar z^2}{72(1+b)^2(\log b)^2}},  \notag
\end{align}
for some $c \geq 0$. Setting $\bar z= \sqrt{72} \mspace{2 mu}z(1+b) \log b$ and taking the supremum over $(\rho,\sigma) \in \cS$ leads to the claim in~\eqref{eq:tailineQNEmrel},  thus completing the proof. 
\subsection{Proof of Corollary~\ref{Cor-shallownn}}
\label{Sec:Cor-shallownn-proof}
We will apply Theorem~\ref{Thm:minimaxrisk-QNE} with $\cF=\tilde \cF^{\mathrm{SNN}}(d,\log  b ) \subseteq \cF_b$ and $\cS=\cS_d(b)$. To this end, we first verify that Assumption~\ref{Assump1}  is satisfied as required therein. Note  that  $\cX=\{\ket{i}, i \in [1:d]\}$ for the embedding $e_d: i \mapsto \ket{i}$.  By Lemma~\ref{Lem:bndtraceratio}, for all $(\rho,\sigma) \in \cS_d(b)$ and $ \Lambda^{\star}(\rho,\sigma)=\{\lambda_i^{\star}\}_{i \in [1:d]}$, we have 
$\abs{\lambda_i^{\star}} \leq \log  b $ for all $i \in [1:d]$. Hence, \begin{align}
    f^{\star}(x)\coloneqq \sum_{i=1}^d \lambda_i^{\star} \varphi_{\mathrm{R}}\big( \langle i, x \rangle \big) \in \tilde \cF^{\mathrm{SNN}}(d,\log  b ), \notag
\end{align}
  satisfies 
  $\max_{i \in [1:d]}\abs{f^{\star}(e_d(i))-\lambda_i^{\star}}=0$. Further, observe that $\norm{f}_{\infty,\cX} \leq \log  b $ for  $f \in \tilde \cF^{\mathrm{SNN}}(d,\log  b )$ and that for all $f=\sum_{i=1}^d \beta_i \varphi_{\mathrm{R}}\big( \langle i, x \rangle \big)$ and $\tilde f =\sum_{i=1}^d \tilde \beta_i \varphi_{\mathrm{R}}\big( \langle i, x \rangle \big)$, we have
  \begin{align}
    \norm{f-\tilde f}_{\infty,\cX}=\max_{i \in [1:d]} \big |\beta_i-\tilde \beta_i \big|. \notag
  \end{align} 
This implies that
\begin{align}
N\big (\epsilon,\tilde \cF^{\mathrm{SNN}}(d,\log  b ), \norm{\cdot}_{\infty,\cX}\big) =N\big (\epsilon,[-\log  b ,\log  b ]^d, \norm{\cdot}_{\infty}\big) \leq \big(3 (\log  b) \mspace{2 mu} \epsilon^{-1}\big)^{d}, \forall ~\epsilon >0, \notag
\end{align}
where the second inequality used a standard upper bound on the  $\epsilon$-covering number of $\norm{\mspace{1 mu}\cdot\mspace{1 mu}}_{\infty}$ ball of radius $\log  b $ in $\RR^d$.  
Hence
\begin{align}
 \log   N (2\epsilon \log  b,\tilde \cF^{\mathrm{SNN}}(d,\log  b ), \norm{\cdot}_{\infty,\cX}) \leq d \log  (3\epsilon^{-1}) \leq 3d \epsilon^{-1}, \forall ~\epsilon>0. \notag
\end{align}
From the above, it follows   that Assumption~\ref{Assump1}  holds with 
\begin{align}
&\varepsilon\big(\cF^{\mathrm{SNN}}(d,\log  b ),\cS_d(b)\big)=0, \notag \\
&\bar \kappa\big(\cF^{\mathrm{SNN}}(d,\log  b ),\epsilon\big)=d \epsilon^{-1}, \notag \\
\mbox{ and }&\kappa\big(\cF^{\mathrm{SNN}}(d,\log  b )\big)=  \int_{0}^1\sqrt{d \epsilon^{-1}} d\epsilon= 2 \sqrt{d}. \notag    
\end{align}
Substituting these in~\eqref{eq:minimax risk} and~\eqref{eq:tailineQNEmrel} leads to the desired claim.

\subsection{Proof of Proposition~\ref{Prop:deepnn}}
\label{Sec:Prop:deepnn-proof}
We will  apply Theorem~\ref{Thm:minimaxrisk-QNE} with $\cF=\tilde \cF^{\mathrm{NN}}(L_{\varepsilon,k,a})\subseteq \cF_b$ and $\cS= \tilde{\cS}_{k,d}(b,a,\varepsilon)\subseteq \cS_d(b)$. We first verify that Assumption~\ref{Assump1}  is satisfied. Note that for the embedding $e_1\colon i \mapsto i/d$,  we have $\cX=\{1/d,2/d,\ldots,1\}$. By definition, for any $(\rho,\sigma) \in \tilde{\cS}_{k,d}(b,a,\varepsilon)$ and any enumeration $\Lambda^{\star}(\rho,\sigma)=\{\lambda_i^{\star}\}_{i=1}^d$, there exists a    polynomial $p \in  \cP_k(a)$ of degree at most $k$ such that 
\begin{align}
    \max_{i \in [1:d]}\abs{p\big(e_1(i)\big)-\lambda_i^{\star}}\leq \varepsilon. \notag
\end{align}
Now, suppose the neural class 
$\cF^{\mathrm{NN}}(L,K,M,1)$  with depth $L$, width $K$, and parameter  norm $M$ appropriately chosen, is such that there exists $f \in \cF^{\mathrm{NN}}(L,K,M,1)$ satisfying
 \begin{align}
     \norm{f-p}_{\infty,\cX} \leq \varepsilon. \label{eq:polyapprox-deepnn}
 \end{align}
 Then,  we  obtain via the triangle inequality that 
\begin{align}
     \max_{i \in [1:d]} \abs{f(e_1(i))-\lambda_i^{\star}} \leq 2 \varepsilon. \label{eq:deepnnapprox} 
\end{align}
Hence,~\eqref{eq:approxcond} holds with $\varepsilon\big(\cF^{\mathrm{NN}}(L,K,M,1),\tilde{\cS}_{k,d}(b,a,\varepsilon)\big) =2 \varepsilon$. 
For quantifying the error in approximating polynomials via neural nets in~\eqref{eq:polyapprox-deepnn}, we will use the error bounds from~\cite{Dennis-2021}.  
In particular,~\cite[Proposition~III.5]{Dennis-2021}
implies that there exists $f \in \cF^{\mathrm{NN}}(L_{\varepsilon,k,a},K,M,1) $ with $K=9$, $M = 1$ and  
\begin{align}
    L_{\varepsilon,k,a} &\coloneqq c k\left(\log  (\varepsilon^{-1})+ \log  k+\log  a \right), \label{eq:depthnn} 
\end{align}
 such that~\eqref{eq:polyapprox-deepnn} holds,  where  $c>0$ is some constant. 
 By considering the truncated neural class given in~\eqref{eq:truncnnclass}, 
we obtain that 
\begin{align}
    \norm{f}_{\infty,\cX} \leq \log  b ,~ \forall f \in \tilde \cF^{\mathrm{NN}}(L_{\varepsilon,k,a},K,M,1).
\end{align}
Moreover, it  is easy to see that~\eqref{eq:polyapprox-deepnn} continues to hold with $f$ replaced by the corresponding $\tilde f$ since $\abs{p(i/d)} \leq \log  b $ for all $i \in [1:d]$. 

Next, we obtain a bound on the covering entropy of $\tilde \cF^{\mathrm{NN}}(L_{\varepsilon,k,a},K,M,1)$. We have
\begin{align}
  N \Big(\epsilon,\tilde \cF^{\mathrm{NN}}(L,K,M,1), \norm{\mspace{1 mu}\cdot\mspace{1 mu}}_{\infty,\cX}\Big) \leq  N \Big(\epsilon, \cF^{\mathrm{NN}}(L,K,M,1), \norm{\mspace{1 mu}\cdot \mspace{1 mu}}_{\infty,\cX}\Big). \notag 
\end{align}
Applying the covering entropy bound\footnote{This bound applies since covering number of a truncated function class is upper bounded by that of the original function class.}  given in~\cite[Lemma 5]{Schmidt-Hieber-2020}  to the neural class $\tilde \cF^{\mathrm{NN}}(L) \coloneqq \tilde \cF^{\mathrm{NN}}(L,9,1,1)$  
yields (with $p_0=p_{L}=1$ and $p_l=9$ for $1 \leq l \leq L-1 $ in the statement therein) that  
\begin{align}
 \log  N\big(\epsilon,\tilde \cF^{\mathrm{NN}}(L), \norm{\mspace{1 mu}\cdot \mspace{1 mu}}_{\infty,\cX} \big) 
  & \leq (28+81(L-2)) \log \! \Big(32L 10^{2(L-1)} \epsilon^{-1}\Big)   \leq 81 L  \log \!  \left(10^{3L} \epsilon^{-1}\right). \notag
\end{align}
From this, we obtain using $\log x \leq x$ for all $x \geq 0$ and $\sqrt{x+y} \leq \sqrt{x}+\sqrt{y}$ for all $x,y \geq 0$ that
\begin{align} \int_{0}^{1} \sqrt{\log   N (2\epsilon \log  b,\tilde \cF^{\mathrm{NN}}(L), \norm{\cdot}_{\infty,\cX}) } \, d\epsilon &\leq 24 L+ 9\sqrt{\frac{L}{2\log b}}   \int_{0}^{1} \epsilon^{-\frac 12}\, d\epsilon 
  \leq 24L +13\sqrt{\frac{L}{\log b}}. \label{eq:coventbnddeepnn}
\end{align}
Hence, from~\eqref{eq:deepnnapprox} and~\eqref{eq:coventbnddeepnn}, we have that $\tilde \cF^{\mathrm{NN}}( L_{\varepsilon,k,a}) \subseteq \cF_b$ and $\tilde{\cS}_{k,d}(b,a, \varepsilon) \subseteq \cS_d(b)$  satisfies Assumption~\ref{Assump1}  with  $\varepsilon\big(\tilde \cF^{\mathrm{NN}}( L_{\varepsilon,k,a}),\tilde{\cS}_{k,d}(b,a,\varepsilon)\big)=2\varepsilon$ and $\kappa\big(\tilde \cF^{\mathrm{NN}}( L_{\varepsilon,k,a})\big) \leq  B_{\varepsilon,k,a,b}/\log b$. Then, applying Theorem~\ref{Thm:minimaxrisk-QNE} with  $\cF=\tilde \cF^{\mathrm{NN}}( L_{\varepsilon,k,a})$ and $\cS=\tilde{\cS}_{k,d}(b,a, \varepsilon)$  yields the desired claims.
\subsection{Proof of Theorem~\ref{Thm:minimaxrisk-QNE-Renyi}}\label{Sec:Thm:minimaxrisk-QNE-Renyi-proof}
  The proof largely follows via  steps similar to those in the proof of Theorem~\ref{Thm:minimaxrisk-QNE}. So, we only highlight the differences. Fix $(\rho,\sigma) \in \cS$. 
   Let 
 \begin{align}
\sD_{\mathsf{M},\alpha,\Theta(\delta,\cS),\cF}(\rho\Vert\sigma)  \coloneqq \sup_{\bm{\theta} \in \Theta(\delta,\cS)}\sD_{\mathsf{M},\alpha, \bm{\theta},\cF}(\rho\Vert\sigma), \notag
\end{align}
where 
\begin{align}
\sD_{\mathsf{M},\alpha,\bm{\theta},\cF}(\rho\Vert\sigma) \coloneqq \sup_{H \in \cH \big(\bm{\theta},\cF\big)}\frac{\alpha}{\alpha-1}\log  \operatorname{Tr}[e^{\left(  \alpha-1\right)  H}\rho
]-\log  \operatorname{Tr}[e^{\alpha H}\sigma],\label{eq:inneropt-Ren}
\end{align}
and $\cH \big(\bm{\theta},\cF\big)$ is as defined in~\eqref{eq:Hermopfixedqcir}. 
Then,  the  triangle inequality implies that
\begin{align}
  \abs{\sD_{\mathsf{M},\alpha}(\rho\Vert\sigma) -\hat\sD_{\mathsf{M},\alpha,\Theta(\delta,\cS),\cF}^n}  &\leq   \abs{\sD_{\mathsf{M},\alpha}(\rho\Vert\sigma) -\sD_{\mathsf{M},\alpha,\Theta(\delta,\cS),\cF}(\rho\Vert\sigma)}  \notag \\
  & \qquad \qquad \qquad  + \abs{\sD_{\mathsf{M},\alpha,\Theta(\delta,\cS),\cF}(\rho\Vert\sigma) -\hat\sD_{\mathsf{M},\alpha,\Theta(\delta,\cS),\cF}^n}, \label{eq:effective-err-Renyi}
\end{align}
where the first and second terms on the right-hand side are the approximation and estimation errors, respectively.

Let $H_{\alpha}^{\star} \coloneqq H_{\alpha}^{\star}(\rho,\sigma)$ be an optimizer of~\eqref{eq:renyi_rel_ent_variational} satisfying~\eqref{eq:opteigvalues} and~\eqref{eq:optrenmeasrelent}.  Recall from the definition of the set $\cU_{\alpha}(\delta,\cS)$ that there exists an enumeration $\mathbf{P}_{\alpha}\coloneqq\{P_{i,\alpha}^{\star}\}_{i=1}^d $ of  orthogonal rank-one eigenprojectors of $H_{\alpha}^{\star}$, 
and  $U(\bm{\theta}^{\star}) $ for some $\bm{\theta}^{\star} \in \Theta_{\alpha}(\delta,\cS)$ such that~\eqref{eq:unitaryapprox} holds with $U^{\star}=U^{\star}(\mathbf{P}_{\alpha})$ and $U(\bm{\theta})=U(\bm{\theta}^{\star})$, where $U^{\star}(\mathbf{P}_{\alpha})$ is the unitary such that  $U^{\star}(\mathbf{P}_{\alpha})\ket{i}\!\bra{i}(U^{\star}(\mathbf{P}_{\alpha})\big)^{\dag}=P_{i,\alpha}^{\star}$.  Let $H_{\alpha}^{\star}=\sum_{i=1}^D \lambda_{i,\alpha}^{\star} P_{i,\alpha}^{\star}$ be the spectral decomposition of $H_{\alpha}^{\star}$ with $\Lambda_{\alpha}^{\star}\coloneqq \Lambda_{\alpha}^{\star}(\rho,\sigma)$ denoting the set of eigenvalues. By definition of $\varepsilon_{\alpha}(\cF,\cS)$,  there exists $f^{\star} \in \cF$ such that
\begin{align}
    \max_{i \in [1:d]} \abs{f^{\star}(e_m(i))-\lambda_{i,\alpha}^{\star}} \leq \varepsilon_{\alpha}(\cF,\cS).\label{eq:apprcnnren}
\end{align} 
  Let   
$H_{f^{\star},\bm{\theta}^{\star}}$ be as defined~\eqref{eq:paramhermop}. 
Then,  we can bound the approximation error as 
\begin{align}
&\abs{\sD_{\mathsf{M,\alpha}}(\rho\Vert\sigma) -\sD_{\mathsf{M,\alpha},\Theta_{\alpha}(\delta,\cS),\cF}(\rho\Vert\sigma)} \notag \\
& \stackrel{(a)}{\leq} \frac{\alpha}{\abs{\alpha-1}}\Big\|e^{(\alpha-1)H_{\alpha}^{\star}}-e^{(\alpha-1)H_{f^{\star},\bm{\theta}^{\star}}}\Big\|+\Big\|e^{\alpha H_{\alpha}^{\star}}-e^{\alpha H_{f^{\star},\bm{\theta}^{\star}}}\Big\| \notag \\
& \stackrel{(b)}{\leq} \alpha \left(b^{\abs{\alpha-1}}+b^{\alpha}\right) \varepsilon_{\alpha}(\cF,\cS) +2\left(\frac{\alpha}{\abs{\alpha-1}} b^{\abs{\alpha-1}}+b^{\alpha}\right)\delta, \label{eq:approxbnd-Renyi}
\end{align}
where $(a)$ follows via similar steps  leading to~\eqref{eq:euppbndefferr} and $(b)$ is via an application of Lemma~\ref{Lem:opnorm-diff} as leading to~\eqref{eq:approxerr}. For instance,  
\begin{align}
   \norm{e^{\alpha H_{\alpha}^{\star}}-e^{\alpha 
H_{f^{\star},\bm{\theta}^{\star}}}} 
&\leq \norm{e^{\alpha \Lambda_{\alpha}^{\star}}-e^{\alpha 
 F^{\star}}} + \left(\norm{e^{\alpha \Lambda_{\alpha}^{\star}}}+\norm{e^{\alpha 
 F^{\star}}}\right) \norm{U^{\star}(\mathbf{P}_{\alpha})-U(\bm{\theta}^{\star})}\notag \\
 & \leq \alpha  b^{\alpha} \max_{i \in [1:d]} \abs{\lambda_{i,\alpha}^{\star}- f^{\star}\!\left(e_m(i)\right)} + 2 b^{\alpha} \delta\notag \\
 & \leq  b^{\alpha} \big(\alpha \varepsilon_{\alpha}(\cF,\cS) + 2\delta\big), \notag
\end{align}
where the first inequality is by application of Lemma~\ref{Lem:opnorm-diff}, and second inequality uses that the Lipschitz constant of  $e^{\alpha x}$ is  bounded by $\alpha e^{\alpha \log  b }=\alpha b^{\alpha}$ for $\abs{x} \leq \log  b $ along with 
$\norm{e^{\alpha f^{\star}}}_{\infty,\cX} \leq  b^{\alpha} $.  Similarly, 
\begin{align}
    \norm{e^{(\alpha-1) H_{\alpha}^{\star}}-e^{(\alpha-1) 
 H_{f^{\star},\bm{\theta}^{\star}}}} &\leq b^{\abs{\alpha-1}} \big(\abs{\alpha-1} \varepsilon_{\alpha}(\cF,\cS) + 2\delta\big). \notag 
\end{align}
Also, following analogous steps leading to~\eqref{eq:expempesterrbnd}, we have
 \begin{align}
  & \EE \!\left[\abs{\sD_{\mathsf{M},\alpha,\Theta_{\alpha}(\delta,\cS),\cF}(\rho\Vert\sigma) -\hat\sD_{\mathsf{M},\alpha,\Theta_{\alpha}(\delta,\cS),\cF}^n}\right] \notag \\
& \leq 24  n^{-\frac 12} \alpha b^{\abs{\alpha-1}} \int_{0}^{\infty}\sqrt{ \log  N\big(\epsilon,\cF,\|\cdot\|_{\infty,\cX}\big)}d\epsilon + 24 n^{-\frac 12} \alpha b^{\alpha}\int_{0}^{\infty}\sqrt{ \log  N\big(\epsilon, \cF,\|\cdot\|_{\infty,\cX}\big)} d\epsilon \notag \\
&\leq 48n^{-\frac{1}{2}} \alpha \big(b^{\alpha}+b^{\abs{\alpha-1}}\big) \log  b \int_{0}^{1}\sqrt{ \log  N\big(2\epsilon \log  b,\cF,\|\cdot\|_{\infty,\cX}\big)}d\epsilon \notag \\
&\stackrel{(c)}{\leq } 48\alpha \big(b^{\alpha}+b^{\abs{\alpha-1}}\big) \log  b ~   \kappa(\cF) n^{-\frac 12}.\label{eq:expempesterrbnd-Renyi}
\end{align}
The claim in~\eqref{eq:minimax risk:Renyi} then follows by substituting~\eqref{eq:approxbnd-Renyi} and~\eqref{eq:expempesterrbnd-Renyi} in~\eqref{eq:effective-err-Renyi}, and taking supremum over $(\rho,\sigma) \in \cS$. The proof of~\eqref{eq:tailineqmeasrenyi}  follows analogous to that of~\eqref{eq:tailineQNEmrel} and hence is omitted.

\subsection{Proof of Proposition~\ref{Prop:copycomp-QNE}} 
\label{Sec:Prop:effqucirccomp-proof}
Recalling that $a (\delta,b)\coloneqq 2\delta   (b+\log  b )$ and applying Corollary~\ref{Cor-shallownn} yields that for every  $(\rho,\sigma) \in \bar{\cS}_d(\delta, b,\cU)$, 
\begin{align}
   & \mathbb{P}\left(\abs{\sD_{\mathsf{M}}(\rho\Vert\sigma) -\hat\sD_{\mathsf{M},\Theta(\cU),\tilde \cF^{\mathrm{SNN}}(d,\log  b )}^n} \geq a (\delta,b) +128(\log  b) (b+1) \left( 2d^{\frac 12}n^{-\frac 12}  + z \right)\right) \leq 2e^{-n  z^2}.\notag 
\end{align}
 Take $\delta=\epsilon_b\coloneqq \epsilon/(4b+4 \log b)$.  Setting $2e^{-nz^2}=1-p$ and solving for $z$ yields $z=n^{-\frac 12} \sqrt{\log \big(2/(1-p)\big)}$. Then, taking 
\begin{align}
    0.5 \epsilon=c' n^{-\frac 12}   (b+1) \Big( d^{\frac 12}  +  \sqrt{\log \big(2/(1-p)\big)}\Big)\log  b, \notag
\end{align}
for some constant $c'>0$, solving for $n$ and multiplying this with the upper bound for $\big|\Theta\big(\cU\big)\big|$  yields the desired upper bound  on copy complexity. 
\subsection{Proof of Proposition~\ref{Prop:copcomp-perminv}}\label{Sec:Prop:copcomp-perminv-proof}
Consider the Schur--Weyl decomposition, $\HH_d=\HH_m^{\otimes N}=\bigoplus_{\kappa \in Y_m^N} \cW_{\kappa} \otimes \cV_{\kappa}$,  given in~\eqref{eq:SWduality}. Recall that  $P_{\kappa}$ denotes the projection onto $\cW_{\kappa} \otimes \cV_{\kappa}$ and  $\cK_{m,N}\coloneqq \bigoplus_{\kappa} \cW_{\kappa}$. Consider the  quantum channels $\bar{\cN}(\cdot)$ and $\bar{\cM}(\cdot)$ given in~\eqref{eq:schurweylproj} and~\eqref{eq:invschurweylproj}, respectively. 
 Note that any $\rho \in \mathfrak{P}\big(\HH_m^{\otimes N}\big)$  has the representation 
\begin{align}
    \rho= \bigoplus_{\kappa \in Y_m^N} \bar{\rho}_{\kappa} \otimes \pi_{\kappa}, \notag
\end{align}
for some endomorphism  $\bar{\rho}_{\kappa} $ on $\cW_{\kappa}$. Hence, $(\rho,\sigma) \in \hat{\cS}_{m,N}(\delta, b,\cU)$ implies  that $\rho,\sigma \in \mathfrak{P}\big(\HH_m^{\otimes N}\big)$. Also observe that  if $\rho >0$, then $\bar{\rho}_{\kappa} >0$ and for $\rho \in \mathfrak{P}\big(\HH_m^{\otimes N}\big) $, $\bar{\cM} \circ \bar{\cN} (\rho)=\rho$. By the data-processing inequality for measured relative entropy, we have for any $\rho,\sigma \in \mathfrak{P}\big(\HH_m^{\otimes N}\big) $:
\begin{align}
  \sD_{\mathsf{M}}( \bar{\cN} (\rho)\Vert  \bar{\cN} (\sigma))  \leq \sD_{\mathsf{M}}(\rho\Vert\sigma)= \sD_{\mathsf{M}}(\bar{\cM} \circ \bar{\cN} (\rho)\Vert\bar{\cM} \circ \bar{\cN} (\sigma)) \leq \sD_{\mathsf{M}}( \bar{\cN} (\rho)\Vert  \bar{\cN} (\sigma)).\label{eq:datprocarg}
\end{align}
This yields
\begin{align}
 \sD_{\mathsf{M}}( \bar{\cN} (\rho)\Vert  \bar{\cN} (\sigma))=\sD_{\mathsf{M}}(\rho \Vert  \sigma), ~\forall ~(\rho,\sigma) \in \hat{\cS}_{m,N}(\delta, b,\cU).   \notag
\end{align}
The above observations provide a way to estimate the measured relative entropy between permutation invariant operators efficiently. Note that $\bar{\cN} (\rho)$ and $\bar{\cN} (\sigma)$ are density operators over a much smaller space $\cK_{m,N}$, whose dimension $\bar d_{m,N}$ is bounded from above by $(N+1)^{m-1} N^{\frac{m(m-1)}{2}}$ (see e.g.,~\cite[Equation 6.16]{Hayashi-2017}). By the variational form for measured relative entropy, we have
\begin{align}
\sD_{\mathsf{M}}(\rho\Vert\sigma)  &=\sup_{H}\operatorname{Tr}[H\bar{\cN}(\rho)
]-\operatorname{Tr}[e^H\bar{\cN}(\sigma)]+1, \notag 
\end{align}
where the supremum is taken over Hermitian operators  $H$ on $\cK_{m,N}$. Since  $(\rho,\sigma) \in \hat{\cS}_{m,N}(\delta, b,\cU)$ implies that $\big(\bar{\cN}(\rho),\bar{\cN}(\sigma)\big) \in \bar{\cS}_{|\cK_{m,N}|}(\delta, b,\cU)$, we have  $ T\big(\bar{\cN}(\rho),\bar{\cN}(\sigma)\big) \leq  \log b$ by definition of $\bar{\cS}_{|\cK_{m,N}|}(\delta, b,\cU)$. From these,  the claim in Proposition~\ref{Prop:copcomp-perminv} follows by an application of Corollary~\ref{Cor-shallownn} via similar steps as in the proof of Proposition~\ref{Prop:copycomp-QNE}. We also remark that since the Thompson metric contracts under application of a quantum channel, we have $T\big(\bar{\cN} (\rho),\bar{\cN} (\sigma)\big) =T(\rho,\sigma)$ for all   $(\rho,\sigma) \in \hat{\cS}_{m,N}(\delta, b,\cU)$ via analogous  arguments as in~\eqref{eq:datprocarg}.

\section{Concluding Remarks}

\label{Sec:concl-remarks}

We studied the performance of the quantum neural estimator proposed in~\cite{QNE-24} for estimating  measured relative entropy and its R\'{e}nyi analogue. We obtained upper bounds on its expected absolute error and showed exponential deviation inequalities under regularity conditions on the underlying states, the neural class, and the parametrized quantum circuit. We then specialized our bounds to  shallow and deep neural nets and derived concrete bounds using known results from the approximation theory of neural nets along with bounds on the covering entropy of the associated neural classes. We also established that, although the quantum neural estimator has a worst case copy complexity that scales exponentially in the number of qudits, this improves  to polynomial when the underlying states are permutation invariant.

Several open questions remain in the context of quantum neural estimation of entropies. One such issue pertains to the quantification of the optimization error and an analysis of optimization dynamics such as the investigation of barren plateaus and local optima, which was neglected here. This would call for an in-depth study of the hybrid optimization landscape induced by the neural net and quantum circuit, and an  analysis of specific parameterizations or architectures that lead to relatively fast convergence of the algorithm. Also relevant is the question of estimating the quantum relative entropy and its R\'{e}nyi generalizations such as the sandwiched and Petz--R\'{e}nyi  relative entropies using quantum neural estimators. Recent progress on estimating quantum relative entropy has been reported in \cite{lu2025estimatingquantumrelativeentropies}.

\section*{Acknowledgements}
S. Sreekumar acknowledges support from the Centre National de la Recherche Scientifique (CNRS) and the CentraleSup\'elec$\mspace{1 mu}$-$\mspace{1 mu}$L2S funding  WRP623. Z. Goldfeld is partially supported by National Science Foundation (NSF) grants CCF-2046018 and CCF-2308446, and the
IBM Academic Award. MMW acknowledges support from the NSF under Grant No.~2329662.

\addcontentsline{toc}{section}{References}

\bibliographystyle{quantum}
\bibliography{ref}

\begin{thebibliography}{10}

\bibitem{NeumannThermodynamikGesamtheiten}
John von Neumann.
\newblock ``Thermodynamik quantenmechanischer gesamtheiten''.
\newblock Nachrichten von der Gesellschaft der Wissenschaften zu G{\"{o}}ttingen, Mathematisch-Physikalische Klasse {\bf 1927}, 273--291~(1927).
\newblock  url:~\url{http://eudml.org/doc/59231}.

\bibitem{Shannon1948ACommunication}
Claude~E. Shannon.
\newblock ``A mathematical theory of communication''.
\newblock \href{https://dx.doi.org/10.1002/J.1538-7305.1948.TB01338.X}{Bell System Technical Journal {\bf 27}, 379--423}~(1948).

\bibitem{Renyi1961OnInformation}
Alfr\'ed R\'enyi.
\newblock ``On measures of entropy and information''.
\newblock Proceedings of the Fourth Berkeley Symposium on Mathematical Statistics and Probability, Volume 1: Contributions to the Theory of Statistics {\bf 4.1}, 547--561~(1961).
\newblock  url:~\url{https://digicoll.lib.berkeley.edu/record/112906}.

\bibitem{Kullback1951OnSufficiency}
Solomon Kullback and Richard~A. Leibler.
\newblock ``On information and sufficiency''.
\newblock \href{https://dx.doi.org/10.1214/aoms/1177729694}{The Annals of Mathematical Statistics {\bf 22}, 79--86}~(1951).

\bibitem{U62}
Hisaharu Umegaki.
\newblock ``Conditional expectations in an operator algebra {IV} (entropy and information)''.
\newblock \href{https://dx.doi.org/10.2996/kmj/1138844604}{Kodai Mathematical Seminar Reports {\bf 14}, 59--85}~(1962).

\bibitem{Petz1985Quasi-entropiesAlgebra}
Dénes Petz.
\newblock ``Quasi-entropies for states of a von {N}eumann algebra''.
\newblock \href{https://dx.doi.org/10.2977/PRIMS/1195178929}{Publications of the Research Institute for Mathematical Sciences {\bf 21}, 787--800}~(1985).

\bibitem{Petz1986Quasi-entropiesSystems}
Dénes Petz.
\newblock ``Quasi-entropies for finite quantum systems''.
\newblock \href{https://dx.doi.org/10.1016/0034-4877(86)90067-4}{Reports on Mathematical Physics {\bf 23}, 57--65}~(1986).

\bibitem{muller2013quantum}
Martin M{\"u}ller-Lennert, Fr{\'e}d{\'e}ric Dupuis, Oleg Szehr, Serge Fehr, and Marco Tomamichel.
\newblock ``On quantum {R\'e}nyi entropies: A new generalization and some properties''.
\newblock \href{https://dx.doi.org/10.1063/1.4838856}{Journal of Mathematical Physics {\bf 54}, 122203}~(2013).

\bibitem{wilde2014strong}
Mark~M. Wilde, Andreas Winter, and Dong Yang.
\newblock ``Strong converse for the classical capacity of entanglement-breaking and {H}adamard channels via a sandwiched {R\'e}nyi relative entropy''.
\newblock \href{https://dx.doi.org/10.1007/s00220-014-2122-x}{Communications in Mathematical Physics {\bf 331}, 593--622}~(2014).

\bibitem{ogawa-nagaoka-2000}
Tomohiro Ogawa and Hiroshi Nagaoka.
\newblock ``Strong converse and {Stein's} lemma in quantum hypothesis testing''.
\newblock \href{https://dx.doi.org/10.1109/18.887855}{IEEE Transactions on Information Theory {\bf 46}, 2428--2433}~(2000).

\bibitem{ogawa-hayashi-2004}
Tomohiro Ogawa and Masahito Hayashi.
\newblock ``On error exponents in quantum hypothesis testing''.
\newblock \href{https://dx.doi.org/10.1109/TIT.2004.828155}{IEEE Transactions on Information Theory {\bf 50}, 1368--1372}~(2004).

\bibitem{Nussbaum-Szkola-2006}
Michael Nussbaum and Arleta Szkoła.
\newblock ``{The Chernoff lower bound for symmetric quantum hypothesis testing}''.
\newblock \href{https://dx.doi.org/10.1214/08-AOS593}{The Annals of Statistics {\bf 37}, 1040--1057}~(2009).

\bibitem{Audenaert-2008}
Koenraad M.~R. Audenaert, Michael Nussbaum, Arleta Szkola, and Frank Verstraete.
\newblock ``{Asymptotic error rates in quantum hypothesis testing}''.
\newblock \href{https://dx.doi.org/https://doi.org/10.1007/s00220-008-0417-5}{Communications in Mathematical Physics {\bf {279}}, {251--283}}~({2008}).

\bibitem{BBH-2021}
Mario Berta, Fernando Brandao, and Christoph Hirche.
\newblock ``On composite quantum hypothesis testing''.
\newblock \href{https://dx.doi.org/10.1007/s00220-021-04133-8}{Communications in Mathematical Physics {\bf 385}, 55--77}~(2021).

\bibitem{Li2017QuantumEstimation}
Tongyang Li and Xiaodi Wu.
\newblock ``Quantum query complexity of entropy estimation''.
\newblock \href{https://dx.doi.org/10.1109/TIT.2018.2883306}{IEEE Transactions on Information Theory {\bf 65}, 2899--2921}~(2017).

\bibitem{Subasi2019EntanglementCircuit}
Yigit Subasi, Lukasz Cincio, and Patrick~J. Coles.
\newblock ``Entanglement spectroscopy with a depth-two quantum circuit''.
\newblock \href{https://dx.doi.org/10.1088/1751-8121/aaf54d}{Journal of Physics A: Mathematical and Theoretical {\bf 52}, 44001}~(2019).

\bibitem{AISW-2020}
Jayadev Acharya, Ibrahim Issa, Nirmal~V. Shende, and Aaron~B. Wagner.
\newblock ``Estimating quantum entropy''.
\newblock \href{https://dx.doi.org/10.1109/JSAIT.2020.3015235}{IEEE Journal on Selected Areas in Information Theory {\bf 1}, 454--468}~(2020).

\bibitem{Subramanian2019QuantumStates}
Sathyawageeswar Subramanian and Min-Hsiu Hsieh.
\newblock ``Quantum algorithm for estimating $\ensuremath{\alpha}$-{R\'enyi} entropies of quantum states''.
\newblock \href{https://dx.doi.org/10.1103/PhysRevA.104.022428}{Physical Review A {\bf 104}, 022428}~(2021).

\bibitem{Gilyen2019DistributionalWorld}
Andr{\'a}s Gily{\'e}n and Tongyang Li.
\newblock ``Distributional property testing in a quantum world''.
\newblock In Thomas Vidick, editor, 11th Innovations in Theoretical Computer Science Conference (ITCS 2020).
\newblock \href{https://dx.doi.org/https://doi.org/10.4230/LIPIcs.ITCS.2020.25}{Volume 151 of Leibniz International Proceedings in Informatics (LIPIcs), pages 25:1--25:19}.
\newblock Dagstuhl, Germany~(2020). Schloss Dagstuhl--Leibniz-Zentrum fuer Informatik.

\bibitem{Gur2021SublinearEntropy}
Tom Gur, Min-Hsiu Hsieh, and Sathyawageeswar Subramanian.
\newblock ``Sublinear quantum algorithms for estimating von {N}eumann entropy''~(2021).
\newblock  \href{http://arxiv.org/abs/2111.11139}{arXiv:2111.11139}.

\bibitem{Wang2022QuantumEntropies}
Youle Wang, Benchi Zhao, and Xin Wang.
\newblock ``Quantum algorithms for estimating quantum entropies''.
\newblock \href{https://dx.doi.org/10.1103/PhysRevApplied.19.044041}{Physical Review Applied {\bf 19}, 044041}~(2023).

\bibitem{Wang2022NewDistances}
Qisheng Wang, Ji~Guan, Junyi Liu, Zhicheng Zhang, and Mingsheng Ying.
\newblock ``New quantum algorithms for computing quantum entropies and distances''.
\newblock \href{https://dx.doi.org/10.1109/TIT.2024.3399014}{IEEE Transactions on Information Theory {\bf 70}, 5653–5680}~(2024).
\newblock  \href{http://arxiv.org/abs/2203.13522}{arXiv:2203.13522}.

\bibitem{Gilyen2022ImprovedEstimation}
András Gilyén and Alexander Poremba.
\newblock ``Improved quantum algorithms for fidelity estimation''~(2022).
\newblock  \href{http://arxiv.org/abs/2203.15993}{arXiv:2203.15993}.

\bibitem{Wang2022QuantumSystems}
Youle Wang, Lei Zhang, Zhan Yu, and Xin Wang.
\newblock ``Quantum phase processing and its applications in estimating phase and entropies''.
\newblock \href{https://dx.doi.org/10.1103/PhysRevA.108.062413}{Physical Review A {\bf 108}, 062413}~(2023).

\bibitem{Wang2023QuantumEstimation}
Qisheng Wang, Zhicheng Zhang, Kean Chen, Ji~Guan, Wang Fang, Junyi Liu, and Mingsheng Ying.
\newblock ``Quantum algorithm for fidelity estimation''.
\newblock \href{https://dx.doi.org/10.1109/TIT.2022.3203985}{IEEE Transactions on Information Theory {\bf 69}, 273--282}~(2023).

\bibitem{Wang-Zhang-Li-2024}
Xinzhao Wang, Shengyu Zhang, and Tongyang Li.
\newblock ``A quantum algorithm framework for discrete probability distributions with applications to {Rényi} entropy estimation''.
\newblock \href{https://dx.doi.org/10.1109/TIT.2024.3382037}{IEEE Transactions on Information Theory {\bf 70}, 3399--3426}~(2024).

\bibitem{Wang-Zhang-2025}
Qisheng Wang and Zhicheng Zhang.
\newblock ``Time-efficient quantum entropy estimator via samplizer''.
\newblock \href{https://dx.doi.org/10.1109/TIT.2025.3576137}{IEEE Transactions on Information Theory {\bf 71}, 9569--9599}~(2025).

\bibitem{lu2025estimatingquantumrelativeentropies}
Yuchen Lu and Kun Fang.
\newblock ``Estimating quantum relative entropies on quantum computers''.
\newblock \href{https://dx.doi.org/10.1038/s42005-026-02655-y}{Communications Physics}~(2026).
\newblock  \href{http://arxiv.org/abs/2501.07292}{arXiv:2501.07292}.

\bibitem{QNE-24}
Ziv Goldfeld, Dhrumil Patel, Sreejith Sreekumar, and Mark~M. Wilde.
\newblock ``Quantum neural estimation of entropies''.
\newblock \href{https://dx.doi.org/10.1103/PhysRevA.109.032431}{Physical Review A {\bf 109}, 032431}~(2024).

\bibitem{Shin2024}
Myeongjin Shin, Junseo Lee, and Kabgyun Jeong.
\newblock ``Estimating quantum mutual information through a quantum neural network''.
\newblock \href{https://dx.doi.org/10.1007/s11128-023-04253-1}{Quantum Information Processing {\bf 23}, 57}~(2024).

\bibitem{lee2023estimating}
Sangyun Lee, Hyukjoon Kwon, and Jae~Sung Lee.
\newblock ``Estimating entanglement entropy via variational quantum circuits with classical neural networks''.
\newblock \href{https://dx.doi.org/10.1103/PhysRevE.109.044117}{Physical Review E {\bf 109}, 044117}~(2024).

\bibitem{Petz1988}
Dénes Petz.
\newblock ``A variational expression for the relative entropy''.
\newblock \href{https://dx.doi.org/10.1007/BF01225040}{Communications in Mathematical Physics {\bf 114}, 345--349}~(1988).

\bibitem{petz2007quantum}
D{\'e}nes Petz.
\newblock ``Quantum information theory and quantum statistics''.
\newblock \href{https://dx.doi.org/10.1007/978-3-540-74636-2}{Springer Science \& Business Media}. ~(2007).

\bibitem{Berta2015OnEntropies}
Mario Berta, Omar Fawzi, and Marco Tomamichel.
\newblock ``On variational expressions for quantum relative entropies''.
\newblock \href{https://dx.doi.org/10.1007/s11005-017-0990-7}{Letters in Mathematical Physics {\bf 107}, 2239--2265}~(2015).

\bibitem{thompson_1963}
Anthony~C. Thompson.
\newblock ``On certain contraction mappings in a partially ordered vector space''.
\newblock \href{https://dx.doi.org/10.1090/S0002-9939-1963-0149237-7}{Proceedings of the American Mathematical Society {\bf 14}, 438--443}~(1963).

\bibitem{Donald1986}
Matthew~J. Donald.
\newblock ``On the relative entropy''.
\newblock \href{https://dx.doi.org/10.1007/BF01212339}{Communications in Mathematical Physics {\bf 105}, 13--34}~(1986).

\bibitem{P09}
Marco Piani.
\newblock ``Relative entropy of entanglement and restricted measurements''.
\newblock \href{https://dx.doi.org/10.1103/PhysRevLett.103.160504}{Physical Review Letters {\bf 103}, 160504}~(2009).

\bibitem{Hiai-Petz-1991}
Fumio Hiai and D{\'e}nes Petz.
\newblock ``{The proper formula for relative entropy and its asymptotics in quantum probability}''.
\newblock \href{https://dx.doi.org/10.1007/BF02100287}{Communications in Mathematical Physics {\bf 143}, 99--114}~(1991).

\bibitem{Hayashi_2002}
Masahito Hayashi.
\newblock ``Optimal sequence of quantum measurements in the sense of {Stein's} lemma in quantum hypothesis testing''.
\newblock \href{https://dx.doi.org/10.1088/0305-4470/35/50/307}{Journal of Physics A: Mathematical and General {\bf 35}, 10759}~(2002).

\bibitem{Mosonyi2015QuantumHT}
Mil{\'a}n Mosonyi and Tomohiro Ogawa.
\newblock ``Quantum hypothesis testing and the operational interpretation of the quantum {R{\'e}nyi} relative entropies''.
\newblock \href{https://dx.doi.org/10.1007/s00220-014-2248-x}{Communications in Mathematical Physics {\bf 334}, 1617--1648}~(2015).

\bibitem{Brando2020AdversarialHT}
Fernando Brand{\~a}o, Aram~W. Harrow, James~R. Lee, and Yuval Peres.
\newblock ``Adversarial hypothesis testing and a quantum {Stein’s} lemma for restricted measurements''.
\newblock \href{https://dx.doi.org/10.1109/TIT.2020.2979704}{IEEE Transactions on Information Theory {\bf 66}, 5037--5054}~(2020).

\bibitem{RSB-IT-2025}
Tobias Rippchen, Sreejith Sreekumar, and Mario Berta.
\newblock ``Locally-measured {R}ényi divergences''.
\newblock \href{https://dx.doi.org/10.1109/TIT.2025.3571527}{IEEE Transactions on Information Theory {\bf 71}, 6105--6133}~(2025).

\bibitem{SHCB-2025}
Sreejith Sreekumar, Christoph Hirche, Hao-Chung Cheng, and Mario Berta.
\newblock ``Distributed quantum hypothesis testing under zero-rate communication constraints''.
\newblock \href{https://dx.doi.org/10.1007/s00023-025-01623-6}{Annales Henri Poincaré}~(2025).

\bibitem{Verdu-2019}
Sergio Verdú.
\newblock ``Empirical estimation of information measures: A literature guide''.
\newblock \href{https://dx.doi.org/10.3390/e21080720}{Entropy {\bf 21}, 720}~(2019).

\bibitem{SK-2025}
Sreejith Sreekumar and Kengo Kato.
\newblock ``Deviation inequalities for {R\'{e}nyi} divergence estimators via variational expression''~(2025).
\newblock  \href{http://arxiv.org/abs/2508.09382}{arXiv:2508.09382}.

\bibitem{HHJWY16}
Jeongwan Haah, Aram~W. Harrow, Zhengfeng Ji, Xiaodi Wu, and Nengkun Yu.
\newblock ``Sample-optimal tomography of quantum states''.
\newblock \href{https://dx.doi.org/10.1109/TIT.2017.2719044}{IEEE Transactions on Information Theory {\bf 63}, 5628--5641}~(2017).

\bibitem{wright2016learn}
John Wright.
\newblock ``How to learn a quantum state''.
\newblock PhD thesis.
\newblock Carnegie Mellon University.
\newblock ~(2016).
\newblock  url:~\url{https://people.eecs.berkeley.edu/~jswright/papers/thesis.pdf}.

\bibitem{OW16}
Ryan O'Donnell and John Wright.
\newblock ``Efficient quantum tomography''.
\newblock In Proceedings of the Forty-Eighth Annual ACM Symposium on Theory of Computing.
\newblock \href{https://dx.doi.org/10.1145/2897518.2897544}{Page 899–912}.
\newblock STOC '16,{ }New York, NY, USA~(2016). Association for Computing Machinery.

\bibitem{Yuen2023improvedsample}
Henry Yuen.
\newblock ``An improved sample complexity lower bound for (fidelity) quantum state tomography''.
\newblock \href{https://dx.doi.org/10.22331/q-2023-01-03-890}{Quantum {\bf 7}, 890}~(2023).

\bibitem{Biamonte2017QuantumLearning}
Jacob Biamonte, Peter Wittek, Nicola Pancotti, Patrick Rebentrost, Nathan Wiebe, and Seth Lloyd.
\newblock ``Quantum machine learning''.
\newblock \href{https://dx.doi.org/10.1038/nature23474}{Nature {\bf 549}, 195--202}~(2017).

\bibitem{Cerezo2020variationalquantum}
Marco Cerezo, Alexander Poremba, Lukasz Cincio, and Patrick~J. Coles.
\newblock ``Variational {Q}uantum {F}idelity {E}stimation''.
\newblock \href{https://dx.doi.org/10.22331/q-2020-03-26-248}{{Quantum} {\bf 4}, 248}~(2020).

\bibitem{Cerezo2021VariationalAlgorithms}
M.~Cerezo, Andrew Arrasmith, Ryan Babbush, Simon~C. Benjamin, Suguru Endo, Keisuke Fujii, Jarrod~R. McClean, Kosuke Mitarai, Xiao Yuan, Lukasz Cincio, and Patrick~J. Coles.
\newblock ``Variational quantum algorithms''.
\newblock \href{https://dx.doi.org/10.1038/s42254-021-00348-9}{Nature Reviews Physics {\bf 3}, 625--644}~(2021).

\bibitem{Jaeger2023}
Jonas Jäger and Roman~V. Krems.
\newblock ``Universal expressiveness of variational quantum classifiers and quantum kernels for support vector machines''.
\newblock \href{https://dx.doi.org/10.1038/s41467-023-36144-5}{Nature Communications {\bf 14}, 576}~(2023).

\bibitem{NWJ10}
XuanLong Nguyen, Martin~J. Wainwright, and Michael~I. Jordan.
\newblock ``Estimating divergence functionals and the likelihood ratio by convex risk minimization''.
\newblock \href{https://dx.doi.org/10.1109/TIT.2010.2068870}{IEEE Transactions on Information Theory {\bf 56}, 5847--5861}~(2010).

\bibitem{pmlr-v70-arora17a}
Sanjeev Arora, Rong Ge, Yingyu Liang, Tengyu Ma, and Yi~Zhang.
\newblock ``Generalization and equilibrium in generative adversarial nets ({GAN}s)''.
\newblock In Proceedings of the 34th International Conference on Machine Learning.
\newblock Volume~70 of Proceedings of Machine Learning Research, pages 224--232.
\newblock PMLR~(2017).
\newblock  url:~\url{https://proceedings.mlr.press/v70/arora17a.html}.

\bibitem{Zhang-2018}
Pengchuan Zhang, Qiang Liu, Dengyong Zhou, Tao Xu, and Xiaodong He.
\newblock ``On the discrimination-generalization tradeoff in {GAN}s''.
\newblock In 6th International Conference on Learning Representations.
\newblock ~(2018).
\newblock  url:~\url{https://openreview.net/forum?id=Hk9Xc_lR-}.

\bibitem{belghazi-2018}
Mohamed~Ishmael Belghazi, Aristide Baratin, Sai Rajeshwar, Sherjil Ozair, Yoshua Bengio, Aaron Courville, and Devon Hjelm.
\newblock ``Mutual information neural estimation''.
\newblock In Proceedings of the 35th International Conference on Machine Learning.
\newblock Volume~80 of Proceedings of Machine Learning Research, pages 531--540.
\newblock PMLR~(2018).
\newblock  url:~\url{https://proceedings.mlr.press/v80/belghazi18a.html}.

\bibitem{SS-2021-aistats}
Sreejith Sreekumar, Zhengxin Zhang, and Ziv Goldfeld.
\newblock ``Non-asymptotic performance guarantees for neural estimation of $f$-divergences''.
\newblock In Proceedings of the 24th International Conference on Artificial Intelligence and Statistics.
\newblock Volume 130, pages 3322--3330.
\newblock ~(2021).
\newblock  url:~\url{https://proceedings.mlr.press/v130/sreekumar21a.html}.

\bibitem{SG-2021-neural}
Sreejith Sreekumar and Ziv Goldfeld.
\newblock ``Neural estimation of statistical divergences''.
\newblock Journal of Machine Learning Research {\bf 23}, 1--75~(2022).
\newblock  url:~\url{https://www.jmlr.org/papers/v23/21-1212.html}.

\bibitem{Birell-2021}
Jeremiah Birrell, Paul Dupuis, Markos~A. Katsoulakis, Luc Rey-Bellet, and Jie Wang.
\newblock ``Variational representations and neural network estimation of {R}ényi divergences''.
\newblock \href{https://dx.doi.org/10.1137/20M1368926}{SIAM Journal on Mathematics of Data Science {\bf 3}, 1093--1116}~(2021).

\bibitem{VQA-entropies-fidelity-2021}
Kok~Chuan Tan and Tyler Volkoff.
\newblock ``Variational quantum algorithms to estimate rank, quantum entropies, fidelity, and {Fisher} information via purity minimization''.
\newblock \href{https://dx.doi.org/10.1103/PhysRevResearch.3.033251}{Physical Review Research {\bf 3}, 033251}~(2021).

\bibitem{McClean_2018}
Jarrod~R. McClean, Sergio Boixo, Vadim~N. Smelyanskiy, Ryan Babbush, and Hartmut Neven.
\newblock ``Barren plateaus in quantum neural network training landscapes''.
\newblock \href{https://dx.doi.org/10.1038/s41467-018-07090-4}{Nature Communications {\bf 9}, 4812}~(2018).

\bibitem{Cerezo2021CostDependent}
Marco Cerezo, Akira Sone, Tyler Volkoff, Lukasz Cincio, and Patrick~J. Coles.
\newblock ``Cost-function-dependent barren plateaus in shallow parameterized quantum circuits''.
\newblock \href{https://dx.doi.org/10.1038/s41467-021-21728-w}{Nature Communications {\bf 12}, 1791}~(2021).

\bibitem{Holmes2022Trainability}
Zo{\"e} Holmes, Kunal Sharma, Marco Cerezo, and Patrick~J. Coles.
\newblock ``Connecting ansatz expressibility to gradient magnitudes and barren plateaus''.
\newblock \href{https://dx.doi.org/10.1103/PRXQuantum.3.010313}{PRX Quantum {\bf 3}, 010313}~(2022).

\bibitem{Meyer2023ExploitingSymmetry}
Johannes~Jakob Meyer, Marian Mularski, Elies Gil‐Fuster, Antonio~Anna Mele, Francesco Arzani, Alissa Wilms, and Jens Eisert.
\newblock ``Exploiting symmetry in variational quantum machine learning''.
\newblock \href{https://dx.doi.org/10.1103/PRXQuantum.4.010328}{PRX Quantum {\bf 4}, 010328}~(2023).

\bibitem{Nguyen2024TheoryEquivariantQNN}
Quynh~T. Nguyen, Louis Schatzki, Paolo Braccia, Michael Ragone, Patrick~J. Coles, Frédéric Sauvage, Martín Larocca, and Maciej Cerezo.
\newblock ``Theory for equivariant quantum neural networks''.
\newblock \href{https://dx.doi.org/10.1103/PRXQuantum.5.020328}{PRX Quantum {\bf 5}, 020328}~(2024).

\bibitem{Grant2019Initialization}
Edward Grant, Leonard Wossnig, Mateusz Ostaszewski, and Marcello Benedetti.
\newblock ``An initialization strategy for addressing barren plateaus in parameterized quantum circuits''.
\newblock \href{https://dx.doi.org/10.22331/q-2019-12-09-214}{Quantum {\bf 3}, 214}~(2019).

\bibitem{Skolik2021Layerwise}
Andrea Skolik, Jarrod~R. McClean, Masoud Mohseni, Patrick van~der Smagt, and Matthew Leib.
\newblock ``Layerwise learning for quantum neural networks''.
\newblock \href{https://dx.doi.org/10.1007/s42484-020-00036-4}{Quantum Machine Intelligence {\bf 3}, 1--19}~(2021).

\bibitem{Cerezo2025ProvableAbsence}
Marco Cerezo, Martin Larocca, Diego Garc{\'\i}a-Mart{\'\i}n, N.~L. Diaz, Paolo Braccia, Enrico Fontana, Manuel~S. Rudolph, Pablo Bermejo, Aroosa Ijaz, Supanut Thanasilp, Eric~R. Anschuetz, and Zo\"{e} Holmes.
\newblock ``Does provable absence of barren plateaus imply classical simulability?''.
\newblock \href{https://dx.doi.org/10.1038/s41467-025-63099-6}{Nature Communications {\bf 16}, 7907}~(2025).

\bibitem{F96}
Christopher Fuchs.
\newblock ``Distinguishability and accessible information in quantum theory''.
\newblock PhD thesis.
\newblock University of New Mexico.
\newblock ~(1996).
\newblock  \href{http://arxiv.org/abs/quant-ph/9601020}{arXiv:quant-ph/9601020}.

\bibitem{huang2025semidefiniteoptimizationmeasuredrelative}
Zixin Huang and Mark~M. Wilde.
\newblock ``Semi-definite optimization of the measured relative entropies of quantum states and channels''~(2025).
\newblock  url:~\url{https://arxiv.org/abs/2406.19060}.

\bibitem{Paninski-2003}
Liam Paninski.
\newblock ``Estimation of entropy and mutual information''.
\newblock \href{https://dx.doi.org/10.1162/089976603321780272}{Neural Computation {\bf 15}, 1191–1253}~(2003).

\bibitem{Datta-2009}
Nilanjana Datta.
\newblock ``Min- and max-relative entropies and a new entanglement monotone''.
\newblock \href{https://dx.doi.org/10.1109/TIT.2009.2018325}{IEEE Transactions on Information Theory {\bf 55}, 2816--2826}~(2009).

\bibitem{bacon2006quantum}
Dave Bacon, Isaac~L. Chuang, and Aram~W. Harrow.
\newblock ``{Efficient Quantum Circuits for Schur and Clebsch-Gordan Transforms}''.
\newblock \href{https://dx.doi.org/10.1103/PhysRevLett.97.170502}{Physical Review Letters {\bf 97}, 170502}~(2006).

\bibitem{harrow2005applications}
Aram~W. Harrow.
\newblock ``Applications of coherent classical communication and the {S}chur transform to quantum information theory''.
\newblock PhD thesis.
\newblock Massachusetts Institute of Technology.
\newblock ~(2005).
\newblock  \href{http://arxiv.org/abs/quant-ph/0512255}{arXiv:quant-ph/0512255}.

\bibitem{Kirby2018schur}
William~M. Kirby and Frederick~W. Strauch.
\newblock ``{A practical quantum algorithm for the {S}chur transform}''.
\newblock \href{https://dx.doi.org/10.26421/QIC18.9-10-1}{Quantum Information and Computation {\bf 18}, 0721--0742}~(2018).

\bibitem{Krovi2019efficienthigh}
Hari Krovi.
\newblock ``An efficient high dimensional quantum {S}chur transform''.
\newblock \href{https://dx.doi.org/10.22331/q-2019-02-14-122}{{Quantum} {\bf 3}, 122}~(2019).

\bibitem{VanHandel-book}
Ramon {Van Handel}.
\newblock ``Probability in high dimension: Lecture notes-princeton university''.
\newblock [Online]. Available: \url{https://web.math.princeton.edu/~rvan/APC550.pdf}~(2016).

\bibitem{tikhomirov1993varepsilon}
Andrey~N. Kolmogorov and Vladimir~M. Tikhomirov.
\newblock ``$\varepsilon$-entropy and $\varepsilon$-capacity of sets in functional spaces''.
\newblock Uspekhi Matematicheskikh Nauk {\bf 14}, 3--86~(1959).
\newblock  url:~\url{http://mi.mathnet.ru/eng/rm7289}.

\bibitem{kandala2017hardware}
Abhinav Kandala, Antonio Mezzacapo, Kristan Temme, Maika Takita, Matthias Brink, Jerry~M. Chow, and Jay~M. Gambetta.
\newblock ``Hardware-efficient variational quantum eigensolver for small molecules and quantum magnets''.
\newblock \href{https://dx.doi.org/10.1038/nature23879}{Nature {\bf 549}, 242--246}~(2017).

\bibitem{lloyd2018universalqaoa}
Seth Lloyd.
\newblock ``Quantum approximate optimization is computationally universal''~(2018).
\newblock  \href{http://arxiv.org/abs/1812.11075}{arXiv:1812.11075}.

\bibitem{morales2020universality}
Mauro E.~S. Morales, Jacob~D. Biamonte, and Zolt\'{a}n Zimbor\'{a}s.
\newblock ``On the universality of the quantum approximate optimization algorithm''.
\newblock \href{https://dx.doi.org/10.1007/s11128-020-02748-9}{Quantum Information Processing {\bf 19}, 291}~(2020).

\bibitem{Yihong-Yang-2016}
Yihong Wu and Pengkun Yang.
\newblock ``Minimax rates of entropy estimation on large alphabets via best polynomial approximation''.
\newblock \href{https://dx.doi.org/10.1109/TIT.2016.2548468}{IEEE Transactions on Information Theory {\bf 62}, 3702--3720}~(2016).

\bibitem{Yanjun-2020-alphadiv}
Yanjun Han, Jiantao Jiao, and Tsachy Weissman.
\newblock ``Minimax estimation of divergences between discrete distributions''.
\newblock \href{https://dx.doi.org/10.1109/JSAIT.2020.3041036}{IEEE Journal on Selected Areas in Information Theory {\bf 1}, 814--823}~(2020).

\bibitem{Dennis-2021}
Dennis Elbr\"{a}chter, Dmytro Perekrestenko, Philipp Grohs, and Helmut B{\"o}lcskei.
\newblock ``Deep neural network approximation theory''.
\newblock \href{https://dx.doi.org/10.1109/TIT.2021.3062161}{IEEE Transactions on Information Theory {\bf 67}, 2581--2623}~(2021).

\bibitem{Nielsen_Chuang_2010}
Michael~A. Nielsen and Isaac~L. Chuang.
\newblock ``Quantum computation and quantum information: 10th anniversary edition''.
\newblock \href{https://dx.doi.org/10.1017/CBO9780511976667}{Cambridge University Press}. ~(2010).

\bibitem{Hayashi-2017}
Masahito Hayashi.
\newblock ``{A Group Theoretic Approach to Quantum Information}''.
\newblock \href{https://dx.doi.org/10.1007/978-3-319-45241-8}{Springer}. ~(2017).

\bibitem{AVDV-book}
Aad~W. van~der Vaart and Jon~A. Wellner.
\newblock ``Weak convergence and empirical processes''.
\newblock \href{https://dx.doi.org/10.1007/978-1-4757-2545-2}{Springer, New York}. ~(1996).

\bibitem{Schmidt-Hieber-2020}
Johannes Schmidt-Hieber.
\newblock ``Nonparametric regression using deep neural networks with {ReLU} activation function''.
\newblock \href{https://dx.doi.org/10.1214/19-AOS1875}{The Annals of Statistics {\bf 48}, 1875--1897}~(2020).

\bibitem{wilde2025-QIM}
Mark~M. Wilde.
\newblock ``{Quantum Fisher information matrices from R\'enyi relative entropies}''~(2025).
\newblock  \href{http://arxiv.org/abs/2510.02218}{arXiv:2510.02218}.

\bibitem{SB-IT-2025}
Sreejith Sreekumar and Mario Berta.
\newblock ``Limit distribution theory for quantum divergences''.
\newblock \href{https://dx.doi.org/10.1109/TIT.2024.3482168}{IEEE Transactions on Information Theory {\bf 71}, 459--484}~(2025).

\bibitem{Carlen-2010}
Eric~A. Carlen.
\newblock ``Trace inequalities and quantum entropy: An introductory course''.
\newblock In Contemporary Mathematics.
\newblock Volume 529 of Entropy and the Quantum.
\newblock American Mathematical Society~(2010).

\bibitem{Davis-1975}
Philip~J. Davis.
\newblock ``Interpolation and approximation''.
\newblock Dover Publications, Inc. New York. ~(1975).

\bibitem{Berry-41}
Andrew~C. Berry.
\newblock ``The accuracy of the {Gaussian} approximation to the sum of independent variates''.
\newblock \href{https://dx.doi.org/10.1090/s0002-9947-1941-0003498-3}{Transactions of the American Mathematical Society {\bf 49}, 122--136}~(1941).

\bibitem{Esseen-42}
Carl-Gustav Esseen.
\newblock ``On the {Liapunoff} limit of error in the theory of probability''.
\newblock Arkiv f\"{o}r matematik, astronomi och fysik {\bf 28A}, 1--19~(1942).

\bibitem{Ross-Selinger-2016}
Neil~J. Ross and Peter Selinger.
\newblock ``Optimal ancilla-free {Clifford + $T$} approximation of $z$-rotations''.
\newblock \href{https://dx.doi.org/10.26421/QIC16.11-12-1}{Quantum Information and Computation {\bf 16}, 901–953}~(2016).

\bibitem{Dawson-Nielsen-2006}
Christopher~M. Dawson and Michael~A. Nielsen.
\newblock ``{The Solovay-Kitaev algorithm}''.
\newblock \href{https://dx.doi.org/10.26421/QIC6.1-6}{Quantum Information and Computation {\bf 6}, 81–95}~(2006).

\end{thebibliography}

\appendix
\section{Optimization of Measured Relative Entropies}
\label{Sec:optmeasrelent}
For $\rho,\sigma>0$,  the optimizers $\omega_{1}^{\star}(\rho,\sigma)$ and $\omega_{\alpha}^{\star}(\rho,\sigma)$ of  the RHSs of~\eqref{eq:measured-rel-ent-opt} and~\eqref{eq:varformrenent2}, respectively,  are given by 
\begin{align} \label{eq:optrelvarform}
\omega_{\alpha}^{\star}(\rho,\sigma)= \begin{cases}
e^{(\alpha-1) H_{\alpha}^{\star}(\rho,\sigma)}=\sum_{i=1}^d\left(\frac{\Tr[P_{i,\alpha}^{\star} \rho]}{\Tr[P_{i,\alpha}^{\star} \sigma]} \right)^{\alpha-1} P_{i,\alpha}^{\star}, &\qquad\text{ for }\alpha  \in \left(0,\frac{1}{2}\right),  \\
e^{\alpha H_{\alpha}^{\star}(\rho,\sigma)}=\sum_{i=1}^d\left(\frac{\Tr[P_{i,\alpha}^{\star} \rho]}{\Tr[P_{i,\alpha}^{\star} \sigma]} \right)^{\alpha} P_{i,\alpha}^{\star}, &\qquad\text{ for }\alpha  \in\left[\frac{1}{2},\infty\right ), 
\end{cases}
\end{align}
where $\{P_{i,\alpha}^{\star}\}_{i=1}^d$ are the orthogonal projectors in the spectral decomposition of $H_{\alpha}^{\star}(\rho,\sigma)$ from~\eqref{eq:opteigvalues}.  This can be verified by substituting $\omega_{\alpha}^{\star}(\rho,\sigma)$ in~\eqref{eq:varformrenent2} and noting that the RHS of~\eqref{eq:varformrenentmain2} coincides with that obtained by substituting $H_{\alpha}^{\star}(\rho,\sigma)$ in the RHS of~\eqref{eq:renyi_rel_ent_variational}.  
The variational form in~\eqref{eq:varformrenentmain2} and the relation~\eqref{eq:optrelvarform} can be leveraged to show that $H_{\alpha}^{\star}(\rho,\sigma)$ is unique when $\rho,\sigma>0$, and should satisfy the  necessary (and sufficient) condition for optimality derived from it.  \begin{prop}[Necessary and sufficient condition for optimality]
Suppose $\rho,\sigma>0$. 
Then the optimal $\omega>0$ in~\eqref{eq:measured-rel-ent-opt} is unique and determined by the following equation:
\begin{align}
\sigma & =\int_{0}^{\infty}\ \left(\omega+sI\right)^{-1}\rho\left(\omega+sI\right)^{-1}ds.\label{eq:opt-omega-measure-rel-ent}
\end{align}
The
optimal $\omega>0$ in~\eqref{eq:varformrenent2} is unique and determined by the equations:
\begin{align}
\rho & =\left(\frac{1-\alpha}{\alpha}\right)\frac{\sin\!\left(\left(\frac{\alpha}{1-\alpha}\right)\pi\right)}{\pi}\int_{0}^{\infty}\ t^{\frac{\alpha}{\alpha-1}}\left(\omega+tI\right)^{-1}\sigma\left(\omega+tI\right)^{-1}dt,~~~ \forall~ \alpha\in\left(0,\frac{1}{2}\right),\label{eq:optimal-omega-alpha-0-half} \\
\sigma \mspace{-2 mu}& =\mspace{-2 mu}\left(\frac{\alpha}{1-\alpha}\right)\frac{\sin\!\left(\left(\frac{1-\alpha}{\alpha}\right)\pi\right)}{\pi}\mspace{-2 mu}\int_{0}^{\infty}\mspace{-2 mu} t^{\frac{\alpha-1}{\alpha}}\mspace{-2 mu}\left(\omega+tI\right)^{-1}\rho\left(\omega+tI\right)^{-1}dt, ~ \forall~ \alpha\in\left[\frac{1}{2},1\right)\cup\left(1,\infty\right). \label{eq:optimal-omega-alpha-half-infty}
\end{align}
\end{prop}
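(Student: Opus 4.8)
The plan is to treat \eqref{eq:measured-rel-ent-opt} and all three branches of \eqref{eq:varformrenent2} by one recipe: (i) show the extremum is attained at an interior point $\omega_\star > 0$ of the positive-definite cone; (ii) show the objective is strictly concave on this cone (for the suprema) or strictly convex (for the infima), so that $\omega_\star$ is the unique extremizer and vanishing of the matrix (Fr\'echet) derivative there is both necessary and sufficient; (iii) compute that derivative using the integral representations of $\log$ and of real matrix powers and read off the stated identities after removing the arbitrary test direction. Equation~\eqref{eq:opt-omega-measure-rel-ent} will then follow either directly from this recipe applied to \eqref{eq:measured-rel-ent-opt} or as the $\alpha \to 1$ limit of \eqref{eq:optimal-omega-alpha-half-infty}.

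For (i), since $\rho,\sigma>0$ are full rank the objectives are coercive in the relevant sense: in \eqref{eq:measured-rel-ent-opt}, $\Tr[\rho\log\omega]\to-\infty$ as $\omega$ approaches the boundary of $\{\omega>0\}$ and $-\Tr[\sigma\omega]\to-\infty$ as $\norm{\omega}\to\infty$; in the two infimum problems the power term $\Tr[\,\cdot\,\omega^{p}]$ with $p<0$ blows up at the boundary while the linear term dominates at infinity; in the $\alpha>1$ supremum problem the linear term (coefficient $1-\alpha<0$) forces the objective to $-\infty$ at infinity, and any boundary $\omega$ is strictly improved by perturbing into its kernel, the $\eta^{1-1/\alpha}$ gain from the concave power term beating the $O(\eta)$ linear loss. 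For (ii), $\omega\mapsto\Tr[\rho\log\omega]$ is strictly concave when $\rho>0$: along $\omega_t=\omega_0+t\Delta$ one gets $\frac{d^2}{dt^2}\Tr[\rho\log\omega_t]=-2\int_0^\infty\Tr[\rho M_s\Delta M_s\Delta M_s]\,ds$ with $M_s\coloneqq(\omega_t+sI)^{-1}$, and cyclicity rewrites the integrand as $\norm{M_s^{1/2}\Delta M_s^{1/2}(M_s^{1/2}\rho M_s^{1/2})^{1/2}}_2^2>0$ unless $\Delta=0$; likewise, operator convexity (resp.\ concavity) of $t\mapsto t^q$ for $q\in[-1,0]$ (resp.\ $q\in[0,1]$) makes $\omega\mapsto\Tr[\sigma\omega^q]$ strictly convex for $q\in(-1,0)$ and strictly concave for $q\in(0,1)$ when $\sigma>0$. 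Inspecting the signs of $\alpha$ and $1-\alpha$ in each branch of \eqref{eq:varformrenent2} then shows the two $\inf$ objectives are strictly convex and the $\sup$ objective strictly concave, which with (i) yields uniqueness.

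For (iii), at the interior extremizer the Fr\'echet derivative vanishes in every Hermitian direction $\Delta$. I would use $D\log(\omega)[\Delta]=\int_0^\infty(\omega+sI)^{-1}\Delta(\omega+sI)^{-1}\,ds$ and, for non-integer powers, the Balakrishnan-type representations $\omega^q=\tfrac{\sin(q\pi)}{\pi}\int_0^\infty\lambda^{q-1}\omega(\omega+\lambda I)^{-1}\,d\lambda$ for $q\in(0,1)$ and $\omega^{-p}=\tfrac{\sin(p\pi)}{\pi}\int_0^\infty\lambda^{-p}(\omega+\lambda I)^{-1}\,d\lambda$ for $p\in(0,1)$, both of which differentiate under the integral to the single identity $D\omega^q[\Delta]=\tfrac{\sin(q\pi)}{\pi}\int_0^\infty\lambda^{q}(\omega+\lambda I)^{-1}\Delta(\omega+\lambda I)^{-1}\,d\lambda$ valid for $q\in(-1,1)\setminus\{0\}$. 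Substituting these into the stationarity equation of each branch, using that $\Tr[A\Delta]=\Tr[B\Delta]$ for all Hermitian $\Delta$ forces $A=B$ for Hermitian $A,B$, and simplifying the scalar prefactor via $\sin(\pi-x)=\sin x$ and $\sin(-x)=-\sin x$ produces exactly \eqref{eq:optimal-omega-alpha-0-half}, \eqref{eq:optimal-omega-alpha-half-infty}, and --- dropping the power representation entirely --- \eqref{eq:opt-omega-measure-rel-ent}.

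The main obstacle is step (iii): justifying differentiation under the integral sign in the Balakrishnan representations, since the integrands are mildly singular at $\lambda=0$ and decay only polynomially as $\lambda\to\infty$, so one needs a dominated-convergence bound uniform over a neighborhood of $\omega_\star$; and keeping the sign bookkeeping straight across the three $\alpha$-regimes, where the exponent $q=\tfrac{\alpha-1}{\alpha}$ (equivalently $\tfrac{\alpha}{\alpha-1}$) changes sign and therefore forces switching between the positive-power and negative-power formulas above. Strictness in step (ii), as opposed to mere convexity or concavity, is the only place the hypothesis $\rho,\sigma>0$ is genuinely used, and the interior-attainment claim for $\alpha>1$ in step (i) requires the mild care with the $\eta^{1-1/\alpha}$ expansion noted there.
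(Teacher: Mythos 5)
Your proposal is correct and follows essentially the same route as the paper: derive the stationarity conditions by setting the Fr\'echet derivative to zero using the integral representations of $\log$ and of the matrix powers $\omega^{r}$ for $r\in(-1,0)\cup(0,1)$, and obtain uniqueness plus sufficiency of the first-order condition from strict concavity/convexity of each objective on the positive-definite cone (which the paper gets from strict operator concavity/convexity of $\log$ and $x\mapsto x^{r}$ together with $\Tr[AB]>0$ for $A,B>0$, and you get from an equivalent explicit second-derivative computation). The pieces you add beyond the paper --- the coercivity/interior-attainment argument and the justification of differentiating the Balakrishnan representations under the integral --- are handled in the paper by citation rather than by proof, so they are welcome detail rather than a divergence in method.
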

\begin{proof}
Let $A \in \cL(\HH_d)$, $\omega>0$ and $r\in\left(-1,0\right)\cup\left(0,1\right)$. We will use the following expressions  for the relevant matrix derivatives of interest  (see \cite{wilde2025-QIM} for a detailed derivation from first principles):  
\begin{align}
\frac{\partial}{\partial \omega}\Tr[A\log \omega] & =\int_{0}^{\infty}\ \left(\omega+sI\right)^{-1}A\left(\omega+sI\right)^{-1}ds,\label{eq:deriv-log}\\
\frac{\partial}{\partial \omega}\Tr[A\omega^{r}] & =\frac{\sin(r\pi)}{\pi}\int_{0}^{\infty}\ t^{r}\left(\omega+tI\right)^{-1}A\left(\omega+tI\right)^{-1}dt, \quad \forall r\in\left(-1,0\right)\cup\left(0,1\right).\label{eq:power-func-deriv-minus-1-to-plus-1}
\end{align}
The expressions in \eqref{eq:deriv-log} and \eqref{eq:power-func-deriv-minus-1-to-plus-1} also equal the  directional derivatives of $\log \omega$ and $\omega^r$, respectively, in the direction $A$ (see \cite[Proof of Theorem 1 and 5]{SB-IT-2025} for expressions of directional derivatives based on integral expressions given in \cite[Lemma 2.8]{Carlen-2010}).

To determine  $\omega_{1}^{\star}(\rho,\sigma)$,
we set the matrix derivative  with respect to $\omega$ of the expression to be optimized in~\eqref{eq:measured-rel-ent-opt} to zero:
\begin{align}
0=\frac{\partial}{\partial\omega}\left(\Tr[\rho\log\omega]+1-\Tr[\sigma\omega]\right) & =\frac{\partial}{\partial\omega}\Tr[\rho\log\omega]-\frac{\partial}{\partial\omega}\Tr[\sigma\omega] \notag\\
 & =\int_{0}^{\infty}\ \left(\omega+sI\right)^{-1}\rho\left(\omega+sI\right)^{-1}ds-\sigma, \notag
\end{align}
where we applied~\eqref{eq:deriv-log}. We conclude that  $\omega_{1}^{\star}(\rho,\sigma)$ should satisfy~\eqref{eq:opt-omega-measure-rel-ent}. 
Uniqueness of $\omega_{1}^{\star}(\rho,\sigma)$ and sufficiency of the above condition for optimality follows from the strict concavity of the function $\omega\mapsto\Tr[\rho\log\omega]+1-\Tr[\sigma\omega]$
for $\rho,\sigma>0$,  which in turn follows from strict operator concavity of $\log $ and the fact that $\Tr[AB]> 0$ for $A,B >0$. 

To determine the optimal choice of $\omega$ in~\eqref{eq:varformrenent2} for $\alpha\in\left(0,\frac{1}{2}\right)$,
we set the  matrix derivative with respect to $\omega$ of the relevant expression to zero as follows:
\begin{align}
0&=  \frac{\partial}{\partial\omega}\left(\alpha\Tr[\rho\omega]+\left(1-\alpha\right)\Tr[\sigma\omega^{\frac{\alpha}{\alpha-1}}]\right) \nonumber \\
 &  =\alpha\rho+\left(\alpha-1\right)\frac{\sin\!\left(\left(\frac{\alpha}{1-\alpha}\right)\pi\right)}{\pi}\int_{0}^{\infty}\ t^{\frac{\alpha}{\alpha-1}}\left(\omega+tI\right)^{-1}\sigma\left(\omega+tI\right)^{-1}dt.\notag
\end{align}
Here, we used~\eqref{eq:power-func-deriv-minus-1-to-plus-1} with  $r=\frac{\alpha}{\alpha-1}\in\left(-1,0\right)$. Hence, $\omega_{\alpha}^{\star}(\rho,\sigma)$ satisfies~\eqref{eq:optimal-omega-alpha-0-half}.  Uniqueness of $\omega_{\alpha}^{\star}(\rho,\sigma)$ and sufficiency of the above equation for optimality follows from the strict convexity of the function $\omega\mapsto\alpha\Tr[\rho\omega]+\left(1-\alpha\right)\Tr[\sigma\omega^{\frac{\alpha}{\alpha-1}}]$
for $\rho,\sigma>0$, which in turn follows from strict operator convexity of $x \mapsto x^{r}$ for $r \in (-1,0)$ and $\Tr[AB]> 0$ for $A,B >0$.

To determine the optimal choice of $\omega$ in~\eqref{eq:varformrenent2} for $\alpha\in\left[\frac{1}{2},1\right)$, 
we set the following matrix derivative to zero:
\begin{align}
 0&= \frac{\partial}{\partial\omega}\left(\alpha\Tr[\rho\omega^{1-\frac{1}{\alpha}}]+\left(1-\alpha\right)\Tr[\sigma\omega]\right)\nonumber \\ 
  & =\left(1-\alpha\right)\sigma-\alpha\frac{\sin\!\left(\left(\frac{1}{\alpha}-1\right)\pi\right)}{\pi}\int_{0}^{\infty}\ t^{\frac{\alpha-1}{\alpha}}\left(\omega+tI\right)^{-1}\rho\left(\omega+tI\right)^{-1}dt. \notag 
\end{align}
Here, we applied~\eqref{eq:power-func-deriv-minus-1-to-plus-1} with $r=\frac{\alpha-1}{\alpha} \in (-1,0)$.
Hence, we conclude that $\omega_{\alpha}^{\star}(\rho,\sigma)$ should satisfy~\eqref{eq:optimal-omega-alpha-half-infty}. Uniqueness of $\omega_{\alpha}^{\star}(\rho,\sigma)$ and sufficiency of the condition above for optimality follows from the strict convexity of $\omega\mapsto\alpha\Tr[\rho\omega^{1-\frac{1}{\alpha}}]+\left(1-\alpha\right)\Tr[\sigma\omega]$, which is a consequence of strict operator convexity of $x \mapsto x^{r}$ for $r \in (-1,0)$ and $\Tr[AB]> 0$ for $A,B >0$.

Finally, note that $\omega_{\alpha}^{\star}(\rho,\sigma)$ for $\alpha \in (1,\infty)$ should satisfy 
\begin{align}
 0&= \frac{\partial}{\partial\omega}\left(\alpha\Tr[\rho\omega^{1-\frac{1}{\alpha}}]+\left(1-\alpha\right)\Tr[\sigma\omega]\right)\nonumber \\
  & =\alpha\frac{\sin\!\left(\left(\frac{\alpha-1}{\alpha}\right)\pi\right)}{\pi}\int_{0}^{\infty}\ t^{1-\frac{1}{\alpha}}\left(\omega+tI\right)^{-1}\rho\left(\omega+tI\right)^{-1}dt+\left(1-\alpha\right)\sigma, \notag
\end{align}
where we again applied~\eqref{eq:power-func-deriv-minus-1-to-plus-1} with $r=1-\frac{1}{\alpha}\in\left(0,1\right)$. Hence, 
 $\omega_{\alpha}^{\star}(\rho,\sigma)$ should satisfy~\eqref{eq:optimal-omega-alpha-half-infty}. 
Uniqueness of  $\omega_{\alpha}^{\star}(\rho,\sigma)$ and sufficiency of the above condition for characterizing $\omega_{\alpha}^{\star}(\rho,\sigma)$ follows from the strict concavity of the
function $\omega\mapsto\alpha\Tr[\rho\omega^{1-\frac{1}{\alpha}}]+\left(1-\alpha\right)\Tr[\sigma\omega]$
for $\rho,\sigma>0$, which in turn follows from strict operator concavity of  $x \mapsto  x^r$ for $r \in (0,1)$. 
\end{proof}
\begin{remark}[Limit of optimality condition]
The equation in~\eqref{eq:opt-omega-measure-rel-ent} arises as the
limit of~\eqref{eq:optimal-omega-alpha-half-infty} as $\alpha\to1$.
Indeed, this follows because
\begin{align}
\lim_{\alpha\to1}\left(\frac{\alpha}{1-\alpha}\right)\frac{\sin\!\left(\left(\frac{1-\alpha}{\alpha}\right)\pi\right)}{\pi} & =1 \quad \mbox{and} \quad \notag 
\lim_{\alpha\to1}t^{\frac{\alpha-1}{\alpha}}  =1, \notag 
\end{align}
so that
\begin{multline}
\lim_{\alpha\to1}\left(\frac{\alpha}{1-\alpha}\right)\frac{\sin\!\left(\left(\frac{1-\alpha}{\alpha}\right)\pi\right)}{\pi}\int_{0}^{\infty}\ t^{\frac{\alpha-1}{\alpha}}\left(\omega+tI\right)^{-1}\rho\left(\omega+tI\right)^{-1}dt\\
=\int_{0}^{\infty}\ \left(\omega+sI\right)^{-1}\rho\left(\omega+sI\right)^{-1}ds.\notag 
\end{multline}
Here, the equality is justified by an application of the dominated convergence theorem.
\end{remark}

\section{Lower Bound on Minimax Risk for Estimating Measured Relative Entropies}
\label{App:lowerbndest}
The proof of   \eqref{eq:lwrbndminmaxrisk} and \eqref{eq:lwrbndminmaxrisk-ren} relies  on an application of  Le Cam's two-point method and its generalized version with composite hypothesis,  as done in the proof of \cite[Proposition~2 and 3]{Yihong-Yang-2016} and \cite[Theorems~1 and~2]{Yanjun-2020-alphadiv} (see the arXiv version). Since the proofs are similar to those of the lower bounds in \cite{Yanjun-2020-alphadiv}, we will only provide a sketch highlighting the differences in our case. 

To prove \eqref{eq:lwrbndminmaxrisk}, consider the following discrete probability distributions supported on $\{1,2,\ldots,d\}$  with $\epsilon \in (0,0.5)$ and $b \geq 2$:
\begin{subequations}\label{eq:distlecam1}
\begin{align}
 & Q_1\mspace{-2 mu}=\mspace{-2 mu}\left(\mspace{-2 mu}\frac{1-\epsilon}{b(d-1)},\ldots,\frac{1-\epsilon}{b(d-1)}, 1-\frac{(1-\epsilon)}{b}\mspace{-2 mu} \right), ~~ Q_2\mspace{-2 mu}=\mspace{-2 mu}\left(\mspace{-2 mu}\frac{1+\epsilon}{b(d-1)},\ldots,\frac{1+\epsilon}{b(d-1)}, 1-\frac{(1+\epsilon)}{b} \mspace{-2 mu}\right),\\
 &\qquad \qquad \qquad \qquad \qquad \qquad \qquad  P=\left(\frac{1}{2(d-1)},\ldots,\frac{1}{2(d-1)}, \frac{1}{2} \right). 
\end{align}
\end{subequations}
Denoting the $i^{th}$ entry of $P,Q_1,Q_2$ by $p_i$, $q_{1,i}$ and $q_{2,i}$, respectively, we have
\begin{align}
    \max_{1 \leq i \leq d} \frac{p_i}{q_{1,i}} \vee \frac{p_i}{q_{2,i}} \vee \frac{q_{1,i}}{p_i} \vee \frac{q_{2,i}}{p_i} \leq b. \notag
\end{align}
Hence, $(P,Q_1), (P,Q_2) \in \cS_d(b)$ (when interpreted as quantum states diagonal in the same eigen-basis, say, the computational basis). 
Consequently, via an application of Le Cam's lemma as in the proof of \cite[Theorem 2]{Yanjun-2020-alphadiv} yields $R_n^{\star}(\cS_d(b)) \gtrsim b^{1/2}n^{-1/2}$, where $R_n^{\star}(\cS_d(b)) $ is the minimax risk  defined in \eqref{eq:minimaxrisk}. This yields the first term on the RHS of \eqref{eq:lwrbndminmaxrisk}. 

It remains to show that $R_n^{\star}(\cS_d(b)) \gtrsim d b/(n \log d)$. 
For this purpose, consider the following distributions\footnote{We note that the distributions in \eqref{eq:distlecam2} differs slightly from those used in the proof of lower bound in \cite[Theorem 2]{Yanjun-2020-alphadiv}, in that $\log d$ here is replaced by $\log n$ therein. This necessitates an additional condition $\log d \gtrsim \log n$ under which their lower bound holds. However, by using \eqref{eq:distlecam2}, it can be seen via essentially the same steps as in their proof that this constraint can be avoided.} supported on $\{1,2,\ldots,d\}$ similar to that in \cite[Theorem 2]{Yanjun-2020-alphadiv}:
\begin{subequations}\label{eq:distlecam2}
\begin{align}
 & P=\left(\frac{cb}{n \log d},\ldots,\frac{cb}{n \log d}, 1-\frac{c(d-1)b}{n \log d} \right),  \\
  & Q_1=\left(q_{1,i},\ldots,q_{1,d-1}, 1-\sum_{i=1}^{d-1}q_{1,i}\right), \quad Q_2=\left(q_{2,i},\ldots,q_{2,d-1}, 1-\sum_{i=1}^{d-1}q_{2,i}\right), 
\end{align}
\end{subequations}
where $c>0$ is chosen such that $P$ is a probability distribution (possible under the condition $db \lesssim n \log d$), and each of the first $d-1$ entries of  $Q_i$, $i=1,2$,  are chosen independently according to probability measure $\nu_i$ supported on $I=\left[c_1/(n \log d), c_2 \log d/n \right]$ satisfying $\EE_{\nu_1}[\log X]-\EE_{\nu_2}[\log X] \gtrsim d/(n\log d)$ and additional matching moment conditions as in \cite{Yihong-Yang-2016}.  The existence of such probability measures $\nu_1$ and $\nu_2$ has been shown in \cite{Yihong-Yang-2016}. Since $Q_1$ and $Q_2$ are randomly generated, they need not be  probability distributions, i.e., the entries need not sum to 1.  However, one may restrict to approximate probability distributions with strictly positive entries such that $\abs{\norm{q_i}_1-1} \leq 1+\epsilon$, with $\epsilon$ chosen appropriately and the constants $c,c_1,c_2$ adjusted, such that the probability distributions $Q_1',Q_2'$ obtained after normalization by sum of its entries satisfy (almost surely)
\begin{align}
    \max_{1 \leq i \leq d} \frac{p_i}{q'_{1,i}} \vee \frac{p_i}{q'_{2,i}} \vee \frac{q'_{1,i}}{p_i} \vee \frac{q'_{2,i}}{p_i} \leq b \vee \frac{(\log d)^2}{b} \leq b. \notag
\end{align} 
In the last inequality, we used that $b \geq  (\log d)^2$. The minimax risk can then be lower bounded by the minimax risk over the set of such approximate probability distributions under a Poisson sampling model up to an additive factor that vanishes as $\epsilon$ tends to zero. The minimax risk under the Poisson sampling model can in turn be further lower bounded  (see \cite[Proof of Lemma 2]{Yihong-Yang-2016}) as $\Omega\big(db/(n\log d)\big)$ via generalized Le Cam's method. Specifically, this can be done by taking the prior distributions used in generalized Le Cam's method to be the truncated measures obtained (via change of measure)  by restricting support to distributions $(P,Q_i')$ for $i=1,2$, such that $\abs{\sD_{\mathsf{M}}(P\Vert Q_1')-\sD_{\mathsf{M}}(P\Vert Q_2')}=\Omega\big(db/(n\log d)\big)$ for every  $(P,Q_1')$ and $(P,Q_2')$. This change of measure incurs an additive penalty on the lower bound that vanishes with an appropriately chosen large enough $\epsilon$ (that tends to zero at an appropriate rate). Finally, the total variation distance between the sufficient statistics (histogram) under these priors in the Poisson sampling model decays exponentially (see \cite[Lemma 3]{Yihong-Yang-2016}) due to the matching moments conditions under the priors $\nu_1$ and $\nu_2$, which can be used to obtain the desired lower bound via \cite[Lemma 2]{Yihong-Yang-2016}. To summarize, the above steps lead to  $R_n^{\star}(\cS_d(b)) \gtrsim d b/(n \log d)$ given that $b \geq (\log d)^2$ and $db \lesssim n \log d$, thus concluding the proof sketch for the case of measured relative entropy. 

\medskip
To prove  \eqref{eq:lwrbndminmaxrisk-ren}, we first observe that the minimax risk for estimating R\'{e}nyi divergence of order $\alpha$ can be lower bounded in terms of the minimax risk for estimating the classical $\alpha$-divergence, where $\alpha$-divergence (see \cite{Yanjun-2020-alphadiv}) between two discrete distributions $P$ and $Q$ supported on $\{1,\ldots,d\}$ is 
\begin{align}
A_{\alpha}(P\|Q)\coloneqq \frac{1}{\alpha (\alpha-1)} \left(-1+\sum_{i=1}^d p_i^{\alpha} q_i^{1-\alpha}\right). \notag
\end{align}
Note that 
\begin{align}
    A_{\alpha}(P\|Q)=\frac{e^{(\alpha-1)\sD_{\mathsf{M},\alpha}(P\Vert Q)}-1}{\alpha(\alpha-1)}. \notag 
\end{align}
Consider $\alpha >1$.  For $(P,Q) \in \cS_d(b)$, we have 
\begin{align}
   \sD_{\mathsf{M},\alpha}(P\Vert Q)\coloneqq \frac{1}{\alpha-1}\log \left(\sum_{i=1}^d p_i^{\alpha} q_i^{1-\alpha}\right) \leq \frac{1}{\alpha-1}\log \left(\sum_{i=1}^d q_i b^{\alpha-1}\right) \leq \log b.\notag
\end{align}
Hence, without loss of generality, it is sufficient to consider  estimators $\hat{\sD}_{n,\alpha}$  satisfying $\hat{\sD}_{n,\alpha} \leq \log b$. Given such an estimator $\hat{\sD}_{n,\alpha} $,  a natural estimator for $A_{\alpha}(P\|Q)$ is  
\begin{align}
    \hat{A}_{n,\alpha} =\frac{e^{(\alpha-1)\hat{\sD}_{n,\alpha}}-1}{\alpha(\alpha-1)}.\notag
\end{align}
We can upper bound the absolute error for $\hat{A}_{n,\alpha}$ in terms of $\hat{\sD}_{n,\alpha}$ as 
\begin{align}
    \abs{\hat{A}_{n,\alpha}-A_{\alpha}(P\|Q)}\leq \frac{\abs{e^{(\alpha-1)\hat{\sD}_{n,\alpha}}-e^{(\alpha-1)\sD_{\mathsf{M},\alpha}(P\Vert Q)}}}{\alpha(\alpha-1)}\leq \frac{b^{\alpha-1}\abs{\hat{\sD}_{n,\alpha}-\sD_{\mathsf{M},\alpha}(P\Vert Q)}}{\alpha}.\notag 
\end{align}
Hence
\begin{align}
   \abs{\hat{\sD}_{n,\alpha}-\sD_{\mathsf{M},\alpha}(P\Vert Q)} \geq \alpha b^{1-\alpha} \abs{\hat{A}_{n,\alpha}-A_{\alpha}(P\|Q)}.\notag 
\end{align}
Denoting  the minimax risk for estimating classical $\alpha$-divergence over the class $\cS_d^{(\mathsf{c})}(b)$ of distributions $(P,Q)$ satisfying $\max_{1 \leq i \leq d} (p_i/q_i)\vee (q_i/p_i) \leq b$ by $\bar R_{n,\alpha}^{\star}\big(\cS_d^{(\mathsf{c})}(b)\big)$, the above equation implies that
$R_{n,\alpha}^{\star}\big(\cS_d(b)\big) \geq \alpha b^{1-\alpha} \bar R_{n,\alpha}^{\star}\big(\cS_d^{(\mathsf{c})}(b)\big)$. Now, using Le Cam's two-point method and its generalized version with the distributions given in \eqref{eq:distlecam1} and \eqref{eq:distlecam2}, it can be shown using similar steps as in the proof of the lower bound in \cite[Theorem 1]{Yanjun-2020-alphadiv} that given $b \geq (\log d)^2 \vee 2$ and $db^{\alpha} \lesssim n \log d$:
\begin{align}
    \bar R_{n,\alpha}^{\star}\mspace{-4 mu}\left(\cS_d^{(\mathsf{c})}(b)\right) \gtrsim_{\alpha} \frac{b^{\alpha-\frac 12}}{\sqrt{n}}+\frac{db^{\alpha}}{n \log d}, \notag 
\end{align}
which yields the desired lower bound
\begin{align}
 R_{n,\alpha}^{\star}\big(\cS_d(b)\big) \gtrsim_{\alpha} \frac{\sqrt{b}}{\sqrt{n}}+\frac{db}{n \log d}. \notag 
\end{align}

\section{Proof of Proposition~\ref{Prop:densityopclass}}
\label{Prop:densityopclass-proof}
We first consider Part $(i)$. Since $\tilde{\cS}_{k,d,\alpha}(b,a,\varepsilon) \subseteq \cS_d(b)$ by definition, we only need to show that $\cS_d(b) \subseteq 
 \tilde{\cS}_{k,d,\alpha}(b,a,\varepsilon) $ for $k$, $a$, and $\varepsilon$ as in the statement of the proposition. We will only consider $\alpha=1$ as the proof is exactly same for other $\alpha$. 
Consider any $(\rho,\sigma) \in \cS_d(b)$ and let $ \Lambda^{\star}(\rho,\sigma)=\{\lambda_j^{\star}\}_{j=1}^d$ be any enumeration. By Lemma~\ref{Lem:bndtraceratio}, we have  $\abs{\lambda_j^{\star}} \leq \log  b $ for all $ j \in [1:d]$.
Let   
\begin{align}
    p(x)= \sum_{j=1}^d \lambda_j^{\star} \prod_{i \neq j} \frac{dx-i}{j-i}. \notag
\end{align}
Note that $p(i/d)=\lambda_i^{\star}$ and that $p(x)$ is a polynomial of degree $d-1$,  which can be written as
\begin{align}
  p(x)  &= \sum_{j=1}^d \tilde \lambda_j  \tilde p_j(x), \quad 
   \mbox{where} \quad   
  \tilde p_j(x) \coloneqq  \prod_{i \neq j} \left(x-\frac{i}{d}\right) \quad 
 \mbox{and} \quad \tilde \lambda_j \coloneqq \frac{ \lambda_j^{\star} d^{d-1}}{\prod_{i \neq j} (j-i)}. \notag
\end{align}
Further, observe that  $\tilde p_j(x)$ is a polynomial of the form $\sum_{l=0}^{d-1} \tilde a_{l,j} x^l$ with $\abs{\tilde a_{l,j}} \leq 1$ for all $j \in [1:d]$ and $l \in [0:d-1]$. Hence, $p(x)$ is a polynomial of the form $\sum_{l=0}^{d-1} a_l x^l$
with 
\begin{align}
\abs{a_l} \leq \sum_{j=1}^d \big|\tilde \lambda_j\big| =  \sum_{j=1}^d \frac{ \abs{\lambda_j^{\star}} d^{d-1}}{\prod_{i \neq j} \abs{j-i}} \stackrel{(a)}{\leq } \sum_{j=1}^d\frac{ d^{d-1} \log  b  }{\prod_{i \neq j} \abs{j-i}} \stackrel{(b)}{\leq } \frac{(3e)^d \log  b }{2\pi},\notag
\end{align}
where 
\begin{enumerate}[(a)]
    \item  used that $\abs{\lambda_j^{\star}} \leq \log  b $ for all $j \in [1:d]$;
    \item  follows because for all $j \in [1:d]$,  
\begin{align}
    \prod_{i \neq j} \abs{j-i} \geq \min_{1 \leq m \leq d} (m-1)! (d-m)!=\begin{cases}
        \left(\left(\frac{d-1}{2}\right)!\right)^2, & \quad \mbox{ if } d \mbox{ is odd},\\
        \left(\frac{d-2}{2}\right)!\left(\frac{d}{2}\right)! , & \quad \mbox{ if } d \mbox{ is even,}
    \end{cases} \geq 2 \pi  \left( \frac{d}{3e} \right)^d; \notag
\end{align}
  The last inequality can be verified to hold by computation for $1 \leq d \leq 4$, and for $d>4$, we used   
    $d-2 \geq 0.5 d$, $(d-1)/2 \geq (d/3)$, and the  DeMoivre--Stirling lower bound  for the factorial given by $n! > \sqrt{2 \pi n}\left(\frac{n}{e}\right)^n$.
\end{enumerate}   
This implies that $\cS_d(b) \subseteq 
 \tilde{\cS}_{k,d}(b,a,\varepsilon) $ for $k$, $a$ as in the statement of the proposition and $\varepsilon=0$. Since $\tilde{\cS}_{k,d}(b,a,\varepsilon) \subseteq \tilde{\cS}_{k,d}(b,a,\varepsilon') $ for $\varepsilon' \geq \varepsilon$,  the claim in Part $(i)$ follows.

 \medskip

 To prove  Part $(ii)$, let  $\Lambda^{\star}(\rho,\sigma)=\{\lambda_i^{\star}\}_{i=1}^d$ be an  (arbitrary) enumeration. We will consider approximation of a Lipschitz continuous function $f$ satisfying $f(i/d)=\lambda_i^{\star}$  by polynomials of bounded degree and bounded coefficients. Specifically, with $\lambda_0^{\star}=0$, let $\hat f\colon [0,1] \mapsto \left[-\log  b ,\log  b \right]$ be the piecewise linear function given by 
 $\hat f(x)=\lambda^{\star}_{i+1}(dx-i)-\lambda_i^{\star}(dx-i-1)$ for $i/d \leq x \leq (i+1)/d$ and $0 \leq i \leq d$. Note that $\hat f(i/d)=\lambda_i^{\star}$, $\big\|\hat f\big\|_{\infty,[0,1]} \leq \log  b $,  and the Lipschitz constant of $\hat f$ is upper bounded by $2d \log  b $ for $(\rho,\sigma) \in \cS_d(b)$. 
 We next consider Bernstein's approximation of $\hat f$ using  polynomials of degree $k$, i.e., using
 \begin{subequations}\label{eq:Bernsteinpolyapprox}
      \begin{align}
    & p_{k,\hat f}(x)= \sum_{m=0}^k B_{m,k}(x) \hat f\!\left(\frac{m}{k}\right), \quad x \in [0,1], \label{eq:bernsteinapprox}\\
     \mbox{where} \quad & 
     B_{m,k}(x)\coloneqq \binom{k}{m} x^m (1-x)^{k-m}. \label{eq:bernsteinmonom}
 \end{align}
 \end{subequations}
Let $\ell \in \NN \cup \{0\}$. By straightforward computation,   the $\ell^{th}$ derivative of $B_{m,k}(x)$ at $x=0$ is 
 \begin{align}
     B_{m,k}^{(\ell)}(0)=\frac{k !}{(k-\ell)!} \binom{\ell}{m} (-1)^{m+\ell}, ~\forall \ell \in \NN. \label{eq:derexpberstein} 
 \end{align}
 Further,  for any  $\ell$-times differentiable function $f$,
       \begin{align}
\norm{p_{k,f}^{(\ell)}}_{\infty,(0,1)} \leq \norm{f^{(\ell)}}_{\infty,(0,1)}. \label{eq:derbndbernstein}
 \end{align}
We will use the following fact that  bounds the  error of approximating a Lipschitz $f$ using its Bernstein's approximation. 
\begin{lemma}[Error in Bernstein's approximation, see e.g., \cite{Davis-1975}] \label{Lem:Bernstein-approxerr}
For all $f\colon [0,1] \mapsto \RR$ and $k \in \NN $, we have
\begin{align}
\norm{p_{k,f}-f}_{\infty,[0,1]} \leq \left( \norm{f}_{\mathrm{Lip},[0,1]}+0.5\norm{f}_{\infty,[0,1]}\right) k^{-\frac 13},\label{eq:approxbndbernstein}
\end{align}
where 
\begin{align}
    \norm{f}_{\mathrm{Lip},[0,1]}=\sup_{\substack{x,y \in [0,1] \\ x \neq y}} \frac{\abs{f(x)-f(y)}}{\abs{x-y}}. \notag
\end{align}
\end{lemma}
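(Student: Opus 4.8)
The plan is to derive the estimate directly from the defining properties of the Bernstein basis via a single bias--variance split, rather than invoking the modulus-of-continuity machinery. First I would record the three moment identities $\sum_{m=0}^k B_{m,k}(x)=1$, $\sum_{m=0}^k \tfrac{m}{k}B_{m,k}(x)=x$, and $\sum_{m=0}^k \big(\tfrac{m}{k}-x\big)^2 B_{m,k}(x)=\tfrac{x(1-x)}{k}$, which are just the normalization, mean, and variance of $\mathrm{Binomial}(k,x)/k$ and follow by differentiating $\sum_m \binom{k}{m}x^m(1-x)^{k-m}=1$ in $x$. Using the first identity, for each fixed $x \in [0,1]$ one writes $p_{k,f}(x)-f(x)=\sum_{m=0}^k B_{m,k}(x)\big(f(\tfrac{m}{k})-f(x)\big)$, so that
\[
\abs{p_{k,f}(x)-f(x)} \leq \sum_{m=0}^k B_{m,k}(x)\,\Big|f\big(\tfrac{m}{k}\big)-f(x)\Big|.
\]

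Next I would split the index set at a threshold $\delta>0$, to be chosen later as $\delta=k^{-1/3}$. For indices with $\abs{m/k-x}\leq\delta$ the Lipschitz property gives $\big|f(\tfrac{m}{k})-f(x)\big|\leq \norm{f}_{\mathrm{Lip},[0,1]}\,\delta$, and since $\sum_m B_{m,k}(x)=1$ this group contributes at most $\norm{f}_{\mathrm{Lip},[0,1]}\,\delta$. For indices with $\abs{m/k-x}>\delta$ I would bound the increment crudely by $2\norm{f}_{\infty,[0,1]}$ and control the remaining Bernstein mass by Chebyshev's inequality: $\sum_{m:\,\abs{m/k-x}>\delta}B_{m,k}(x)\leq \delta^{-2}\sum_m \big(\tfrac{m}{k}-x\big)^2 B_{m,k}(x)=\tfrac{x(1-x)}{k\delta^2}\leq \tfrac{1}{4k\delta^2}$, using $x(1-x)\leq \tfrac14$. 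Hence this group contributes at most $\tfrac{\norm{f}_{\infty,[0,1]}}{2k\delta^2}$, and combining the two pieces gives $\abs{p_{k,f}(x)-f(x)}\leq \norm{f}_{\mathrm{Lip},[0,1]}\,\delta+\tfrac{\norm{f}_{\infty,[0,1]}}{2k\delta^2}$ for every $x$ and every $\delta>0$.

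Setting $\delta=k^{-1/3}$ makes both terms scale as $k^{-1/3}$ and yields exactly $\abs{p_{k,f}(x)-f(x)}\leq \big(\norm{f}_{\mathrm{Lip},[0,1]}+\tfrac12\norm{f}_{\infty,[0,1]}\big)k^{-1/3}$; taking the supremum over $x\in[0,1]$ gives \eqref{eq:approxbndbernstein}. There is no genuine obstacle here, since the argument is elementary; the only points requiring care are the exact constants — one must use the sharp bound $x(1-x)\leq 1/4$ and the specific choice $\delta=k^{-1/3}$ (an optimized $\delta$ would improve the constant but spoil the clean form), and one should note that finiteness of $\norm{f}_{\mathrm{Lip},[0,1]}$ on the compact interval already forces $\norm{f}_{\infty,[0,1]}<\infty$, so the bound is meaningful precisely for Lipschitz $f$ and holds vacuously otherwise. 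Alternatively, one could simply cite the sharper Popoviciu--Lorentz estimate $\norm{p_{k,f}-f}_{\infty,[0,1]}\leq \tfrac54\,\omega\big(f,k^{-1/2}\big)$ in terms of the modulus of continuity $\omega$ and specialize to the Lipschitz case, but the self-contained split above already produces the stated inequality with its stated constant.
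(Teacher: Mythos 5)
Your proof is correct, and it is worth noting how it sits relative to the two routes in the paper. The paper's primary argument simply cites \cite[p.~111]{Davis-1975} for the intermediate bound $\norm{p_{k,f}-f}_{\infty,[0,1]} \leq \norm{f}_{\mathrm{Lip},[0,1]}\,\delta + \norm{f}_{\infty,[0,1]}/(2k\delta^2)$ and then substitutes $\delta=k^{-1/3}$; what you have done is supply a complete derivation of that cited inequality via the classical bias--variance split (Lipschitz bound on the indices with $\abs{m/k-x}\le\delta$, Chebyshev with the variance identity $\sum_m(m/k-x)^2B_{m,k}(x)=x(1-x)/k\le 1/(4k)$ on the rest), so your route recovers the exact constants $\norm{f}_{\mathrm{Lip},[0,1]}+0.5\norm{f}_{\infty,[0,1]}$ of the statement. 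The paper also includes a self-contained alternative proof, but that one replaces Chebyshev by the Berry--Esseen theorem applied to the binomial sum; because the Berry--Esseen error term degenerates near the endpoints, that proof must treat $x\in(0,k^{-1/3}]$ and $x\in[1-k^{-1/3},1)$ separately using a derivative bound on $p_{k,f}$, and it only reaches the weaker constants $2\norm{f}_{\mathrm{Lip},[0,1]}+6\norm{f}_{\infty,[0,1]}$. Your Chebyshev argument is therefore both more elementary and sharper than the paper's self-contained version, and it matches the cited Davis bound exactly; the only cosmetic remark is that, as you observe, one does not need to optimize over $\delta$ (which would give $\tfrac{3}{2}\norm{f}_{\mathrm{Lip}}^{2/3}(\norm{f}_{\infty}/2)^{1/3}k^{-1/3}$), since the fixed choice $\delta=k^{-1/3}$ already produces the stated form.
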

Lemma~\ref{Lem:Bernstein-approxerr}  follows by optimizing $\delta$ in the expression stated below as given in \cite[Page 111]{Davis-1975}:
\begin{align}
\norm{p_{k,f}-f}_{\infty,[0,1]} \leq \norm{f}_{\mathrm{Lip},[0,1]} \delta+ \frac{\norm{f}_{\infty,[0,1]}}{2k\delta^2}, ~\forall~ \delta \geq 0. \label{eq:bernsteinapprx}
\end{align}
An independent proof via an application of the Berry-Esseen theorem with the same approximation error rate in terms of $k$ (albeit with slightly larger constants) is provided below, which was developed as we were initially unaware of~\eqref{eq:bernsteinapprx}.

\medskip

Applying Lemma~\ref{Lem:Bernstein-approxerr} to $\hat f$ as above, we obtain that 
\begin{align}
    \norm{p_{k,\hat f}-\hat f}_{\infty,[0,1]} \leq \left(2d+0.5\right) (\log  b)  k^{-\frac 13}. \notag 
\end{align}
Moreover, the coefficient $a_{\ell}\big(p_{k,\hat f}\big)$ of $x^{\ell}$ in $p_{k,\hat f}$ is upper bounded by
\begin{align}
  \abs{a_{\ell}\big(p_{k,\hat f}\big)} =  \abs{p^{(\ell)}_{k,\hat f}(0)} \leq  \sum_{m=0}^k \big\|\hat f\big\|_{\infty,[0,1]} \abs{  B_{m,k}^{(\ell)}(0)} \leq \log  b  \sum_{m=0}^k \abs{  B_{m,k}^{(\ell)}(0)}, \notag 
\end{align}
which yields via~\eqref{eq:derexpberstein} that
\begin{align}
  \max_{\ell \in [0: k]}  \abs{a_{\ell}\big(p_{k,\hat f}\big)} \leq  2^k k! \log  b . \notag 
\end{align}
This implies by the definition of $\tilde{\cS}_{k,d}(b,a,\varepsilon)$ that Part $(ii)$ holds.
\subsection*{Alternative Proof of Lemma~\ref{Lem:Bernstein-approxerr} with  Different Constants} 
 Since the bound is trivial if $\norm{f}_{\infty,[0,1]}=\infty$ or $\norm{f}_{\mathrm{Lip},[0,1]}=\infty$, we may assume that $\norm{f}_{\infty,[0,1]} \vee \norm{f}_{\mathrm{Lip},[0,1]} <\infty $.
Recalling the definitions of $p_{k,f}$ and $B_{m,k}$ given in~\eqref{eq:Bernsteinpolyapprox},  note that $B_{m,k}(0)=0$ for $m \in [1:k]$ and $B_{0,k}(0)=1$. Similarly, $B_{m,k}(1)=0$ for $m \in [0:k-1]$ and $B_{k,k}(1)=1$ (by interpreting $0^0=1$).  Hence, $p_{k,f}(0)=f(0)$ and $p_{k,f}(1)=f(1)$. Consequently, $\abs{p_{k,f}(x)-f(x)}=0$ for $x \in \{0,1\}$. Next, consider any $x \in (0,1)$  and  $\epsilon>0$.  Then, we can bound the approximation error   as follows:
\begin{align}
  &  \abs{p_{k,f}(x)-f(x)} \notag \\
  &=  \abs{ \sum_{m=0}^k B_{m,k}(x) f\!\left(\frac{m}{k}\right)-f(x)} \notag \\
    & \stackrel{(a)}{=}  \sum_{m=0}^k  B_{m,k}(x) \abs{f\!\left(\frac{m}{k}\right)-f(x)} \notag \\
        & =  \sum_{\substack{0 \leq m \leq k:\\ \abs{x-\frac{m}{k}} \leq \epsilon   }}  B_{m,k}(x) \abs{f\!\left(\frac{m}{k}\right)-f(x)}+\sum_{\substack{0 \leq m \leq k:\\ \abs{x-\frac{m}{k}} > \epsilon   }}  B_{m,k}(x) \abs{f\!\left(\frac{m}{k}\right)-f(x)} \notag \\
        &\stackrel{(b)}{ \leq }   \epsilon \norm{f}_{\mathrm{Lip},[0,1]} \sum_{\substack{0 \leq m \leq k: \abs{x-\frac{m}{k}} \leq \epsilon   }}  B_{m,k}(x) +2 \norm{f}_{\infty,[0,1]} \sum_{\substack{0 \leq m \leq k: \abs{x-\frac{m}{k}} > \epsilon   }}  B_{m,k}(x) \notag \\
        &\stackrel{(c)}{ \leq }  \epsilon \norm{f}_{\mathrm{Lip},[0,1]} +2 \norm{f}_{\infty,[0,1]} \sum_{\substack{0 \leq m \leq k: \abs{x-\frac{m}{k}} > \epsilon   }}  B_{m,k}(x), \label{eq:errboundbern}  
\end{align}
where 
\begin{enumerate}[(a)]
    \item and $(c)$ are due to $B_{m,k}(x)\geq 0$ and $\sum_{m=0}^k  B_{m,k}(x)=1$ for all $x \in [0,1]$;
    \item follows  by the definition of $\norm{f}_{\mathrm{Lip},[0,1]}$ and $\norm{f}_{\infty,[0,1]}$, and $|y-z| \leq |y|+|z|$ for $y,z \in \RR$.
\end{enumerate}
To upper bound the summation in the right hand side of~\eqref{eq:errboundbern}, we use the Berry-Esseen theorem~\cite{Berry-41,Esseen-42} (with refined constants) applied to an i.i.d.~sum of Bernoulli random variables, $\{X_i\}_{i=1}^k$ with $\PP\!\left(X_i=1\right)=x$. This yields 
     \begin{align}
      \PP\left(  \abs{\sum_{i=1}^k X_i -kx}> y \sqrt{kx(1-x)}\right) \leq \sqrt{\frac{2}{\pi}} \int_{y}^{\infty} e^{-\frac{z^2}{2}}dz +\frac{\EE\big[\abs {X_1-x}^3 \big]}{2\sqrt{n} \EE \big[(X_1-x)^2\big]}, \notag
     \end{align}
     which in turn implies that
     \begin{align}
     \sum_{\substack{0 \leq m \leq k: \abs{x-\frac{m}{k}} > \epsilon   }}  B_{m,k}(x)&=\PP\left(  \abs{\sum_{i=1}^k X_i -kx}> k \epsilon\right)    \notag \\
     &\leq \sqrt{\frac{2}{\pi}} \int_{\frac{\epsilon \sqrt{k}}{\sqrt{x(1-x)}}}^{\infty} e^{-\frac{z^2}{2}}dz +\frac{\EE\big[\abs {X_1-x}^3 \big]}{2\sqrt{k} \big(\EE \big[(X_1-x)^2\big]\big)^{\frac 32}} \notag \\
     &\leq \sqrt{\frac{2x(1-x)}{k\pi}} \frac{1}{\epsilon} e^{-\frac{k \epsilon^2}{2x(1-x)}} +\frac{1-2x+2x^2}{2\sqrt{kx(1-x)}}.\notag 
     \end{align}
  Let $k \geq 2$.    For $x \in \big(k^{-1/3},1-k^{-1/3}\big)$ and $\epsilon \geq k^{-\frac 13}$, observe that the right hand side of the above equation is upper bounded by
     \begin{align}
         \frac{k^{\frac 16}}{\sqrt{2\pi }}e^{-2k^{\frac 13}}+2 k^{-\frac 13} \leq 3k^{-\frac 13}. \label{approxsum1}
     \end{align}
     On the other hand, for $x \in (0,k^{-1/3}] $, we have since $p_{k,f}(0)=f(0)$ that
     \begin{align}
    \abs{p_{k,f}(x)-f(x)} & = \abs{p_{k,f}(x)-p_{k,f}(0)+f(0)-f(x)} \notag \\
    &\leq \abs{p_{k,f}(x)-p_{k,f}(0)}+\abs{f(0)-f(x)} \notag \\
    &\leq \left( \norm{p_{k,f}^{(1)}}_{\infty,(0,1)}+\norm{f}_{\mathrm{Lip},[0,1]}  \right)k^{-1/3} \notag \\
    & \leq 2 \norm{f}_{\mathrm{Lip},[0,1]}  k^{-1/3},\label{approxsum2}
     \end{align}
  where in the final inequality, we used $\big\|p_{k,f}^{(1)}\big\|_{\infty,(0,1)} \leq \norm{f}_{\mathrm{Lip},[0,1]}$, which follows from~\eqref{eq:derbndbernstein} with $\ell=1$ and the fact that a $L$-Lipschitz continuous function can be approximated arbitrarily well by a first-order differentiable function with derivative bounded by $L$. 
Similarly, for $x \in  [1-k^{-1/3},1)$,~\eqref{approxsum2} holds.
From~\eqref{eq:errboundbern},~\eqref{approxsum1},~\eqref{approxsum2} and the above discussion, it follows that for all $x \in [0,1]$ and $k \geq 2$, we have
\begin{align}
    \norm{p_{k,f}-f}_{\infty,[0,1]} \leq \left( 2\norm{f}_{\mathrm{Lip},[0,1]}+6\norm{f}_{\infty,[0,1]}\right) k^{-\frac 13}. \notag  
\end{align}
When $k=1$, we obtain using~\eqref{eq:derbndbernstein} with $\ell=0$ that
    \begin{align}
    \norm{p_{k,f}-f}_{\infty,[0,1]}  \leq \norm{p_{k,f}}_{\infty,[0,1]} +\norm{f}_{\infty,[0,1]} \leq 2\norm{f}_{\infty,[0,1]}. \notag  
\end{align}
Hence,~\eqref{eq:approxbndbernstein} holds for all $k \in \NN$,
     thus completing the proof.

\section{Proof of Lemma~\ref{Lem:bndtraceratio}} \label{Sec:Lem:bndtraceratio-proof}
Consider the case $\alpha=1$. 
As shown in~\cite{Berta2015OnEntropies}, for $\rho,\sigma>0$, the optimizer (Hermitian operator) $H^{\star}(\rho,\sigma)$ of~\eqref{eq:rel_ent_variational} has the form 
    \begin{align}
&H^{\star}(\rho,\sigma)= \sum_{i=1}^d \lambda_i^{\star} P_i^{\star}, \quad \mbox{ with } \lambda_i^{\star} \coloneqq\log  \!\left(\frac{\operatorname{Tr}[P_i^{\star} \rho]}{\operatorname{Tr}[P_i^{\star} \sigma]}\right), \notag 
\end{align}
 where  $\{P_i^{\star}\}_{i=1}^d $ is some set of  orthogonal rank-one projectors, so that
\begin{align}
  & \sD_{\mathsf{M}}(\rho\Vert\sigma)= \operatorname{Tr}[H^{\star}(\rho,\sigma)\rho
]-\operatorname{Tr}[e^{H^{\star}(\rho,\sigma)}\sigma]+1 
= \sum_{i=1}^d \operatorname{Tr}[P_i^{\star} \rho] \log  \!\left(\frac{\operatorname{Tr}[P_i^{\star} \rho]}{\operatorname{Tr}[P_i^{\star} \sigma]}\right).\notag
\end{align}
Hence, the set of eigenvalues of $H^{\star}(\rho,\sigma)$ is
\begin{align}
\Lambda^{\star}(\rho,\sigma)=\{\lambda_i^{\star}\}_{i=1}^d= \left\{\log  \!\left(\frac{\operatorname{Tr}[P_i^{\star} \rho]}{\operatorname{Tr}[P_i^{\star} \sigma]}\right)\right\}_{i=1}^d. \notag 
\end{align}
To obtain an upper bound on $\abs{\lambda_i^{\star}}$,   we will use the variational characterization of max-relative entropy given in~\cite[Lemma A.4]{Mosonyi2015QuantumHT}) as stated below:
\begin{align}
    D_{\max}(\rho\|\sigma)\coloneqq \inf \left\{\lambda \in \RR: \rho \leq e^{\lambda}\sigma \right\}= \sup \left\{ \log\! \left(\frac{\operatorname{Tr}[M\rho]}{\operatorname{Tr}[M\sigma]}\right): 0 \leq M \leq I\right\}. \notag
\end{align}
From the last equation above, for  every projector $P$, we have
\begin{align}
 \frac{\operatorname{Tr}[\rho P]}{\operatorname{Tr}[\sigma P]} \leq   e^{D_{\max}(\rho\|\sigma)}=\inf \left\{e^{\lambda} : \sigma^{-\frac 12}\rho \sigma^{-\frac 12} \leq e^{\lambda}  I \right\} = \norm{\sigma^{-\frac 12}\rho \sigma^{-\frac 12}}. \notag
\end{align}
Similarly, 
\begin{align}
 \frac{\operatorname{Tr}[\sigma P]}{\operatorname{Tr}[\rho P]} \leq   e^{D_{\max}(\sigma\|\rho)}= \norm{\rho^{-\frac 12}\sigma \rho^{-\frac 12}}. \notag
\end{align}
Since $(\rho,\sigma) \in \cS_d(b)$, 
we have for every (rank-one) projector $P$ that     
\begin{align} 
\frac{1}{b}\leq \frac{1}{\norm{\rho^{-\frac{1}{2}} \sigma \rho^{-\frac{1}{2}}}} \leq  \frac{\operatorname{Tr}[\rho P]}{\operatorname{Tr}[\sigma P]} \leq \norm{\sigma^{-\frac{1}{2}}\rho \sigma^{-\frac{1}{2}}} \leq b. \notag 
\end{align}
Taking logarithms, we obtain the desired claim for $\alpha=1$. The claim for  $0< \alpha  \neq 1$ follows  similarly by noting that  $H_{\alpha}^{\star}(\rho,\sigma)= \sum_{i=1}^d \lambda_{i,\alpha}^{\star} P_{i,\alpha}^{\star}$ with $ \lambda_{i,\alpha}^{\star} \coloneqq\log  \!\left(\frac{\operatorname{Tr}[P_{i,\alpha}^{\star} \rho]}{\operatorname{Tr}[P_{i,\alpha}^{\star} \sigma]}\right)$.

\section{Solovay-Kitaev Theorem: Bound on Quantum Circuit Parameter Size}
\label{Sec:eff-imp-unit}
Let $\{P_i\}_{i=1}^d$ be an arbitrary set of orthogonal projectors on $\HH_d$ such that 
$\sum_{i=1}^d P_i=I$. Let $U$ be a unitary such that $P_i=U\ket{i}\!\bra{i}U^{\dag}$. 
Note that the determinant of $U$,  $\det(U)=e^{i \omega}$, for some $\omega \in \RR$. Hence, for $\tilde U=e^{-i \omega/d} U$, we have $\det(\tilde U)=1$. Moreover, $\tilde U\ket{i}\!\bra{i}\tilde U^{\dag}= U\ket{i}\!\bra{i} U^{\dag}= P_i$.  This implies that $U$ satisfying $P_i=U\ket{i}\!\bra{i}U^{\dag}$ can be restricted (without loss of generality) to belong to the  special unitary group $\mathsf{SU}(d)$, i.e., the subset of unitary operators on $\HH_d$ with determinant one.
Assume $d=2^N$ for some $N \in \NN$. Consider the universal gate set  $\mathrm{Univ}$ composed of the Clifford group, i.e., the Hadamard gate, $S$ gate, and CNOT gate, along with the $T$ gate. 
Then, it is known from~\cite{Ross-Selinger-2016} that, for every $U \in \mathsf{SU}(d)$ acting on $N$ qubits,  there exists a quantum circuit $\bar U $ constructed by composition of at most $\ell(d,\delta)=O\big(N^2 4^N \log  (N^2 4^N /\delta)\big)$ gates selected from $\mathrm{Univ}$  such that $\big\|U-\bar  U\big\| \leq \delta$. 
Since the number of possible configurations of quantum circuit composed of $\ell$ gates selected from $\mathrm{Univ}$ is $4^{\ell(d,\delta)}$, we obtain that  there exists $\cU(\delta,\cS)$ with $|\cU(\delta,\cS)| \leq 4^{\ell(d,\delta)}$ such that~\eqref{eq:unitaryapprox} is satisfied. Similar claims can also be obtained for qudits, where $d=m^N$ for some positive integers $m,N$, by using a generalization of the Solovay-Kitaev algorithm to qudits as given in \cite{Dawson-Nielsen-2006}.  
\section{Proof of Lemma~\ref{Lem:opnorm-diff}}\label{Sec:Lem:opnorm-diff-proof}
\vspace{-20 pt}
\begin{align}
    \norm{H_1-H_2} 
&= \norm{U_1 \Lambda_1 U_1^{\dag}-U_2 \Lambda_2 U_2^{\dag}} \notag \\
    &\stackrel{(a)}{\leq}   \norm{U_1 \Lambda_1 U_1^{\dag}-U_1 \Lambda_1 U_2^{\dag}}+\norm{U_1 \Lambda_1 U_2^{\dag}-U_1 \Lambda_2 U_2^{\dag}}+\norm{U_1 \Lambda_2 U_2^{\dag}-U_2 \Lambda_2 U_2^{\dag}} \notag \\
    & \stackrel{(b)}{\leq}  \norm{U_1} \norm{\Lambda_1}\norm{U_1^{\dag}- U_2^{\dag}}+\norm{U_1} \norm{\Lambda_1- \Lambda_2} \norm{U_2^{\dag}}+\norm{U_1-U_2} \norm{\Lambda_2} \norm{U_2^{\dag}} \notag \\
    &\stackrel{(c)}{=}  \big(\norm{\Lambda_1}+\norm{\Lambda_2}\big)\norm{U_1- U_2} +\norm{\Lambda_1- \Lambda_2}, \notag
\end{align}
where $(a)$ is via triangle inequality for Schatten norms, $(b)$ is due to sub-multiplicativity of Schatten norms, and $(c)$ is because $\norm{U}=\norm{U^{\dag}}=1$ for a unitary $U$.

\end{document}